\newtheorem{theorem}{Theorem}[section]
\newtheorem{proposition}[theorem]{Proposition}
\newtheorem{corollary}[theorem]{Corollary}
\newtheorem{lemma}[theorem]{Lemma}
\newtheorem{algorithm}[theorem]{Algorithm}
\newtheorem{definition}[theorem]{Definition}
\theoremstyle{definition}
\newtheorem{example}[theorem]{Example}
\newtheorem{remark}[theorem]{Remark}
\newcommand{\R}{\mathbb{R}} 
\newcommand{\1}{1
} 
\newcommand{\Proba}{\mathbb{P}}
\renewcommand{\P}{\mathbb{P}}
\newcommand{\E}{\mathbb{E}} 
\newcommand{\telque}{:}
\newcommand{\cond}{\: | \:}
\newcommand{\FDR}{\mbox{FDR}}
\newcommand{\FWER}{\mbox{FWER}}
\newcommand{\kFWER}{\mbox{$k$-FWER}}
\newcommand{\mtc}{\mathcal}
\newcommand{\mbf}{\mathbf}
\newcommand{\wt}[1]{{\widetilde{#1}}}
\newcommand{\wh}[1]{{\widehat{#1}}}
\newcommand{\ol}[1]{\overline{#1}}
\newcommand{\ind}[1]{{\mbf{1}\{#1\}}}
\newcommand{\cH}{{\mtc{H}}}
\newcommand{\cX}{{\mtc{X}}}
\newcommand{\cC}{{\mtc{C}}}
\newcommand{\cP}{{\mtc{P}}}
\newcommand{\G}{\widehat{\mathbb{G}}}
\newcommand{\FDP}{\mbox{FDP}}
\renewcommand{\l}{\ell}
\newcommand{\ER}{\mathcal{E}}
\def\1{1\kern-.20em {\rm l}}
\begin{document}

\begin{frontmatter}


\title{Type I error rate control for testing many hypotheses: a survey with proofs}

\runtitle{Type I error rate control in multiple testing}
 \alttitle{Une revue du contr\^ole de l'erreur de type I en test multiple}

\begin{aug}
    \author{%
      \fnms{Roquain} \snm{Etienne}%
      \thanksref{t1}%
      \contact[label=e1]{etienne.roquain@upmc.fr}%
    }%
 \affiliation[t1]{UPMC University of Paris 6, LPMA.\\ 
      \printcontact{e1}
    }

\runauthor{Etienne Roquain}
  \end{aug}

\begin{abstract}

This paper presents a survey on some recent advances for the type I error rate control in multiple testing methodology. 
We consider the problem of controlling the $k$-family-wise error rate (kFWER, probability to make $k$ false discoveries or more) and the false discovery proportion (FDP, proportion of false discoveries among the discoveries).  The FDP is controlled either via its expectation, which is the so-called false discovery rate (FDR), or via its upper-tail distribution function.
 We aim at deriving general and unified  results together with concise and simple mathematical proofs.
Furthermore, while this paper is mainly meant to be a survey paper, some new contributions for controlling the kFWER and the upper-tail distribution function of the FDP are provided. In particular, we derive a new procedure based on the quantiles of the binomial distribution that controls the FDP under independence.
 \end{abstract}
 
  \begin{AMSclass}
\kwd{62J15}
\kwd{62G10}
  \end{AMSclass}

\begin{keywords}
\kwd{multiple testing}\kwd{type I error rate}\kwd{false discovery proportion}\kwd{family-wise error}\kwd{step-up}\kwd{step-down}\kwd{positive dependence}
\end{keywords}

\tableofcontents
 
\end{frontmatter}

\section{Introduction}

The problem of testing several null hypotheses has a long history in the statistics literature. With the high-resolution techniques introduced in the recent years, it has known a renewed attention  in many application fields where 
one aims 
to find significant features among several thousands (or millions) of candidates. Classical examples are microarray analysis \cite{WW2007,DL2008,Efron2008,Efron2009}, neuro-imaging analysis \cite{BH2007,PNBL2005} and source detection \cite{Astro2001}. 
For illustration, we detail below the case of microarray data analysis.

\subsection{Multiple testing in microarray data}

In a typical microarray experiment, the level expressions of a set of genes are measured under two different experimental conditions and we aim at finding the genes that are differentially expressed between the two conditions. For instance, when the genes come from tumor cells in the first experimental condition,  while they come from healthy cells in the second, the differentially expressed genes may be involved in the development of this tumor and thus are genes of special interest.
Several techniques exist to perform a statistical test for a single gene, e.g. based on a distributional assumption or on permutations between the two group labels. However, the number of genes $m$ can be large (for instance several thousands), so that non-differentially expressed genes can have a high score of significance by chance. 
In that context, applying the naive, non-corrected procedure (level $\alpha$ for each gene) is unsuitable because it is likely to select (or ``discover") a lot of non-differentially expressed genes (usually called ``false discoveries''). 
For instance, if the $m=10,000$ genes are not differentially expressed (no signal) and $\alpha=0.1$, the non-corrected procedure  makes on average $m\alpha=1,000$ discoveries which are all false discoveries. 
In a more favorable situation where there are only $m_0=5,000$ non-differentially expressed genes among the $m=10,000$ initial genes ($50\%$ of signal), 
 the non-corrected procedure selects some genes, say $r$ genes, for which the expected number of errors is $m_0\alpha=500$. 
 Since the number of discoveries $r$ is not designed to be much larger than the number of false discoveries $m_0\alpha$, the final list of discovered genes is likely to contain an unacceptable part of errors.
 A multiple testing procedure
aims at correcting \textit{a priori} the level of the single tests in order to obtain a list of selected genes for which the ``quantity" of false discoveries is below a nominal level $\alpha$. The ``quantity" of false discoveries is measured by using \textit{global type I error rates}, as for instance the probability to make at least $k$ errors among the discoveries 
($k$-family-wise error rate, $\kFWER$) or the expected proportion of errors among the discoveries (false discovery rate, FDR). 
Finding procedures that control type I error rates is challenging and is what we called here the ``multiple testing issue". 
Furthermore, a feature that increases the complexity of this issue is the presence of dependencies between the single tests. 

Note that the multiple testing issue can be met in microarray analysis under other forms, as for instance when we search co-expressed genes or  genes associated with clinical covariates or outcomes, see Section~1.2 of \cite{DL2008}.

\subsection{Examples of multiple testing settings}\label{sec:exTM}

\begin{example}[Two-sample multiple $t$-tests]
The problem of finding differentially expressed genes in the above microarray example can be formalized as a particular case of a general two-sample multiple testing problem.
Let us observe a couple of two independent samples $$X=(X^1,...,X^n)=\big(Y^1,...,Y^{n_1},Z^1,...,Z^{n_2}\big)\in\R^{m \times n},$$  where $(Y^1,...,Y^{n_1})$ is a family of $n_1$ i.i.d. copies of a random vector $Y$ in $\R^m$ and $(Z^{1},...,Z^{n_2})$ is a family of $n_2$ i.i.d. copies of a random vector $Z$ in $\R^m$ (with $n_1+n_2=n$). In the context of microarray data, $Y^j_i$ (resp. $Z^j_i$), $1\leq i \leq m$, corresponds to the expression level measure of the $i$-th gene for the $j$-th individual of the first (resp. second) experimental condition. 
Typically, the sample size is much smaller than the number of tests, that is, $n\ll m$.
Let the distribution $P$ of the observation $X$ belong to a statistical model given by a 
distribution set $\cP$. 
Assume that $\cP$ is such that $X$ is an integrable random vector and let  $\mu_{i,1}(P)=\E Y_i$ and $\mu_{i,2}(P)=\E Z_i$, for any $i\in\{1,...,m\}$. 
The aim is to decide for all $i$ whether $P$ belongs to the set $\Theta_{0,i}=\{P\in\cP\telque \mu_{i,1}(P)=\mu_{i,2}(P)\}$ or not, that is, we aim at testing the hypothesis
\begin{center}
$H_{0,i}:$ ``$\mu_{i,1}(P)=\mu_{i,2}(P)$" against $H_{1,i}:$ $``\mu_{i,1}(P)\neq \mu_{i,2}(P)"$,
\end{center}
 simultaneously for all $i\in\{1,...,m\}$. 
 Given $P$, the null hypothesis $H_{0,i}$ (sometimes called  the ``null" for short) is said to be true (for $P$) if $P\in \Theta_{0,i}$, that is, if $P$ satisfies  $H_{0,i}$. It is said false (for $P$) otherwise. The index set corresponding to true nulls is denoted by $\cH_0(P)=\{1\leq i\leq m\telque  \mu_{i,1}(P)= \mu_{i,2}(P)\}$. Its complement in $\cH=\{1,...,m\}$ is denoted by $\cH_1(P)$. In the microarray context, $\cH_1(P)=\{1\leq i\leq m\telque  \mu_{i,1}(P)\neq \mu_{i,2}(P)\}$ is thus the index set corresponding to differentially expressed genes. The aim of a multiple testing procedure is thus to recover the (unobservable) set $\cH_1(P)$ given the observation $X$.
 A multiple testing procedure is commonly based on individual test statistics, by rejecting the null hypotheses with a ``large" test statistic. 
Here, the individual test statistic can be the (two-sided) two-sample t-statistic $S_i(X)\propto |\ol{Y}_i-\ol{Z}_i|$, rescaled by the so-called ``pooled" standard deviation.
To provide a uniform normalization for all tests, it is convenient to transform the $S_i(X)$ into the \textit{$p$-value} 
\begin{equation}\label{pvalueform}
p_i(X)=\sup_{P\in\Theta_{0,i}} T_{P,i}(S_i(X)),
\end{equation}
where $T_{P,i}(s)=\P_{X\sim P}(S_i(X)\geq s)$ is the upper-tail distribution function of $S_i(X)$ for $X\sim P\in\Theta_{0,i}$.
Classically, assuming that $Y_i$ and $Z_i$ are Gaussian variables with the same variance, we have for any $P\in\Theta_{0,i}$, $T_{P,i}(s)=2\P(Z\geq s)$, where $Z$ follows a Student distribution with $n-2$ degrees of freedom. In that case, each $p$-value $p_i(X)$  has the property to be uniformly distributed on $(0,1)$ when the corresponding null hypothesis $H_{0,i}$ is true.
Without making this Gaussian assumption, $p$-values can still be built, as we discuss in Remark~\ref{rem:pvalues} below.
Let us finally note that since the $T_{P,i}$ are decreasing, a multiple testing procedure should reject nulls with a ``small" $p$-value. \end{example}

\begin{example}[One-sided testing on the mean of a Gaussian vector]\label{sec:vectorpos}
To give a further illustrating example, we consider the very convenient mathematical framework for multiple testing where we observe a Gaussian vector $X=(X_i)_{1\leq i \leq m} \sim P$, having an unknown mean $\mu(P)=(\mu_i(P))_{1\leq i \leq m}\in\R^m$ and a $m\times m$ covariance  matrix $\Sigma(P)$ with diagonal entries equal to $1$. Let us consider the problem of testing
\begin{center}
$H_{0,i}:$ ``$\mu_{i}(P)\leq 0$" against $H_{1,i}:$ $``\mu_{i}(P)>0"$,
\end{center}
simultaneously for all $i\in\{1,...,m\}$. We can define the $p$-values $p_i=\ol{\Phi}(X_i)$, where $\ol{\Phi}(x)=\P(Z\geq x)$ for $Z\sim\mathcal{N}(0,1)$. Any $p$-value satisfies the following stochastic domination under the null: if $\mu_{i}(P)\leq 0$, we have for all $u\in[0,1]$,
$$
\P(p_i(X)\leq u) \leq \P (\ol{\Phi}(X_i-\mu_i(P))\leq u)=u.
$$
Additionally, more or less restrictive assumptions on $\Sigma(P)$ can be considered to model
different types of dependency of the corresponding $p$-values.
For instance, we can assume that $\Sigma(P)$ has only non-negative entries, that the non-diagonal entries of $\Sigma(P)$ are equal (equi-correlation) or that $\Sigma(P)$ is diagonal.
Finally, the value of the alternative means can be used for modeling  the ``strength of the signal". For instance, to model that the sample size available for each test is $n$, we can set $\mu_i(P)=\tau \sqrt{n}$ for each $\mu_{i}(P)>0$,  where $\tau>0$ is some additional parameter.
\end{example}

\begin{remark}[General construction of $p$-values]\label{rem:pvalues}
In broad generality, when testing the nulls $\Theta_{0,i}$ by rejecting for ``large" values of a test statistic $S_i(X)$, we can always define the associated $p$-values by using \eqref{pvalueform}.
 It is well known that these $p$-values 
 are always stochastically lower-bounded by a uniform variable under the null, that is, $\forall i \in\cH_0(P)$, $\forall u\in[0,1], \: \P(p_i(X)\leq u)\leq u$. This property always holds, even when $S_i(X)$ has a discrete distribution. 
For completeness, we provide this result with a proof in Appendix~\ref{sec:statpvalue}. 
However, the calculation of the $p$-values \eqref{pvalueform} is not always possible, because it requires the knowledge of the distribution of the test statistics under the null, which often relies on strong distributional assumptions on the data. 
Fortunately, in some situations, the $p$-values \eqref{pvalueform} can be approximated by using a randomization technique. The
resulting $p$-values can be shown to enjoy the same stochastic dominance as above (see, e.g., \cite{RW2005} for a recent reference).
For instance, in the two-sample testing problem, permutations of the group labels can be used, which corresponds to use permutation tests (the latter can be traced back to Fisher \cite{Fish1935}).
\end{remark}

\subsection{General multiple testing setting}\label{sec:framework}

In this section, we provide the abstract framework in which multiple testing theory can be investigated in broad generality. 

Let us consider a statistical model, defined by a measurable space $(\cX,\mathfrak{X})$ endowed with a subset $\mathcal{P}$ of distributions on $(\cX,\mathfrak{X})$. Let $X$ denote the observation of the model, with distribution 
$P\in \mathcal{P}$. 
Consider a family $(\Theta_{0,i})_{1\leq i \leq m}$ of $m\geq 2$ subsets  of $\mathcal{P}$. Based on  $X$, we aim at testing the null hypotheses $H_{0,i} : ``P \in \Theta_{0,i}"$ against the alternative  $H_{1,i} : ``P \in \Theta_{0,i}^c"$ simultaneously for all $i \in\{1,...,m\}$. For any $P\in \mathcal{P}$, let $\cH_0(P)=\{1\leq i \leq m\telque P\in\Theta_{0,i}\}$ be the set of the indexes $i$ for which $P$ satisfies $H_{0,i}$, that is, the indexes corresponding to true null hypotheses. 
Its cardinality $|\cH_0(P)|$ is denoted by $m_0(P)$.
Similarly, the set $\{1,...,m\}$  is sometimes denoted by $\cH$.
The set of the false null hypotheses is denoted by $\cH_1(P)=\cH\backslash\cH_0(P)$.
The goal is to recover the set $\cH_1(P)$ based on $X$, that is, to find the null hypotheses that are true/false based on the knowledge of $X$. 
Obviously, the distribution $P$ of $X$ is unknown, and thus so is $\cH_1(P)$. 

The standard multiple testing setting includes the knowledge of 
$p$-values $(p_i(X))_{1\leq i \leq m}$ satisfying 
\begin{equation}\label{equ_defpvalue} \forall P\in\mathcal{P},  \forall i\in\cH_0(P),\:\:\: \forall u\in[0,1], \: \P(p_i(X)\leq u)\leq u.\end{equation}
As a consequence, for each $i\in\{1,...,m\}$, rejecting $H_{0,i}$ whenever $p_i(X)\leq \alpha$ defines a test of level $\alpha$.
As we have discussed in the previous section, property \eqref{equ_defpvalue} can be fulfilled in many situations.
Also, 
in some cases, \eqref{equ_defpvalue} holds with equality, that is,  the $p_i(X)$ are exactly distributed like a uniform variable in $(0,1)$ when $H_{0,i}$ is true. 

\subsection{Multiple testing procedures}

In the remainder of the paper, we use the observation $X$ only through the $p$-value family $\mbf{p}(X)=\{p_i(X),1\leq i \leq m\}$. Therefore, for short, we often drop the dependence in $X$ in the notation and define all quantities as functions of $\mbf{p}=\{p_i,1\leq i \leq m\} \in[0,1]^m$. However, one should keep in mind that the underlying distribution $P$ (the distribution of interest on which the tests are performed) is the distribution of $X$ and not the one of $\mbf{p}$.

A \textit{multiple testing procedure} is defined as a set-valued function $$R:q=(q_i)_{1\leq i \leq m}\in [0,1]^m \longmapsto R(q)\subset \{1,...,m\},$$ taking as input an element of $[0,1]^m$ and returning a subset of  $\{1,...,m\}$. For such a general procedure $R$, we add the technical assumption that  for each $ i\in\{ 1,..., m\}$, the mapping $x\in\mathcal{X}\mapsto\ind{i\in R(\mbf{p}(x))}$ is measurable. The indexes selected by $R(\mbf{p})$ correspond to the rejected null hypotheses, that is, $i\in R(\mbf{p}) \Leftrightarrow ``H_{0,i}$ is rejected by the procedure $R(\mbf{p})$". 
Thus, for each $p$-value family $\mbf{p}$, there are $2^m$ possible outcomes for $R(\mbf{p})$. 
Nevertheless, according to the  stochastic dominance property \eqref{equ_defpvalue} of the $p$-values, a natural rejection region for each $H_{0,i}$ is of the form  $p_i\leq t_i$, for some $t_i\in[0,1]$. 
In this paper, we mainly focus on the case where the threshold is the same for all $p$-values. The corresponding procedures, called \textit{thresholding based procedures}, are   of the form $R(\mbf{p})=\{1\leq i\leq m \telque p_i\leq t(\mbf{p})\}$, where the threshold $t(\cdot)\in[0,1]$ can depend on the data. 
  
 \begin{example}[Bonferroni procedure]
The Bonferroni procedure (of level $\alpha\in(0,1)$) rejects the hypotheses with a $p$-value smaller than $\alpha/m$. Hence, with our notation, it corresponds to the procedure $R(\mbf{p})=\{1\leq i\leq m\telque p_i\leq \alpha/m\}$. 
\end{example}

\subsection{Type I error rates}\label{sec:ER}

To evaluate the quality of a multiple testing procedure, various error rates have been proposed in the literature. According to the Neyman-Pearson approach, type I error rates are of primary interest.
These rates evaluate the importance of the null hypotheses wrongly rejected, that is, of the elements of the set $R(\mbf{p})\cap \cH_0(P)$. 
Nowadays, the most widely used type I error rates are the following. 
For a given procedure $R$,
\begin{itemize}
\item the \textit{$k$-family-wise error rate} ($\kFWER$) (see e.g. \citep{HT1987,RW2005,LR2005}) is defined as the probability that the procedure $R$ makes at least $k$ false rejections: for all $P \in \mathcal{P},$
\begin{align}
 \kFWER(R,P)= \P(|R(\mbf{p})\cap \cH_0(P)|\geq k)\label{def-kFWER},
 \end{align}
 where $k\in\{1,..., m\}$ is a pre-specified  parameter.
 In the particular case where $k=1$, this rate is simply called the \textit{family-wise error rate} and is denoted by FWER$(R,P)$.
\item the \textit{false discovery proportion} (FDP) (see e.g. \citep{See1968,BH1995,LR2005}) is defined as the proportion of errors in the set of the rejected hypotheses: for all $P \in \mathcal{P},$
\begin{align}
 \FDP(R(\mbf{p}),P)= \frac{|R(\mbf{p})\cap \cH_0(P)|}{|R(\mbf{p})|\vee 1}\label{def-FDP},
 \end{align}
where $|R(\mbf{p})|\vee 1$ denotes the maximum of $|R(\mbf{p})|$ and $1$. The role of the term ``$\vee 1$" in the denominator is to prevent from dividing by zero when $R$ makes no rejection.
Since the FDP is a random variable, it does not define an error rate. However, the following error rates can be derived from the FDP. First, the $\gamma$-upper-tail distribution of the FDP, defined as the probability that the FDP exceeds a given $\gamma$, that is, for all $P \in \mathcal{P},$
\begin{align}
\P( \FDP(R(\mbf{p}),P)>\gamma) \label{def-FDPtail},
 \end{align}
where $\gamma\in(0,1)$ is a pre-specified  parameter.
Second, the false discovery rate (FDR) \cite{BH1995}, defined as the expectation of the FDP: for all $ P \in \mathcal{P},$
\begin{align}
 \FDR(R,P)= \E [ \FDP(R(\mbf{p}),P)]=\E\bigg[\frac{|R(\mbf{p})\cap \cH_0(P)|}{|R(\mbf{p})|\vee 1}\bigg]\label{def-FDR}.
 \end{align}
\end{itemize}
Note that the probability in \eqref{def-FDPtail} is upper-bounded by a nominal level $\alpha\in(0,1)$ if and only if the $(1-\alpha)$-quantile of the FDP distribution is upper-bounded by $\gamma$. For instance, if the probability in \eqref{def-FDPtail} is upper-bounded by $\alpha=1/2$, this means that the median of the FDP is upper-bounded by $\gamma$. With some abuse, bounding the probability in \eqref{def-FDPtail}
is called ``controlling the FDP" from now on.

The choice of the 
type I error rate depends on the context. 
When controlling the $\kFWER$, we tolerate a fixed number $(k-1)$ of erroneous rejections. By contrast, a procedure controlling  \eqref{def-FDPtail}   tolerates a small proportion $\gamma$ of errors among the final rejections (from an intuitive point of view, it chooses  $k\simeq \gamma |R|$). This allows to increase the number of erroneous rejections as the number of rejections becomes large. 
Next, controlling the FDR has become popular because it is a simple error rate based on the FDP and because it came together with the simple Benjamini-Hochberg FDR controlling procedure \cite{BH1995} (some dependency structure assumptions are required, see Section~\ref{sec:FDR}). 
As a counterpart, controlling the FDR does not prevent the FDP from having large variations, so that any FDR control does not  necessarily have a clear interpretation in terms of the FDP (see the related discussion in Section~\ref{sec:discussFDP}).

\addtocounter{theorem}{-1}
 \begin{example}[Continued]
The Bonferroni procedure $R(\mbf{p})=\{1\leq i\leq m\telque p_i\leq \alpha/m\}$ satisfies the following:
$$
\E |R(\mbf{p})\cap \cH_0(P)|=\sum_{i\in\cH_0(P)} \P(p_i\leq \alpha/m)\leq \alpha m_0(P)/m\leq \alpha,
$$ 
which means that its expected number of false discoveries is below $\alpha$. Using Markov's inequality, this implies that $R(\mbf{p})$ makes no false discovery with probability at least $1-\alpha$, that is, for any $P\in\cP$, $\FWER(R,P)\leq \alpha$. This is the most classical example of type I error rate control.  
\end{example}

\begin{remark}[Case where $\cH_0(P)=\cH$]
\label{rem:weakcontrol}
For a distribution $P$ satisfying $\cH_0(P)=\cH$, that is when all null hypotheses are true, the FDP reduces to $\FDP(R(\mbf{p}),P)= \ind{|R(\mbf{p})|>0}$ and we have $\FWER(R,P)= \FDR(R,P)=\P( \FDP(R(\mbf{p}),P)>\gamma)=\P(|R(\mbf{p})|>0)$. Controlling the FWER (or equivalently the FDR) in this situation is sometimes called a ``weak" FWER control. 
\end{remark}

\begin{remark}[Case where all null hypotheses are equal:  $p$-value aggregation]
\label{rem:allnullthesame}
The general framework described in Section~\ref{sec:framework} includes the case where all null hypotheses are identical, that is, $\Theta_{0,i}=\Theta_0$ for all $i\in\{1,...,m\}$.  In this situation, all $p$-values test the same null $H_0:$ ``$P\in\Theta_0$" against some alternatives contained in $\Theta_0^c$. For instance, in the model selection framework of \cite{BHL2003,DR2006,VV2010}, each $p$-value is built with respect to a specific model contained in the alternative $\Theta_0^c$.
Since we have in that case $\cH_0(P)=\cH$ if $P\in \Theta_0$ and $\cH_0(P)=\emptyset$ otherwise, the three quantities $\FWER(R,P)$, $\FDR(R,P)$ and $\P( \FDP(R(\mbf{p}),P)>\gamma)$ are equal and take the value $\P(|R(\mbf{p})|>0)$ when $P\in \Theta_0$ and $0$ otherwise.
As a consequence, in the case where all null hypotheses are equal, controlling the FWER, the FDR or the FDP at level $\alpha$ is equivalent to the problem of combining $p$-values to build a \textit{single testing} for $H_0$ which is of level $\alpha$. 
In particular, from a procedure $R$ that controls the FWER at level $\alpha$ we can derive a single testing procedure of level $\alpha$ by rejecting $H_0$ whenever $R(\mbf{p})$ is not empty (that is, whenever $R(\mbf{p})$ rejects at least one hypothesis). This provides a way to aggregate $p$-values into one (single) test for $H_0$ which is ensured to be of level $\alpha$. 
As an illustration, the FWER controlling Bonferroni procedure $R=\{1\leq i \leq m \telque p_i\leq \alpha/m\}$ corresponds to the single test rejecting $H_0$ whenever $\min_{1\leq i\leq m}\{ p_{i}\}\leq \alpha/m$.  The Bonferroni combination of individual tests is well known and extensively used for adaptive testing (see, e.g., \cite{Spo1996,BHL2003,VV2010}). 
Some other examples of $p$-value aggregations will be presented further on, see Remark~\ref{rem:bul}. 
\end{remark}

\subsection{Goal}\label{sec:goal}

Let $\alpha\in(0,1)$ be a pre-specified nominal level (to be fixed once and for all throughout the paper).
The goal is to control the type I error rates defined above at level $\alpha$, for a large subset of distributions $\cP' \subset \cP$. 
That is, by taking one of the above error rate $\ER(R,P)$, we aim at finding a procedure $R$ such that 
\begin{align}
\forall P \in \mathcal{P}', \:\: \ER(R,P) \leq \alpha\label{def-FDRcontrol},
 \end{align}
for $\cP'\subset \cP$ as large as possible. Obviously, $R$ should depend on $\alpha$ but we omit this in the notation for short.
Similarly to the single testing case, taking $R=\emptyset$ will always ensure \eqref{def-FDRcontrol} with $\cP'=\cP$. This means that the type I error rate control is inseparable from the problem of maximizing the power. 
The probably most natural way to extend the notion of power from the single testing to the multiple testing setting is to consider the expected number of correct rejections, that is, $\E |\cH_1(P)\cap R|$. 
Throughout  the paper, we often encounter the case where two procedures $R$ and $R'$ satisfy $R'\subset R$ (almost surely) while they both ensure the control \eqref{def-FDRcontrol}. Then, the procedure $R$ is said \textit{less conservative} than $R'$. Obviously, this implies that $R$ is more powerful than $R'$.
This can be the case when, e.g., $R$ and $R'$ are thresholding-based procedures using respective thresholds $t$ and $t'$ satisfying $t\geq t'$ (almost surely). %
As a consequence, our goal is to find a procedure $R$ satisfying  \eqref{def-FDRcontrol} with a rejection set as large as possible.

Finally, let us emphasize that, in this paper, we aim at controlling \eqref{def-FDRcontrol} for any fixed $m\geq 2$ and not only when $m$ tends to infinity. That is, the setting is non-asymptotic in the parameter $m$.

\subsection{Overview of the paper}\label{sec:present}

The remainder of the paper is organized as follows: 
in Section~2, we present some general tools and concepts that are useful throughout the paper.
Section~\ref{sec:FDR}, \ref{sec:kFWER} and \ref{sec:FDP} present  FDR, $\kFWER$ and FDP controlling methodology, respectively,  where we try to give a large overview of classical methods in the literature. Besides, the paper is meant to have a scholarly form, accessible to a possibly non-specialist reader. In particular, all results are given together with a proof, which we aim to be as short and meaningful as possible. 

Furthermore, while this paper is mostly intended to be a review paper, some new contributions with respect to the existing multiple testing literature are given in 
Section~\ref{sec:kFWER} and \ref{sec:FDP}, by extending the results of  \cite{GS2010} for the $\kFWER$ control and the results of \cite{RW2007} for the FDP control, respectively.

\subsection{Quantile-binomial procedure }

In section~\ref{sec:FDP}, 
we introduce a novel procedure, called the \textit{quantile-binomial procedure} that  controls the FDP under independence of the $p$-values. This procedure can be defined as follows;

\begin{algorithm}[Quantile-binomial procedure]\label{algo-quant-binom}
Let for any $t\in[0,1]$ and for any $\l\in\{1,...,m\}$,
\begin{equation}\label{equ-quantilebinom}
q_\l(t) =\mbox{ the $(1-\alpha)$-quantile of $\mathcal{B}(m -\l+ \lfloor\gamma ({\l}-1)\rfloor + 1 , t)$},
\end{equation}
where $\mathcal{B}(\cdot,\cdot)$ denotes the binomial distribution and $ \lfloor\gamma ({\l}-1)\rfloor$ denotes the largest integer $n$ such that $n\leq \gamma ({\l}-1)$. 
Let $ p_{(1)}\leq ... \leq p_{(m)}$ be the order statistics of the $p$-values.
Then apply the following recursion:
\begin{itemize}
\item[\textbullet] Step $1$: if $q_1(p_{(1)})> \gamma$, stop and reject no hypothesis. Otherwise, go to step $2$;
\item[\textbullet] Step $\l\in\{2,...,m\}$: if $q_\l(p_{(\l)})> \gamma \l$, stop and reject the hypotheses corresponding to $p_{(1)}$, $\dots$, $p_{(\l-1)}$. Otherwise, go to step $\l+1$;
\item[\textbullet] Step $\l=m+1$, stop and reject all hypotheses. 
 \end{itemize}
 \end{algorithm}
 Equivalently, the above procedure can be defined as rejecting $H_{0,i}$ whenever $$\max_{p_{(\l)}\leq p_i} \{q_\l(p_{(\l)})/\l\}\leq \gamma.$$
 The rationale behind this algorithm is that at step $\l$, when rejecting the $\l$ null hypotheses corresponding to the $p$-values smaller than $p_{(\l)}$, the number of false discoveries behaves as if it was stochastically dominated by a binomial variable of parameter $(m -\l+ \lfloor\gamma ({\l}-1)\rfloor + 1, p_{(\l)})$.
Hence, by controlling the $(1-\alpha)$-quantile of the latter binomial variable at level $\gamma \l$, the $(1-\alpha)$-quantile of the FDP should be controlled by $\gamma$. The rigorous proof of the corresponding FDP control is given in Section~\ref{sec:FDP}, see Corollary~\ref{cor-LR2005-improved}. Finally, when controlling the median of the FDP, this procedure is related to the recent adaptive procedure of \cite{GBS2009}, as discussed in Section~\ref{sec:binom-BH}.

\section{Key concepts and tools} 

\subsection{Model assumptions}\label{sec:model}

Throughout this paper, we will consider several models. Each model corresponds to a specific assumption on the $p$-value family $\mbf{p}=\{p_i,1\leq i \leq m\}$ distribution. The first model, called the ``independent model" is defined as follows:
\begin{align}
\mathcal{P}^I=\:&\big\{ P \in \cP \telque 
(p_i(X))_{i\in\cH_0(P)} \mbox{ is a family of mutually independent}\nonumber\\
&\mbox{ variables and $(p_i(X))_{i\in\cH_0(P)}$ is independent of } (p_i(X))_{i\in\cH_1(P)}\big\}\label{modelindep}.
\end{align}
The second model uses a particular notion of positive dependence between the $p$-values, called ``weak positive regression dependency" (in short, ``weak PRDS"), which is a slightly weaker version of the PRDS assumption of \cite{BY2001}. 
To introduce the weak PRDS property, let us define a 
subset $D\subset[0,1]^m$ as \textit{nondecreasing} if for all $q,q' \in [0,1]^m$
such that $\forall i\in\{1,...,m\}$, $q_i\leq q_i'$, we have $q' \in D$ when $q\in D$. 
\begin{definition}[Weak PRDS $p$-value family]\label{def:weakPRDS}
The family $\mbf{p}$ is said to be  \textit{weak PRDS on $\cH_0(P)$} if 
for any $i_0\in\cH_0(P)$ and for any measurable nondecreasing set $D\subset[0,1]^m$\,, the function 
$ u \mapsto \Proba(\mbf{p} \in D\cond p_{i_0}\leq u)$
is nondecreasing on the set $\{u\in[0,1]\telque \P(p_{i_0}\leq u)>0\}.$
\end{definition}
The only difference between the weak PRDS assumption and the ``regular" PRDS assumption defined in \cite{BY2001} is that the latter assumes ``$ u \mapsto \Proba(\mbf{p} \in D\cond p_{i_0}=u)$  nondecreasing", instead of ``$ u \mapsto \Proba(\mbf{p} \in D\cond p_{i_0}\leq u)$ nondecreasing". Weak PRDS is a weaker assumption, as shown for instance in the proof of Proposition~3.6 in \cite{BR2008}. We can now define the second model, where the $p$-values have weak PRDS dependency: 
\begin{align}
\mathcal{P}^{pos}=\:&\big\{ P \in \cP \telque 
\mbf{p}(X) \mbox{ is weak PRDS on $\cH_0(P)$} \big\}\label{modeposdep}.
\end{align}
It is not difficult to see that $\cP^I\subset \cP^{pos}$ because when $P\in \cP^I$,  $p_{i_0}$ is independent of $(p_i)_{i\neq i_0}$
 for any $i_0\in\cH_0(P)$. Furthermore, we refer to the general case of $P\in\cP$ (without any additional restriction) as the  ``arbitrary dependence case". 

As an illustration, in the one-sided Gaussian testing framework of Example~\ref{sec:vectorpos}, the PRDS assumption (regular and thus also weak) is satisfied as soon as the covariance matrix $\Sigma(P)$ has nonnegative entries, as shown in \cite{BY2001} (note that this is not true anymore for two-sided tests, as proved in the latter reference).

\subsection{Dirac configurations}\label{sec:DU}

If we want to check whether a procedure satisfies a type I error rate control \eqref{def-FDRcontrol}, particularly simple $p$-value distributions (or  ``configurations") are as follows:
\begin{itemize}
\item[-]  ``Dirac configurations":  the $p$-values of $\cH_1(P)$ are equal to zero (without any assumption on the $p$-values of $\cH_0(P)$);
\item[-]  ``Dirac-uniform configuration" (see \cite{FDR2007}):  the Dirac configuration for which the variables 
$(p_i)_{i \in \cH_0(P)}$ are i.i.d. uniform. 
\end{itemize}
These configurations can be seen as the asymptotic $p$-value family distribution where the sample size available to perform each test tends to infinity, while the number $m$ of tests is kept fixed (see the examples of Section~\ref{sec:exTM}).
This situation does not fall into the classical multiple testing framework where the number of tests is much larger than the sample size. Besides, there is no multiple testing problem in these configurations because the true nulls are  perfectly separated from the false null (almost surely). 
However,  these special configurations are still interesting, because they sometimes have the property to be the distributions for which the type I error rate is the largest. In that case, they are called the ``least favorable configurations" (see \cite{FDR2007}). This generally requires that the multiple testing procedure and the error rate under consideration have special monotonic properties (see \cite{FDR2009,RV2010}). In this case, proving the type I error rate control for the Dirac configurations is sufficient to state \eqref{def-FDRcontrol} and thus appears to be very useful.

\subsection{Algorithms}\label{sec:genmethod}

To derive \eqref{def-FDRcontrol}, a generic method that emerged from the multiple testing literature is as follows:
\begin{enumerate}
\item  start with a family $(R_\kappa)_\kappa$ of procedures depending on an external parameter $\kappa$;
\item  find a set of values of $\kappa$ for which $R_\kappa$ satisfies \eqref{def-FDRcontrol};
\item  take among these values the $\kappa$ that makes $R_\kappa$ the ``largest".
\end{enumerate}
The latter is designed to maintain the control of the type I error rate while maximizing the rejection set.
As we will see in Section~\ref{sec:FDR} ($\kappa$ is a threshold $t$), Section~\ref{sec:kFWER} ($\kappa$ is a subset $\cC$ of $\cH$) and Section~\ref{sec:FDP} ($\kappa$ is a rejection number $\l$), this gives rise to the so-called ``step-up" and ``step-down" algorithms, which are very classical instances of type I error rate controlling procedures.  

\subsection{Adaptive control}

A way to increase the power of type I error rate controlling procedures is to learn (from the data) part of the unknown distribution $P$ in order to make more rejections. 
This approach is called ``adaptive type I error rate control".
Since the resulting procedure uses the data twice, the main challenge is often to show that it maintains the type I error control \eqref{def-FDRcontrol}. In this paper, we will discuss adaptivity with respect to the parameter $m_0(P)=|\cH_0(P)|$ for the $\FDR$ in Section~\ref{sec:ALSU}. The procedures presented in Section~\ref{sec:kFWER} (resp. Section~\ref{sec:FDP}) for controlling the $\kFWER$ (resp. FDP) will be also adaptive 
to $m_0(P)$,  but in a maybe more implicit way. Some of them will be additionally adaptive with respect to the dependency structure between the $p$-values. Let us finally note that some other work studied the adaptivity to the alternative distributions of the $p$-values (see \cite{WR2006,RDV2006,RW2009}).

\section{FDR control}\label{sec:FDR}

After the seminal work of Benjamini and Hochberg \cite{BH1995}, many studies have investigated the FDR controlling issue. We provide in this section a survey of some of these approaches. 

\subsection{Thresholding based procedures}

Let us start from thresholding type multiple-testing procedures $$R_t=\{1\leq i\leq m\telque p_i\leq t(\mbf{p})\},$$ with a threshold ${t}(\cdot)\in [0,1]$ possibly depending on the $p$-values. We want to find $t$ such that the corresponding multiple testing procedure $R_t$ controls the FDR at level $\alpha$ under the model $\cP^{pos}$, by following the general method explained in Section~\ref{sec:genmethod}.
We start with the following simple decomposition of the false discovery rate of $R_t$: 
\begin{equation}
\label{FDR-form1}
\FDR(R_t,P)=\alpha m^{-1} \sum_{i\in\cH_0(P)} \E\bigg[\frac{\ind{p_i\leq t(\mbf{p})}}{\alpha\: \G(\mbf{p},t(\mbf{p})) \vee (\alpha/m)}\bigg],
\end{equation}
where  $\G(\mathbf{p},u)=m^{-1} \sum_{i=1}^m \ind{p_i\leq  u}$ denotes the empirical c.d.f. of the $p$-value family $\mbf{p}=\{p_i,1\leq i \leq m\}$ taken at a threshold $u\in[0,1]$. 

In order to upper-bound the expectation in the RHS of \eqref{FDR-form1}, let us consider the following informal reasoning: if $t$ and $\G$ were deterministic, this expectation would be smaller than $t/(\alpha\: \G(\mbf{p},t))$ and thus smaller than $1$ by taking a threshold $t$ such that $t\leq \alpha\: \G(\mbf{p},t)$. This motivates the introduction of the following set of thresholds: 
\begin{equation}
\label{SCset}
\mathcal{T}(\mbf{p}) =\{ u \in [0,1] \telque \wh{\mathbb{G}}( \mbf{p},u) \geq u/ \alpha\}. 
\end{equation}
With different notation, the latter was introduced in \cite{BR2008,FDR2009}. Here, any threshold $t\in \mathcal{T}(\mbf{p})$ is said ``self-consistent" because it corresponds to a procedure $R_t=\{1\leq i\leq m\telque p_i\leq t\}$ which is ``self-consistent" according to the definition given in \cite{BR2008}, that is, $R_t\subset \{1\leq i\leq m\telque p_i\leq \alpha |R_t|/m\}$.
It is important to note that the set $\mathcal{T}(\mbf{p})$ only depends on the $p$-value family (and on $\alpha$) so that self-consistent thresholds can be easily chosen in practice. As an illustration, we depict the set $\mathcal{T}(\mbf{p})$ in Figure~\ref{fig_SC}  for a particular realization of the $p$-value family.

 \begin{figure}[htbp]
\begin{center}
\includegraphics[scale=0.3,angle=-90]{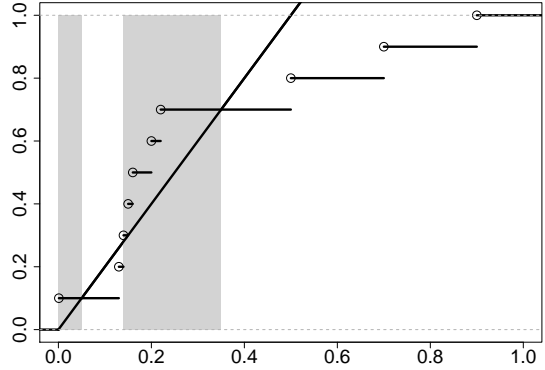}
\caption{The $p$-value e.c.d.f $\wh{\mathbb{G}}( \mbf{p},u)$ and $u/\alpha$ are plotted as functions of $u\in[0,1]$. The points $u$ belonging to the set $\mathcal{T}(\mbf{p})$ lie on the X-axis of the gray area. $m=10$; $\alpha=0.5$. 
}
\label{fig_SC}
\end{center}
\end{figure}

Now, let us choose a self-consistent threshold $t(\mbf{p})\in \mathcal{T}(\mbf{p})$. By using the decomposition \eqref{FDR-form1}, we obtain the following upper-bound: 
\begin{equation}
\label{FDR-form2}
\FDR(R_t,P)\leq \alpha m^{-1} \sum_{i\in\cH_0(P)} \E\bigg[\frac{\ind{p_i\leq t(\mbf{p})}}{t(\mbf{p}) \vee (\alpha/m) }\bigg]\leq \alpha m^{-1} \sum_{i\in\cH_0(P)} \E\bigg[\frac{\ind{p_i\leq t(\mbf{p})}}{t(\mbf{p}) }\bigg],
\end{equation}
with the convention $\frac{0}{0}=0$. Since by \eqref{equ_defpvalue},  we have $p_i(x)>0$ for $P$-almost every $x$ when $i\in\cH_0(P)$, the denominator inside the expectation of the RHS of \eqref{FDR-form2}  can only be zero when the numerator is also zero and therefore when the ratio is zero. Next, the following purely probabilistic lemma holds (see a proof in Appendix~A of \cite{BR2008} for instance):  
\begin{lemma}\label{lemmaDC}
Let $U$ be a nonnegative random variable which is stochastically lower bounded by a uniform distribution, i.e., $\P(U\leq u)\leq u$ for any $u\in[0,1]$. Then the following inequality holds:
\begin{equation}
\label{conditionDC}
 \E\bigg[\frac{\ind{U\leq V}}{V}\bigg]\leq 1\,,
\end{equation}
for any nonnegative random variable $V$ satisfying either of the two following conditions:
\begin{itemize}
\item[(i)] $V=g(U)$ where $g:\mathbb{R}^+ \rightarrow \mathbb{R}^+$ is non-increasing,
\item[(ii)] the conditional distribution of $V$ conditionally on $U\leq u$ is stochastically decreasing in $u$, that is, 
$\forall v\geq 0$, $u\mapsto \P(V<v \cond U\leq u)$ is nondecreasing on $\{u\in[0,1]\telque \P(U\leq u)>0\}$. 
\end{itemize}
\end{lemma}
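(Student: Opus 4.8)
The plan is to treat the two cases separately, but with a common reduction: in both cases I want to condition on $U$ (or on events $\{U\le u\}$) and exploit the monotonicity of $V$ together with the stochastic-domination hypothesis $\P(U\le u)\le u$. The key elementary fact I will use repeatedly is that for a nonnegative random variable $U$ that is stochastically lower bounded by the uniform, one has $\E[\ind{U\le t}\,h(U)]\le \int_0^t h(u)\,du$ whenever $h$ is nonnegative and nonincreasing — this follows by writing $\ind{U\le t}h(U) = \int_0^\infty \ind{U\le t}\ind{h(U)\ge s}\,ds$, noting $\{h(U)\ge s\}=\{U\le h^{-1}(s)\}$ up to endpoints, and using $\P(U\le \min(t,h^{-1}(s)))\le \min(t,h^{-1}(s))$, then undoing the layer-cake on Lebesgue measure. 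So the first step is to record this sublemma (or cite it as folklore / Appendix~A of \cite{BR2008}).

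For case (i), $V=g(U)$ with $g$ nonincreasing, so $\ind{U\le V}/V = \ind{U\le g(U)}/g(U)$ is a deterministic function of $U$. The set $\{u\ge 0\telque u\le g(u)\}$ is an interval of the form $[0,u^\ast]$ (or $[0,u^\ast)$) by monotonicity of $g$ and of the identity, and on this set $u\mapsto 1/g(u)$ is nonincreasing. Applying the sublemma with $h(u)=\ind{u\le g(u)}/g(u)$ — which is nonnegative and nonincreasing — gives $\E[\ind{U\le V}/V]\le \int_0^{u^\ast} g(u)^{-1}\,du \le \int_0^{u^\ast} (u^\ast)^{-1}\cdot\ind{\text{...}}$; more simply, on $[0,u^\ast]$ we have $g(u)\ge u$, hence $1/g(u)\le 1/u$ fails to integrate, so instead I bound $\int_0^{u^\ast} g(u)^{-1}du \le \int_0^{u^\ast}\,dv$ after the change of variables $v=g(u)$... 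The cleanest route: since $u\le g(u)$ on $[0,u^\ast]$ and $g$ is nonincreasing, $g$ admits a nonincreasing ``inverse'' and the integral $\int_0^{u^\ast} g(u)^{-1}\,du$ is at most $1$ by a direct area argument (the region $\{(u,y)\telque 0\le u\le u^\ast,\ 0\le y\le g(u)^{-1}\}$ sits inside $\{(u,y)\telque uy\le 1\}\cap\{u\le u^\ast\}$ and one checks its area is $\le 1$). I will carry out this area computation explicitly since it is short.

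For case (ii), I condition on the events $\{U\le u\}$. Write, via layer-cake in the $V$-variable,
\begin{equation}
\E\bigg[\frac{\ind{U\le V}}{V}\bigg]
= \int_0^\infty \frac{1}{v}\,\P(U\le v,\ V\ge v)\,\frac{dv}{1}
\end{equation}
— more carefully, $\E[\ind{U\le V}V^{-1}] = \E\big[\ind{U\le V}\int_V^\infty v^{-2}dv\big] = \int_0^\infty v^{-2}\,\P(U\le v,\ V\ge v)\,dv$ by Fubini. Now $\P(U\le v, V\ge v) = \P(U\le v)\,\P(V\ge v\cond U\le v)$, and here is where assumption (ii) enters: $\P(V\ge v\cond U\le u) = 1-\P(V<v\cond U\le u)$ is nonincreasing in $u$, hence $\P(V\ge v\cond U\le v)\le \P(V\ge v\cond U\le u)$ for every $u\ge v$... that is the wrong direction; I actually want to compare to $u$ small. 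The correct manipulation is to integrate by parts / use that $v\mapsto \P(V\ge v\cond U\le v)$ interacts with $\P(U\le v)\le v$ so that the product $\P(U\le v)\P(V\ge v\cond U\le v)$ is controlled; the standard trick (as in \cite{BR2008}) is to dominate by replacing the conditioning level, using monotonicity in (ii) to push the conditioning event to $\{U\le v\}$ and then bound $\P(U\le v)\le v$ to kill the $v^{-2}$ down to $v^{-1}$, leaving $\int \P(V\ge v\cond U\le v)\,v^{-1}\,dv$, which telescopes to $\le 1$ after recognizing it as $\E[\text{something}]$ again.

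\emph{Main obstacle.} The delicate point is case (ii): making the conditioning argument rigorous without circularity, i.e. correctly exploiting ``$u\mapsto\P(V<v\cond U\le u)$ nondecreasing'' to reduce to the deterministic-like bound. One must be careful that $V$ is \emph{not} assumed independent of $U$, so the naive factorization fails; the monotonicity hypothesis is exactly the substitute, and getting the inequality to point the right way (so that $\P(U\le u)\le u$ can be applied at the correct scale) is the crux. I expect to handle this by the Fubini/layer-cake rewriting above followed by a careful application of hypothesis (ii) and the sublemma, and I will simply refer the reader to Appendix~A of \cite{BR2008} for the routine measure-theoretic bookkeeping if it becomes lengthy.
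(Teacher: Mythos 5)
The paper does not actually prove this lemma (it defers to Appendix~A of \cite{BR2008}), so the only question is whether your argument stands on its own; as written, it does not close in either case.

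For (i), the function $h(u)=\ind{u\le g(u)}/g(u)$ is \emph{nondecreasing} on the interval $\{u : u\le g(u)\}=[0,u^\ast]$ (since $g$ is nonincreasing, $1/g$ is nondecreasing there) and then drops to zero, so it is not nonincreasing and your sublemma does not apply to it. Worse, the intermediate bound $\E[h(U)]\le\int_0^{u^\ast}g(u)^{-1}\,du$ is false in general: take $g=1/2$ on $[0,1/4)$ and $g=3/10$ on $[1/4,\infty)$, and $U$ with mass $0.3$ at the point $0.3$ and mass $0.7$ at $2$; then $\E[h(U)]=1$ while $\int_0^{0.3}g^{-1}=2/3$. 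Your area argument also fails, since $\{(u,y): uy\le 1,\ u\le u^\ast\}$ has infinite area. The correct one-line repair is pointwise: on $[0,u^\ast]$ one has $g(u)\ge g(u^\ast)\ge u^\ast$, hence $\ind{U\le g(U)}/g(U)\le \ind{U\le u^\ast}/u^\ast$ and $\E[\,\cdot\,]\le \P(U\le u^\ast)/u^\ast\le 1$ (the case $u^\ast=0$ being trivial since $\P(U=0)=0$).

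For (ii), the Fubini identity should read $\E[\ind{U\le V}/V]=\int_0^\infty v^{-2}\,\P(U\le V,\ V\le v)\,dv$; your version has $\{V\ge v\}$ in place of $\{V\le v\}$ and silently replaces $\{U\le V\}$ by $\{U\le v\}$, neither of which is an identity. More seriously, the quantity you propose to reduce to, $\int_0^\infty v^{-1}\P(V\ge v\mid U\le v)\,dv$, is typically infinite (take $V\equiv 1$ and $U$ uniform), so it cannot ``telescope to $\le 1$''; dropping the joint event is exactly the lossy step. The crux you flag must be resolved by keeping the \emph{difference} structure: for $V$ with values on a grid $v_1<\dots<v_n$ (then pass to the limit), write $\frac{1}{v_k}\P(U\le v_k, V=v_k)=\frac{\P(U\le v_k)}{v_k}\bigl[\P(V\ge v_k\mid U\le v_k)-\P(V\ge v_{k+1}\mid U\le v_k)\bigr]\le \P(V\ge v_k\mid U\le v_k)-\P(V\ge v_{k+1}\mid U\le v_k)$, which uses both $\P(U\le v_k)\le v_k$ and the nonnegativity of the bracket; then hypothesis (ii) gives $\P(V\ge v_{k+1}\mid U\le v_k)\ge \P(V\ge v_{k+1}\mid U\le v_{k+1})$, so the sum over $k$ telescopes to at most $1$. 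Deferring this to ``routine measure-theoretic bookkeeping'' conceals the actual content of the proof, which is precisely this Abel-summation step.
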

A consequence of the previous lemma in combination with \eqref{FDR-form2} is that the FDR is controlled at level $\alpha m_0(P)/m$ as soon as $V=t(\mbf{p})$ satisfies (ii) with $U=p_i$. For the latter to be true, we should make the distributional assumption $ P \in \mathcal{P}^{pos}$ and  add the assumption that the threshold ${t}(\cdot)$ is non-increasing with respect to each $p$-value, that is, for all $q,q' \in [0,1]^m$, we have $t(q)\leq t(q')$ as soon as for all $1\leq i\leq m$, $q'_i\leq q_i$.
By using the latter, we easily check that the set $$D=\{q\in[0,1]^m\telque {t}(q)<v\}$$ is a nondecreasing measurable set of $[0,1]^m$, for any $v\geq 0$. Thus, the weak PRDS condition defined in Section~\ref{sec:model} provides (ii) with $U=p_i$ and $V=t(\mbf{p})$ and thus also \eqref{conditionDC}. 
Summing up, we obtained the following result, which appeared in \cite{BR2008}:
\begin{theorem}
\label{thFDR1}
Consider a thresholding type multiple testing procedure $R_t$ based on a threshold $t(\cdot)$ satisfying the two following conditions:
\begin{itemize}
\item[-]  $t(\cdot)$ is self-consistent, i.e., such that for all $q \in [0,1]^m$, $t(q)\in \mathcal{T}(q)$ (where $\mathcal{T}(\cdot)$ is defined by \eqref{SCset}) 
\item[-]  $t(\cdot)$ is coordinate-wise non-increasing, i.e., satisfying that for all $q,q' \in [0,1]^m$ with $q'_i\leq q_i$ for all $1\leq i\leq m$, we have $t(q)\leq t(q')$. 
\end{itemize}
Then, for any $P \in \mathcal{P}^{pos}$, $\FDR(R_t,P)\leq \alpha m_0(P)/m\leq \alpha$.
\end{theorem}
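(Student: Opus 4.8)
The plan is to make rigorous the informal reasoning already laid out before the statement, chaining together the decomposition \eqref{FDR-form1}, the self-consistency inequality, and Lemma~\ref{lemmaDC}.

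First I would verify \eqref{FDR-form1}. Since $R_t$ is thresholding-based, $|R_t(\mbf{p})|=m\,\G(\mbf{p},t(\mbf{p}))$, so $|R_t(\mbf{p})|\vee 1 = \frac{m}{\alpha}\big(\alpha\,\G(\mbf{p},t(\mbf{p}))\vee(\alpha/m)\big)$, and writing $|R_t(\mbf{p})\cap\cH_0(P)|=\sum_{i\in\cH_0(P)}\ind{p_i\leq t(\mbf{p})}$ and using linearity of expectation in \eqref{def-FDR} yields \eqref{FDR-form1}. Then, invoking self-consistency, $t(\mbf{p})\in\mathcal{T}(\mbf{p})$ means $\alpha\,\G(\mbf{p},t(\mbf{p}))\geq t(\mbf{p})$, so the denominator $\alpha\,\G(\mbf{p},t(\mbf{p}))\vee(\alpha/m)$ dominates $t(\mbf{p})\vee(\alpha/m)$, which dominates $t(\mbf{p})$; since the numerator is nonnegative this gives the two inequalities in \eqref{FDR-form2}. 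Here I would spend a line on the division-by-zero bookkeeping: by \eqref{equ_defpvalue}, for $i\in\cH_0(P)$ one has $p_i>0$ for $P$-almost every $x$, so on the event $\{t(\mbf{p})=0\}$ the indicator $\ind{p_i\leq t(\mbf{p})}$ vanishes almost surely, and with the convention $\frac{0}{0}=0$ the ratios in \eqref{FDR-form1}--\eqref{FDR-form2} are everywhere well defined.

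Next I would apply Lemma~\ref{lemmaDC} with $U=p_i$ and $V=t(\mbf{p})$ for each fixed $i\in\cH_0(P)$. By \eqref{equ_defpvalue}, $U=p_i$ is stochastically lower bounded by a uniform variable, so it remains to check condition (ii) of the lemma for $V=t(\mbf{p})$. This is exactly where the two hypotheses of the theorem combine: coordinate-wise non-increasingness of $t(\cdot)$ makes $D=\{q\in[0,1]^m\telque t(q)<v\}$ a nondecreasing measurable subset of $[0,1]^m$ for every $v\geq 0$ (if $t(q)<v$ and $q'_j\geq q_j$ for all $j$, then $t(q')\leq t(q)<v$), and then weak PRDS of $\mbf{p}$ on $\cH_0(P)$ applied with $i_0=i$ (Definition~\ref{def:weakPRDS}) states precisely that $u\mapsto\P(\mbf{p}\in D\cond p_i\leq u)=\P(t(\mbf{p})<v\cond p_i\leq u)$ is nondecreasing on $\{u\telque\P(p_i\leq u)>0\}$, which is condition (ii). Lemma~\ref{lemmaDC} then gives that each expectation in \eqref{FDR-form2} is at most $1$.

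Summing over the $m_0(P)=|\cH_0(P)|$ indices of $\cH_0(P)$ in \eqref{FDR-form2} finishes the argument: $\FDR(R_t,P)\leq \alpha m^{-1}m_0(P)\leq\alpha$. I expect the only genuinely delicate point to be the measure-theoretic care around vanishing denominators (justifying that the ratios are well defined and that passing from $t(\mbf{p})\vee(\alpha/m)$ to $t(\mbf{p})$ in \eqref{FDR-form2} is legitimate); everything else is a mechanical concatenation of the decomposition, the self-consistency inequality, and Lemma~\ref{lemmaDC}, with the weak PRDS hypothesis supplying hypothesis (ii) of that lemma.
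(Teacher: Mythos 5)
Your proposal is correct and follows essentially the same route as the paper's own argument: the decomposition \eqref{FDR-form1}, the self-consistency bound leading to \eqref{FDR-form2} (with the same almost-sure handling of the $0/0$ case via $p_i>0$ for $i\in\cH_0(P)$), and Lemma~\ref{lemmaDC}~(ii) verified through the nondecreasing set $D=\{q\in[0,1]^m\telque t(q)<v\}$ and the weak PRDS property. The only addition you make is spelling out the elementary verification of \eqref{FDR-form1} itself, which the paper states without derivation; nothing is missing.
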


\begin{remark}
If we want to state the FDR control of Theorem~\ref{thFDR1} only for $P \in \mathcal{P}^{I}$ without using the PRDS property, we can use Lemma~\ref{lemmaDC} (i)  \textit{conditionally on $\mbf{p}_{-i}=(p_j,j\neq i)\in[0,1]^{m-1}$}, by taking $V=t(U,\mbf{p}_{-i})$ and  $U=p_i$, because $p_i$ is independent of $\mbf{p}_{-i}$ when $P \in \mathcal{P}^{I}$.
\end{remark}

\subsection{Linear step-up procedures}\label{sec:LSU}

From Theorem~\ref{thFDR1}, under the weak PRDS assumption on the $p$-value dependence structure, any algorithm giving as output a self-consistent and non-increasing threshold $t(\cdot)$ leads to a correct FDR control. 
As explained in Section~\ref{sec:goal} and Section~\ref{sec:genmethod}, for the same FDR control we want to get a procedure with a rejection set as large as possible. Hence, it is natural to choose the following threshold:
\begin{align}
\label{equ_SU}
t^{su}(\mbf{p})&=\max\{ \mathcal{T}(\mbf{p}) \} \\
&=\max \{ u \in \{\alpha k /m, 0\leq k \leq m\} \telque \wh{\mathbb{G}}( \mbf{p},u) \geq u/ \alpha\}\nonumber\\
&=\alpha /m \:\times \max \{  0\leq k \leq m \telque p_{(k)} \leq  \alpha k/m\},\label{equ_SU-BHform}
\end{align}
where $p_{(1)}\leq ... \leq p_{(m)}$ ($p_{(0)}=0$) denote the order statistics of the $p$-value family.
This choice was made in \cite{BH1995} and is usually called \textit{linear step-up} or ``Benjamini-Hochberg" thresholding. 
One should notice that the maximum in \eqref{equ_SU} exists because the set $\mathcal{T}(\mbf{p})$ contains $0$, is upper-bounded by $1$ and because the e.c.d.f. is a non-decreasing function (the right-continuity is not needed). It is also easy to check that the maximum $u=t^{su}(\mbf{p})$ satisfies the equality $\wh{\mathbb{G}}( \mbf{p},u) = u/ \alpha$, so that $t^{su}(\mbf{p})$ can be seen as the largest crossing point between between $u\mapsto \wh{\mathbb{G}}( \mbf{p},u)$ and $u\mapsto u/ \alpha$, 
see the left-side of Figure~\ref{fig_LSU}. The latter equality also implies that $t^{su}(\mbf{p}) \in \{\alpha k /m, 0\leq k \leq m\}$, which, combined with the so-called switching relation $$m\:\wh{\mathbb{G}}( \mbf{p},\alpha k/m) \geq k \Longleftrightarrow p_{(k)}\leq \alpha k/m,$$ gives rise to the second formulation \eqref{equ_SU-BHform}. The latter is illustrated in the right-side of Figure~\ref{fig_LSU}. 
The formulation \eqref{equ_SU-BHform} corresponds to the original expression of \cite{BH1995} while \eqref{equ_SU} is to be found for instance in \cite{GW2002}. 
Moreover, it is worth noticing that the procedure $R_{t^{su}}$ using the thresholding $t^{su}(\mbf{p})$ is also equal to  $\{1\leq i \leq m\telque p_i\leq t^{su}(\mbf{p})\vee \alpha /m\}$, so that it can be interpreted as an intermediate thresholding between the non-corrected procedure using $t=\alpha$  and the Bonferroni procedure using $t=\alpha/m$.

 \begin{figure}[htbp]
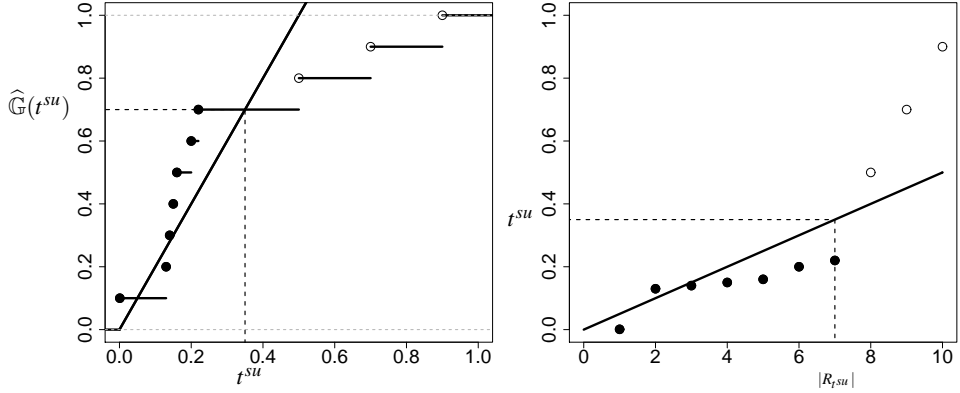

\begin{center}
\includegraphics[scale=0.3,angle=-90]{fig-cdf.ps}
\includegraphics[scale=0.3,angle=-90]{fig-pvalue.ps}
\begin{pspicture}(0,0)(0,0)
\rput[B](-9.8,-5.7){$t^{su}$}
\rput[B](-6.2,-3.6){$t^{su}$}
\rput[B](-12.6,-2.1){{$\wh{\mathbb{G}}( t^{su} )$}}
\rput[B](-2,-5.7){\tiny{$|R_{t^{su}}|$}}
\end{pspicture}
\caption{The two dual pictorial representations of the Benjamini-Hochberg linear step-up procedure. 
Left: c.d.f. of the $p$-values, the solid line has for slope $\alpha^{-1}$.
Right: ordered $p$-values, the solid line has for slope $\alpha/m$.
In both pictures, the filled points represent $p$-values that corresponds to the rejected hypotheses.
 $m=10$; $\alpha=0.5$.}
\label{fig_LSU}
\end{center}
\end{figure}

Clearly, $t^{su}(\cdot)$ is coordinate-wise non-increasing and self-consistent. Therefore, Theorem~\ref{thFDR1} shows that for any $P \in \mathcal{P}^{pos}$, $\FDR(R_{t^{su}},P)\leq \alpha m_0(P)/m$. As a matter of fact,  as soon as \eqref{equ_defpvalue} holds with an equality, we can prove that for any $P \in \mathcal{P}^{I}$, the equality $\FDR(R_{t^{su}},P)=\alpha m_0(P)/m$ holds, by using a surprisingly direct argument. 
Let $\mbf{p}_{0,-i}$ denote the $p$-value family where $p_i$ has been replaced by $0$, and observe that the following statements are equivalent, for any realization of the $p$-values:
\begin{itemize}
\item[(i)]  $p_i\leq t^{su}(\mbf{p}_{0,-i})$
\item[(ii)] $ \wh{\mathbb{G}}\big( \mbf{p}_{0,-i}, t^{su}(\mbf{p}_{0,-i}) \big) \leq \wh{\mathbb{G}}\big( \mbf{p}, t^{su}(\mbf{p}_{0,-i}) \big)$
\item[(iii)]   $ t^{su}(\mbf{p}_{0,-i})/\alpha \leq \wh{\mathbb{G}}\big( \mbf{p}, t^{su}(\mbf{p}_{0,-i}) \big)  $
\item[(iv)]  $ t^{su}(\mbf{p}_{0,-i})\leq t^{su}(\mbf{p}).$
\end{itemize}
The equivalence between (i) and (ii) is straightforward from the defintion of $\wh{\mathbb{G}}(\cdot,\cdot)$. The equivalence between (ii) and (iii)  follows from $\wh{\mathbb{G}}\big( \mbf{p}_{0,-i}, t^{su}(\mbf{p}_{0,-i}) \big)= t^{su}(\mbf{p}_{0,-i})/\alpha$, because $t=t^{su}(\mbf{p}_{0,-i})$ is a crossing point between $\wh{\mathbb{G}}( \mbf{p}_{0,-i}, t )$ and $t/\alpha$. The equivalence between (iii) and (iv)  comes from the definition of $t^{su}(\mbf{p})$ together with  $t^{su}(\mbf{p}_{0,-i})\leq t^{su}(\mbf{p})\Longleftrightarrow t^{su}(\mbf{p}_{0,-i}) = t^{su}(\mbf{p})$, the latter coming from the  non-increasing property of  $t^{su}(\cdot)$. As a consequence,
 \begin{equation}\label{equ-equiv}\{p_i\leq t^{su}(\mbf{p}_{0,-i})\}=\{ p_i \leq t^{su}(\mbf{p})\},\end{equation}
 with $t^{su}(\mbf{p}_{0,-i}) = t^{su}(\mbf{p})$ on these events. 
Therefore, using \eqref{equ-equiv} and the first decomposition \eqref{FDR-form1} of the FDR, we derive the following equalities:
\begin{align*}
\FDR(R_{t^{su}},P)&=\alpha m^{-1} \sum_{i\in\cH_0(P)} \E\bigg[\frac{\ind{p_i\leq t^{su}(\mbf{p})}}{\alpha\: \G(\mbf{p},t^{su}(\mbf{p})) \vee (\alpha/m)}\bigg]\\
&= \alpha m^{-1} \sum_{i\in\cH_0(P)} \E\bigg[\frac{\ind{p_i\leq t^{su}(\mbf{p})}}{t^{su}(\mbf{p})}\bigg]\\
&= \alpha m^{-1} \sum_{i\in\cH_0(P)} \E\bigg[\frac{\ind{p_i\leq t^{su}(\mbf{p}_{0,-i})}}{t^{su}(\mbf{p}_{0,-i})}\bigg]\\
&= \alpha m^{-1} \sum_{i\in\cH_0(P)} \E\bigg[ t^{su}(\mbf{p}_{0,-i})^{-1} \E\big( \ind{p_i\leq t^{su}(\mbf{p}_{0,-i})}\big| \mbf{p}_{0,-i}\big)\bigg]\\
&=\alpha m_0(P)/m,
\end{align*}
where we assumed in the last equality both that 
$P \in \cP^I$ and 
condition \eqref{equ_defpvalue} holds with equality.
To sum up, we have proved in this section the following result.
\begin{theorem}
\label{thFDR2}
Consider the linear step-up procedure $R_{t^{su}}$ using the threshold defined in \eqref{equ_SU}. Then, for any $P \in \mathcal{P}^{pos}$, $\FDR(R_{t^{su}},P)\leq \alpha m_0(P)/m$. Moreover, the latter is an equality if $P\in\mathcal{P}^{I}$ and  \eqref{equ_defpvalue} holds with equality.
\end{theorem}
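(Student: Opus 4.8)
The plan is to treat the inequality and the equality separately. For the inequality $\FDR(R_{t^{su}},P)\leq \alpha m_0(P)/m$ on $\mathcal{P}^{pos}$, I would simply check that the threshold $t^{su}(\cdot)$ defined in \eqref{equ_SU} meets the two hypotheses of Theorem~\ref{thFDR1}. Self-consistency is immediate: $t^{su}(\mbf{p})=\max\{\mathcal{T}(\mbf{p})\}$ and this maximum belongs to $\mathcal{T}(\mbf{p})$ (the maximum exists, as recalled after \eqref{equ_SU-BHform}). The coordinate-wise non-increasing property follows because lowering any $p$-value can only increase $\wh{\mathbb{G}}(\cdot,u)$ at every $u\in[0,1]$, hence can only enlarge the set $\mathcal{T}(\cdot)$, hence can only increase its maximum. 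Theorem~\ref{thFDR1} then yields the bound, and $m_0(P)/m\leq 1$ gives $\leq\alpha$.

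For the equality under $P\in\mathcal{P}^{I}$ with equality in \eqref{equ_defpvalue}, the key object is the $p$-value family $\mbf{p}_{0,-i}$ obtained from $\mbf{p}$ by replacing $p_i$ with $0$, and the pointwise identity $\{p_i\leq t^{su}(\mbf{p})\}=\{p_i\leq t^{su}(\mbf{p}_{0,-i})\}$, with $t^{su}(\mbf{p})=t^{su}(\mbf{p}_{0,-i})$ holding on this common event. I would establish it through the equivalence chain (i)--(iv): (i)$\Leftrightarrow$(ii) because $\mbf{p}$ and $\mbf{p}_{0,-i}$ only differ, in the definition of $\wh{\mathbb{G}}(\cdot,u)$, through the term $\ind{p_i\leq u}$; (ii)$\Leftrightarrow$(iii) because $t=t^{su}(\mbf{p}_{0,-i})$ is a crossing point, so $\wh{\mathbb{G}}(\mbf{p}_{0,-i},t^{su}(\mbf{p}_{0,-i}))=t^{su}(\mbf{p}_{0,-i})/\alpha$; and (iii)$\Leftrightarrow$(iv) from the definition of $t^{su}(\mbf{p})$ as the largest crossing point, where the resulting inequality $t^{su}(\mbf{p}_{0,-i})\leq t^{su}(\mbf{p})$ is upgraded to equality using that $t^{su}$ is non-increasing in $p_i$, so conversely $t^{su}(\mbf{p}_{0,-i})\geq t^{su}(\mbf{p})$ always holds.

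With this identity in hand, I would start from the exact decomposition \eqref{FDR-form1}. On the event $\{p_i\leq t^{su}(\mbf{p})\}$ the procedure rejects $H_{0,i}$, so $\wh{\mathbb{G}}(\mbf{p},t^{su}(\mbf{p}))\geq 1/m$; combined with the crossing-point identity $\alpha\,\wh{\mathbb{G}}(\mbf{p},t^{su}(\mbf{p}))=t^{su}(\mbf{p})$ this shows the denominator in \eqref{FDR-form1} equals $t^{su}(\mbf{p})$ (the $\vee(\alpha/m)$ is inactive). Replacing $t^{su}(\mbf{p})$ by $t^{su}(\mbf{p}_{0,-i})$ in numerator and denominator via the identity above, then conditioning on $\mbf{p}_{0,-i}$ — which is independent of $p_i$ since $P\in\mathcal{P}^{I}$ — and using $\E[\ind{p_i\leq t^{su}(\mbf{p}_{0,-i})}\mid\mbf{p}_{0,-i}]=t^{su}(\mbf{p}_{0,-i})$ (equality in \eqref{equ_defpvalue}; note $t^{su}(\mbf{p}_{0,-i})\geq\alpha/m>0$ because that family contains a zero, so the division is legitimate), each term of the sum reduces to $\alpha/m$, and summing over $i\in\cH_0(P)$ gives $\FDR(R_{t^{su}},P)=\alpha m_0(P)/m$.

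I expect the main obstacle to be the equivalence (iii)$\Leftrightarrow$(iv), namely ruling out a strict inequality $t^{su}(\mbf{p}_{0,-i})<t^{su}(\mbf{p})$: this is precisely where the monotonicity of $t^{su}(\cdot)$ in the $i$-th coordinate is needed, since $\mbf{p}$ and $\mbf{p}_{0,-i}$ differ only in that coordinate and only upward. The remaining steps are routine: the crossing-point characterization of $t^{su}$, the observation that rejecting $H_{0,i}$ forces $|R_{t^{su}}|\geq 1$, and the conditioning argument, which is clean once independence and the exactness of \eqref{equ_defpvalue} are assumed.
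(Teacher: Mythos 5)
Your proposal is correct and follows essentially the same route as the paper: verifying the two hypotheses of Theorem~\ref{thFDR1} for the inequality, and for the equality using the decomposition \eqref{FDR-form1}, the identity \eqref{equ-equiv} via the equivalence chain (i)--(iv), and conditioning on $\mbf{p}_{0,-i}$ under independence with equality in \eqref{equ_defpvalue}. Your added justifications (why the $\vee(\alpha/m)$ is inactive on the rejection event, and why $t^{su}(\mbf{p}_{0,-i})>0$) are correct refinements of steps the paper leaves implicit.
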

This theorem is due to \cite{BH1995,BY2001}. The short proof mentioned above has been independently given in \cite{FZ2006,RW2009,FDR2009}. Theorem~\ref{thFDR2} proves that the inequality ``$\forall P \in \mathcal{P}^{pos}$, $\FDR(R_{t^{su}},P)\leq \alpha$" is sharp as soon as \eqref{equ_defpvalue} holds with equality 
and 
there exists $P\in\cP^I$ such that $\cH_0(P)=\cH$, that is, $\cap_{i\in\cH} \Theta_{0,i} \cap \cP^I\neq \emptyset$.

Other instances of self-consistent procedures include linear ``step-up-down" procedures as defined in \cite{Sar2002}. Theorem~\ref{thFDR1} establishes that the FDR control also holds for these procedures, as proved in \cite{BR2008,FDR2009}. 

\subsection{Adaptive linear step-up procedures}\label{sec:ALSU}

In this section we denote by $\pi_0(P)$ the proportion $m_0(P)/m$ of hypotheses that are true for $P$. Since we aim at controlling the FDR at level $\alpha$ and not at level $\alpha \pi_0(P)$, Theorem~\ref{thFDR2} shows that there is a potential power loss when using $t^{su}$ when the proportion $\pi_0(P)$ is small. A first idea is to use the linear step-up procedure at level $\alpha^\star=\min(\alpha/\pi_0(P),1)$, that is, corresponding to the threshold
\begin{align}
\label{equ_OASU-sature}
t^*(\mbf{p})&=  \max \big\{ u \in [0,1] \telque \wh{\mathbb{G}}( \mbf{p},u) \geq u /\alpha^\star \big\} \\
\label{equ_OASU}
&=\max \big\{ u \in [0,1] \telque \wh{\mathbb{G}}( \mbf{p},u) \geq u \:\pi_0(P)/\alpha \big\}.
\end{align}
Note that \eqref{equ_OASU-sature} and \eqref{equ_OASU} are equal because when $\alpha\geq \pi_0(P)$, the maximum is $1$ in the two formulas. 
From  Theorem~\ref{thFDR2}, threshold \eqref{equ_OASU} provides a FDR smaller than $\alpha^\star \pi_0(P)\leq\alpha$ for $P\in \mathcal{P}^{pos}$ and a FDR equal to $\alpha$ when $P\in\mathcal{P}^{I}$, \eqref{equ_defpvalue} holds with equality and $\alpha\leq \pi_0(P)$. Unfortunately, since $P$ is unknown, so is $\pi_0(P)$ and thus the threshold \eqref{equ_OASU} is an unobservable ``oracle" threshold. 

An interesting challenge is to estimate $\pi_0(P)$ within \eqref{equ_OASU} while still rigorously controlling the FDR at level $\alpha$, despite the additional fluctuations added by the $\pi_0(P)$-estimation. This problem, called $\pi_0(P)$-adaptive FDR control, has received a growing attention in the last decade, see e.g. \cite{BH2000,STS2004,Black2004,GW2004,BKY2006,Neu2008,Sar2008,BR2009}. 
To investigate this issue, a natural idea is to consider a modified linear step-procedure using the threshold
\begin{equation}
\label{equ_ASU}
t^{su}_f(\mbf{p})= \max \big\{ u \in [0,1] \telque \wh{\mathbb{G}}( \mbf{p},u) \geq u /\big(\alpha\: f(\mbf{p})\big) \big\}.
\end{equation}
where $f(\mbf{p})>0$ is an estimator of $(\pi_0(P))^{-1}$ to be chosen. The latter is called \textit{adaptive linear step-up procedure}. 
It is sometimes additionally said ``plug in", because \eqref{equ_ASU} corresponds to \eqref{equ_OASU} in which we have ``plugged" an estimator of $(\pi_0(P))^{-1}$. Other types of adaptive procedures can be defined, see Remark~\ref{rem:onestage} below.

 We describe now a way to choose $f$ so that the control $\FDR(R_{t^{su}_f},P)\leq \alpha$ still holds. However,   we only focus on the case where the $p$-values are independent, that is, $P\in\mathcal{P}^I$. This restriction is usual in studies providing an adaptive FDR control.
First, to keep the non-increasing property of the threshold $t^{su}_f(\cdot)$, we assume that $f(\cdot)$ is coordinate-wise non-increasing. Second, using  techniques similar to those of Section~\ref{sec:LSU}, 
we can write for any $P\in\mathcal{P}^I$,
 \begin{align} 
 \FDR(R_{t^{su}_f},P)
&\leq \alpha m^{-1} \sum_{i\in\cH_0(P)} \E\bigg[\frac{\ind{p_i\leq t^{su}_f(\mbf{p})}}{t^{su}_f(\mbf{p})}f(\mbf{p})\bigg]\nonumber\\
&\leq \alpha m^{-1} \sum_{i\in\cH_0(P)} \E\bigg[\frac{\ind{p_i\leq t^{su}_f(\mbf{p})}}{t^{su}_f(\mbf{p})}f(\mbf{p}_{0,-i})\bigg]\nonumber\\
&= \alpha m^{-1} \sum_{i\in\cH_0(P)} \E\bigg[ f(\mbf{p}_{0,-i}) \E\bigg[ \frac{\ind{p_i\leq t^{su}_f(\mbf{p})}}{t^{su}_f(\mbf{p})}\bigg| \mbf{p}_{0,-i}\bigg]\bigg]\nonumber\\
&\leq \alpha m^{-1} \sum_{i\in\cH_0(P)} \E\big[f(\mbf{p}_{0,-i})\big],\label{equ-ALSU}
\end{align}
 where we used Lemma~\ref{conditionDC} (i) in the last inequality (conditionally on the $p$-values of $(p_{j},j\neq i)$, because $f$ is coordinate-wise non-increasing). 
Additionally assuming that $f(\cdot)$ is permutation invariant, we can upper-bound the RHS of \eqref{equ-ALSU} by using the Dirac-uniform configuration 
because $f(\cdot)$ is non-increasing. 
This gives rise to the following result.
\begin{theorem}
\label{thAFDR}
Consider the adaptive linear step-up procedure $R_{t^{su}_f}$ with a threshold defined in \eqref{equ_ASU} using a $(\pi_0(P))^{-1}$-estimator $f$ satisfying the following properties:
\begin{itemize}
\item $f(\cdot)$ is coordinate-wise non-increasing, that is, for all $q,q' \in [0,1]^m$ with for all $1\leq i\leq m$, $q'_i\leq q_i$, we have $f(q)\leq f(q')$;
\item $f(\cdot)$ is permutation invariant, that is, for any permutation $\sigma$ of $\{1,...,m\}$, $\forall q\in[0,1]^m$, $f(q_1,...,q_m)=f(q_{\sigma(1)},...,q_{\sigma(m)})$;
\item $f$ satisfies
\begin{equation}\label{equ_estimcond}\forall m_0\in\{1,...,m\}, \:\:\:\E_{\mbf{p}\sim DU(m_0-1,m)}( f(\mbf{p})) \leq m/m_0,\end{equation} 
where $DU(k,m)$ denotes the Dirac-uniform distribution on $[0,1]^m$ for which the $k$ first coordinates are i.i.d. uniform on $(0,1)$ and the remaining coordinates are equal to $0$.  
\end{itemize}
Then, for any $P \in \mathcal{P}^{I}$, $\FDR(R_{t^{su}_f},P)\leq \alpha$.
\end{theorem}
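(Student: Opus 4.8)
The strategy is exactly the one already laid out in the lines preceding the statement: the bound \eqref{equ-ALSU} does the heavy lifting, so the remaining task is to control its right-hand side $\alpha m^{-1} \sum_{i\in\cH_0(P)} \E[f(\mbf{p}_{0,-i})]$ by $\alpha$ under assumption \eqref{equ_estimcond}. First I would recall how \eqref{equ-ALSU} was derived, checking that the two inputs it needs are in place: the coordinate-wise non-increasing property of $f(\cdot)$ guarantees that $t^{su}_f(\cdot)$ is itself coordinate-wise non-increasing (same argument as for $t^{su}$ in Section~\ref{sec:LSU}), which is what licenses the step replacing $f(\mbf{p})$ by $f(\mbf{p}_{0,-i})$ inside the indicator (since on the event $\{p_i\leq t^{su}_f(\mbf{p})\}$ one has $t^{su}_f(\mbf{p})=t^{su}_f(\mbf{p}_{0,-i})$ by the analogue of \eqref{equ-equiv}) and then applying Lemma~\ref{lemmaDC}(i) conditionally on $\mbf{p}_{-i}=(p_j)_{j\neq i}$, using independence of $p_i$ from $\mbf{p}_{-i}$ under $P\in\cP^I$. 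This reproduces \eqref{equ-ALSU}.

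The new ingredient is the passage from $\E[f(\mbf{p}_{0,-i})]$, for $i\in\cH_0(P)$, to the Dirac-uniform expectation appearing in \eqref{equ_estimcond}. Fix $i\in\cH_0(P)$ and write $m_0=m_0(P)$. In the vector $\mbf{p}_{0,-i}$ the $i$-th coordinate is $0$; among the remaining $m-1$ coordinates, the $m_0-1$ coordinates indexed by $\cH_0(P)\setminus\{i\}$ are, by \eqref{equ_defpvalue}, each stochastically lower bounded by a uniform variable (and mutually independent under $\cP^I$), while the $m-m_0$ coordinates indexed by $\cH_1(P)$ are arbitrary in $[0,1]$. Since $f$ is coordinate-wise non-increasing, replacing each of these $m-1$ coordinates by a \emph{smaller} value can only increase $f$; in particular, replacing every coordinate in $\cH_1(P)$ by $0$ and stochastically decreasing every coordinate in $\cH_0(P)\setminus\{i\}$ down to an i.i.d.\ uniform family, we obtain (using permutation invariance of $f$ to reorder the coordinates into ``$m_0-1$ i.i.d.\ uniform, then $m-m_0+1$ zeros'', which is the law $DU(m_0-1,m)$) the domination $\E[f(\mbf{p}_{0,-i})]\leq \E_{\mbf{p}\sim DU(m_0-1,m)}[f(\mbf{p})]$. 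This last quantity is $\leq m/m_0$ by \eqref{equ_estimcond}. Summing over the $m_0$ indices $i\in\cH_0(P)$ gives $\alpha m^{-1}\cdot m_0\cdot (m/m_0)=\alpha$, which is the claim.

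The one point that deserves care — and which I expect to be the main obstacle to making the argument fully rigorous — is the stochastic-domination step turning ``each $p_j$, $j\in\cH_0(P)\setminus\{i\}$, is stochastically $\geq$ uniform and they are independent'' into a coupling under which, simultaneously and coordinatewise, $\mbf{p}_{0,-i}$ dominates a vector distributed as $DU(m_0-1,m)$, so that monotonicity of $f$ applies pathwise. This is the standard quantile (Skorokhod) coupling applied coordinate by coordinate, combined with the independence across coordinates in $\cP^I$, together with the bound $p_j\le 1$ for $j\in\cH_1(P)$ so that those coordinates dominate $0$; one also uses that $f$ is non-increasing in each coordinate to handle the two groups separately. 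Once this coupling is set up and permutation invariance is invoked to match the coordinate ordering, the inequality $\E[f(\mbf{p}_{0,-i})]\le \E_{\mbf{p}\sim DU(m_0-1,m)}[f(\mbf{p})]$ follows, and the proof concludes as above.
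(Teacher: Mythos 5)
Your proposal is correct and follows essentially the same route as the paper: the chain of inequalities leading to \eqref{equ-ALSU} (monotonicity of $f$ to pass from $f(\mbf{p})$ to $f(\mbf{p}_{0,-i})$, then Lemma~\ref{lemmaDC}(i) applied conditionally on $\mbf{p}_{-i}$ under $\cP^I$), followed by the coordinatewise stochastic-domination coupling and permutation invariance to bound $\E[f(\mbf{p}_{0,-i})]$ by $\E_{DU(m_0(P)-1,m)}[f(\mbf{p})]\leq m/m_0(P)$ and sum to $\alpha$. The only cosmetic remark is that the substitution $f(\mbf{p})\leq f(\mbf{p}_{0,-i})$ needs only the coordinate-wise monotonicity of $f$, not the analogue of \eqref{equ-equiv}.
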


The method leading to the upper-bound \eqref{equ-ALSU} was investigated in \cite{BKY2006} and described latter in detail in \cite{BR2009}. The simpler result presented in Theorem~\ref{thAFDR} appeared in \cite{BR2009}. It uses the Dirac-uniform configuration as a least favorable configuration for the FDR. This kind of reasoning has been also used in \cite{FDR2009}. 

Let us now consider the problem of finding a ``correct" estimator $f$ of $(\pi_0(P))^{-1}$. This issue has an interest in its own right and many studies investigated it since the first attempt in \cite{SS1982} (see for instance the references in \cite{CR2010}). Here, we only deal with this problem from the FDR control point of view, by providing two families of estimators that satisfy the assumptions of Theorem~\ref{thAFDR}. First, define the ``Storey-type" estimators, which are of the form
$$
f_1(\mbf{p}) = \frac{m(1-\lambda)}{\sum_{i=1}^m \ind{p_i>\lambda} +1},
$$
for $\lambda\in (0,1)$ ($\lambda$ not depending on $\mbf{p}$).
It is clearly non-increasing and permutation invariant. Moreover, we can check that $f_1$ satisfies \eqref{equ_estimcond}: for any $m_0\in\{1,...,m\}$, considering $(U_i)_{1\leq i\leq m_0-1}$ i.i.d. uniform on $(0,1)$,
$$
\E_{\mbf{p}\sim DU(m_0-1,m)}( f_1(\mbf{p})) = \frac{m}{m_0}\E \bigg[\frac{m_0(1-\lambda)}{\sum_{i=1}^{m_0-1} \ind{U_i>\lambda} +1} \bigg]\leq \frac{m}{m_0},
$$
because for any $k\geq 2$, $q\in(0,1)$ and for $Y$ having a binomial distribution with parameters $(k-1,q)$, 
 we have $\E((1+Y)^{-1})\leq (qk)^{-1}$, as stated e.g. in \cite{BKY2006}. This type of estimator has been introduced in \cite{Storey2002} and proved to lead to a correct FDR control in \cite{STS2004,BKY2006}.

The second family of estimators satisfying the assumptions of Theorem~\ref{thAFDR} is the ``quantile-type" family, defined by
$$
f_2(\mbf{p}) =  \frac{m(1-p_{(k_0)})}{m-k_0 +1},
$$
for $k_0\in\{1,...,m\}$ ($k_0$ not depending on $\mbf{p}$). 
The latter may be seen as Storey-type estimators using a data-dependent $\lambda=p_{(k_0)}$.
Clearly, $f_2(\cdot)$ is non-increasing and permutation-invariant. 
Additionally, $f_2(\cdot)$ enjoys \eqref{equ_estimcond} because for any $m_0\in\{1,...,m\}$, considering $(U_i)_{1\leq i\leq m_0-1}$ i.i.d. uniform on $(0,1)$ ordered as $U_{(1)}\leq ... \leq U_{(m_0-1)}$,
\begin{align*}
\E_{\mbf{p}\sim DU(m_0-1,m)}( f_2(\mbf{p})) &=  \E\bigg[ \frac{m(1-U_{(k_0-m+m_0-1)})}{m-k_0 +1}\bigg] =  \frac{m(1-\E[U_{(k_0-m+m_0-1)}])}{m-k_0 +1}\\
&=    \frac{m(1-(k_0-m+m_0-1)_+/m_0)}{m-k_0 +1} \leq \frac{m}{m_0},
\end{align*}
by using the convention $U_{(j)}=0$ when $j\leq 0$. These quantile type estimators have been proved to lead to a correct FDR control in \cite{BKY2006}. The  simple proof above was given in \cite{BR2009}.

Which choice should we make for $\lambda$ or $k_0$? Using extensive simulations (including other type of adaptive procedures), it was recommended in \cite{BR2009} to choose as estimator $f_1$ with $\lambda$ close to $\alpha$, because the corresponding procedure shows a ``good" power under independence while it maintains a correct FDR control under positive dependencies (in the equi-correlated Gaussian one-sided model described in Example~\ref{sec:vectorpos}).
Obviously, a ``dynamic" choice of $\lambda$ (i.e., using the data) can increase the accuracy of the $(\pi_0(P))^{-1}$ estimation and thus should lead to a better procedure. However, proving that the corresponding FDR control remains valid in this case is an open issue to our knowledge. 
Also, outside the case of the particular equi-correlated Gaussian dependence structure, very little is known about adaptive FDR control. 

\begin{remark}\label{rem:onestage}
Some authors have proposed adaptive procedures that are not of the ``plug-in" form \eqref{equ_ASU}. For instance, 
we can define the class of ``one-stage step-up adaptive procedures", for which the threshold takes the form 
$t^{os}(\mbf{p})= \max \big\{ u \in [0,1] \telque \wh{\mathbb{G}}( \mbf{p},u) \geq r_\alpha(u) \big\},$
where $r_\alpha(\cdot)$ is a non-decreasing function that depends neither on $\mbf{p}$ nor on $\pi_0(P)$, see, e.g., \cite{Neu2008,FDR2009,BR2009}. As an illustration, Blanchard and Roquain (2009) have introduced the curve defined by $r_\alpha(t)=(1+m^{-1})\:t/(t+\alpha(1-\alpha))$ if $t\leq \alpha$ and $r_\alpha(t)=+\infty$ otherwise, see \cite{BR2009}. They have proved that the corresponding step-up procedure $R_{t^{os}}$ controls the FDR at level $\alpha$ in the independent model (by using the property of  Lemma~\ref{conditionDC} (i)). Furthermore,  Finner et al. (2009) have introduced the ``asymptotically optimal rejection curve" (AORC) defined by $r_\alpha(t)=t/(\alpha+t(1-\alpha))$, see \cite{FDR2009}. By contrast with the framework of the present paper, they considered the FDR control only in an asymptotic manner where the number $m$ of hypotheses tends to infinity.  
They have proved that the AORC enjoys the following (asymptotic) optimality property: while several adaptive procedures based on the AORC  provide a valid asymptotic FDR control (under independence), the AORC maximizes the asymptotic power among broad classes of adaptive procedures that asymptotically control the FDR, see Theorem~5.1,~5.3 and~5.5 in \cite{FDR2009}.
\end{remark}

\subsection{Case of arbitrary dependencies}

Many corrections of the linear step-up procedure are available to maintain the FDR control when the $p$-value family has arbitrary and unknown dependencies. We describe here the so-called ``Occam's hammer" approach presented in \cite{BF2007}. Surprisingly, it allows to recover and extend the well-known ``Benjamini-Yekutieli" correction \cite{BY2001} by only using Fubini's theorem. Let us consider 
\begin{align}
\label{equ_betaSU}
t^{\beta su}(\mbf{p})&=\max \{ u \in [0,1] \telque \wh{\mathbb{G}}( \mbf{p}, \beta(u)) \geq u/\alpha \}\\
 &=\max \{ u \in \{\alpha k /m ,1\leq k \leq m\} \telque \wh{\mathbb{G}}( \mbf{p}, \beta(u)) \geq u/\alpha \}\nonumber\\
&=\alpha /m \:\times \max \{  0\leq k \leq m \telque p_{(k)} \leq  \beta(\alpha k/m)\},\label{equ_betaSU-BHform}
\end{align}
for a non-decreasing function $\beta : \R^+ \rightarrow \R^+$. Then the FDR of $R_{\beta(t^{\beta su})}$ can be written as follows: for any $P\in \mathcal{P}$,
 \begin{align} 
\FDR(R_{\beta(t^{\beta su})},P)&= \alpha m^{-1} \sum_{i\in\cH_0(P)} \E\bigg[\frac{\ind{p_i\leq \beta(t^{\beta su}(\mbf{p}))}}{t^{\beta su}(\mbf{p})}\bigg]\nonumber\\
&= \alpha m^{-1} \sum_{i\in\cH_0(P)} \E\bigg[  \ind{p_i\leq \beta(t^{\beta su}(\mbf{p}))} \int_{0}^{+\infty} u^{-2}\ind{t^{\beta su}(\mbf{p})\leq u} du \bigg]\nonumber.
\end{align}
Next, using Fubini's theorem, we obtain
 \begin{align}
\FDR(R_{\beta(t^{\beta su})},P)&=\alpha m^{-1} \sum_{i\in\cH_0(P)} \int_{0}^{+\infty}u^{-2} \E\big[ \ind{t^{\beta su}(\mbf{p})\leq u}  \ind{p_i\leq \beta(t^{\beta su}(\mbf{p}))} \big] du\nonumber\\
&\leq  \alpha m^{-1} \sum_{i\in\cH_0(P)} \int_{0}^{+\infty} u^{-2}\P(p_i\leq \beta(u) ) du\nonumber\\
&=\alpha \frac{m_0(P)}{m} \int_{0}^{+\infty}u^{-2} \beta(u)  du\label{upbound_dep}.
\end{align}
Therefore, choosing any non-decreasing function $\beta$ such that $\int_{0}^{+\infty}u^{-2} \beta(u)  du=1$ provides a valid FDR control.
This leads to the following result:
\begin{theorem}
\label{thFDR-dep}
Consider a function $\beta : \R^+ \rightarrow \R^+$ of the following form: for all $u\geq 0$,
\begin{equation}\label{equbeta}
\beta(u)= \sum_{i:1\leq i \leq m, \alpha i/m\leq u} (\alpha i/m) \nu_i,
\end{equation}
where the $\nu_i$s are nonnegative with $\nu_1 + \dots + \nu_m = 1$.
Consider the step-up procedure $R_{\beta(t^{\beta su})}$ using $t^{\beta su}$ defined by \eqref{equ_betaSU}. Then for any $P\in \mathcal{P}$, $\FDR(R_{\beta(t^{\beta su})},P)\leq \alpha m_0(P)/m$.
\end{theorem}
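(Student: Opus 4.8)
The plan is to verify that the specific function $\beta$ defined by \eqref{equbeta} is non-decreasing and satisfies the normalization $\int_0^{+\infty} u^{-2}\beta(u)\,du = 1$, so that the preceding general computation \eqref{upbound_dep} applies verbatim. That computation already establishes that for \emph{any} non-decreasing $\beta:\R^+\to\R^+$ with $\int_0^{+\infty} u^{-2}\beta(u)\,du = 1$, one has $\FDR(R_{\beta(t^{\beta su})},P)\le \alpha m_0(P)/m$; hence the theorem is just the instantiation of this principle for the piecewise-constant choice \eqref{equbeta}. So the proof is essentially two elementary checks.

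First I would observe that $\beta$ as given in \eqref{equbeta} is a non-decreasing step function: as $u$ increases, the index set $\{i : 1\le i\le m,\ \alpha i/m\le u\}$ can only grow, and each term $(\alpha i/m)\nu_i$ added is nonnegative (since $\nu_i\ge 0$). In particular $\beta$ is constant on each interval $[\alpha j/m, \alpha(j+1)/m)$ for $0\le j\le m-1$ (with value $\sum_{i\le j}(\alpha i/m)\nu_i$) and constant equal to $\sum_{i=1}^m (\alpha i/m)\nu_i$ on $[\alpha,+\infty)$; it vanishes on $[0,\alpha/m)$.

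Second I would compute $\int_0^{+\infty} u^{-2}\beta(u)\,du$. Interchanging sum and integral (all terms nonnegative, Tonelli),
\[
\int_0^{+\infty} u^{-2}\beta(u)\,du = \sum_{i=1}^m (\alpha i/m)\,\nu_i \int_{\alpha i/m}^{+\infty} u^{-2}\,du = \sum_{i=1}^m (\alpha i/m)\,\nu_i \cdot \frac{m}{\alpha i} = \sum_{i=1}^m \nu_i = 1,
\]
using $\int_a^{+\infty} u^{-2}\,du = 1/a$ for $a>0$ and the hypothesis $\nu_1+\dots+\nu_m = 1$. Plugging this into \eqref{upbound_dep} yields $\FDR(R_{\beta(t^{\beta su})},P)\le \alpha m_0(P)/m$ for every $P\in\mathcal{P}$, which is the claim.

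There is essentially no obstacle here; the only mild subtlety is bookkeeping with the indicator $\ind{\alpha i/m\le u}$ inside the integral (making sure the lower limit of each piece is exactly $\alpha i/m$, and that the $i=0$ / small-$u$ region contributes nothing), together with a one-line justification for swapping $\sum$ and $\int$. I would also remark, as the authors presumably intend, that the classical Benjamini--Yekutieli correction corresponds to the choice $\nu_i \propto 1/i$, i.e. $\nu_i = (i\,\sum_{j=1}^m 1/j)^{-1}$, for which $\beta(\alpha k/m) = (\alpha k/m)/\sum_{j=1}^m(1/j)$, recovering the familiar $\sum_{j=1}^m 1/j$ denominator in \eqref{equ_betaSU-BHform}; other choices of $(\nu_i)$ give the promised extension.
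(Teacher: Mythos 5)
Your proposal is correct and follows exactly the paper's intended argument: the general bound \eqref{upbound_dep} is established just before the theorem, and the proof reduces to checking that the piecewise-constant $\beta$ of \eqref{equbeta} is non-decreasing and that $\int_0^{+\infty}u^{-2}\beta(u)\,du=\sum_i (\alpha i/m)\nu_i\cdot m/(\alpha i)=\sum_i\nu_i=1$, which you do correctly via Tonelli. Nothing is missing.
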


Note that the function $\beta$ defined by \eqref{equbeta} takes the value $(\alpha/m) \nu_1+ \dots + (\alpha i/m) \nu_i$ in each $u=\alpha i /m$ and is constant on each interval  $(\alpha i /m,\alpha (i+1) /m)$ and on $(\alpha,\infty)$. Thus, it always satisfies that $\beta(u)\leq u$, for any $u\geq 0$. This means that the procedure $R_{\beta(t^{\beta su})}$ rejects always less hypotheses than the linear step-up procedure $R_{t^{su}}$. Therefore, while $R_{\beta(t^{\beta su})}$ provides a FDR control under no assumption about the $p$-value dependency structure, it is substantially more conservative than $R_{t^{su}}$ under weak PRDS dependencies between the $p$-values.

As an illustration, taking $\nu_i=i^{-1} \delta^{-1}$ for $\delta=1+1/2+...+1/m$, we obtain $\beta(\alpha i/m) = \delta^{-1}\alpha i/m $, which corresponds to the linear step-up procedure, except that the level $\alpha$ has been divided by $\delta\simeq \log(m)$. This is the so-called Benjamini-Yekutieli procedure proposed in \cite{BY2001}.
Theorem~\ref{thFDR-dep} thus recovers Theorem~1.3 of   \cite{BY2001}.
We mention another example, maybe less classical, to illustrate the flexibility of the  choice of $\beta$ in Theorem~\ref{thFDR-dep}. By taking $\nu_{m/2}=1$ and $\nu_i=0$ for $i\neq m/2$ (assuming that $m/2$ is an integer), we obtain $\beta(\alpha i/m) = (\alpha/2) \:\ind{i\geq m/2}$. In that case, the final procedure $R_{\beta(t^{\beta su})}$ rejects the hypotheses corresponding to $p$-values smaller than $\alpha/2$ if $2 p_{(m/2)}\leq \alpha$
and rejects no hypothesis otherwise. Theorem~\ref{thFDR-dep} ensures that this procedure also controls the FDR, under no assumption on the model dependency. 
Many other choices of $\beta$ are given in Section~4.2.1 of \cite{BR2008}.

Finally, let us underline that any FDR control valid under arbitrary dependency suffers from a lack of interpretability for the underlying FDP, as discussed in Section~\ref{sec:discussFDP}. 

\begin{remark}[Sharpness of the bound in Theorem~\ref{thFDR-dep}]\label{sec:sharpBY}
In Lemma 3.1 (ii) of \cite{LR2005} (see also \cite{GR2008}), a specifically crafted $p$-value distribution was built on $[0,1]^m$ (depending on $\beta$)
 for which the FDR of $R_{\beta(t^{\beta su})}$ is \textit{equal} to $\alpha $ (and $m_0(P)=m$). If the underlying model $\cP$ is such that $(p_i(X))_{1\leq i \leq m}$ can have this very specific distribution for some $P\in\cP$, the inequality ``$P\in \mathcal{P}$, $\FDR(R_{\beta(t^{\beta su})},P)\leq \alpha $" in Theorem~\ref{thFDR-dep} is sharp. 
 However, for a ``realistic" model $\cP$, this $p$-value distribution 
 is rarely attained because it assumes quite unrealistic dependencies between the $p$-values. 
 Related to that, several simulation experiments showed that the standard LSU procedure still provides a good FDR control under ``realistic" dependencies, see e.g. \cite{Far2006,KW2008}.
This means that the corrections defined in this section are generally very conservative for real-life data, because
 their actually achieved FDR  is much smaller than $\alpha m_0(P)/m$. Finally, another drawback of the bound of Theorem~\ref{thFDR-dep} is that it is much smaller than $\alpha$ when $\pi_0(P)=m_0(P)/m$ is small. To investigate this problem, we can think to apply techniques similar to  those of Section~\ref{sec:ALSU}. However, the problem of adaptive FDR control  is much more challenging under arbitrary dependency. The few  results that are available in this framework are very conservative, see \cite{BR2009}.   
 \end{remark}

\begin{remark}[Aggregation of dependent $p$-values]\label{rem:bul}
Consider Theorem~\ref{thFDR-dep} in the particular case where all $p$-values test the same null hypothesis, that is $\Theta_{0,i}=\Theta_0$ for any $i$. According to Remark~\ref{rem:allnullthesame}, we obtain a new test of level $\alpha$, by rejecting $H_0$: ``$P\in \Theta_0$" if the procedure $R_{\beta(t^{\beta su})}$ defined in Theorem~\ref{thFDR-dep} rejects at least one null hypothesis, that is, if there exists $k\geq 1$ such that  $p_{(k)} \leq  \beta(\alpha k/m)$.
As an illustration, taking $\nu_{\gamma m}=1$ and $\nu_i=0$ for $i\neq \gamma m$, for a given $\gamma\in[0,1]$ such that $\gamma m\in\{1,...,m\}$, we obtain
 $\beta(\alpha i/m) = (\alpha \gamma) \:\ind{i\geq \gamma m} $, which gives rise to a test rejecting $H_0$ whenever $p_{(\gamma m)} \gamma^{-1} \leq \alpha$. This defines a new global $p$-value $$\wt{p}=\min(p_{(\gamma m)} \gamma^{-1},1)$$ for testing $H_0$ 
  that can be seen as an aggregate of the original $p$-values. Thus, Theorem~\ref{thFDR-dep} shows that $\P(\wt{p}\leq \alpha)\leq\alpha$ under the null,  for arbitrary dependencies between the original $p$-values.
Interestingly, this aggregation procedure was independently discovered in \cite{MMB2009} in a context where one aims at combining $p$-values that were obtained by different splits of the original sample.
Also note that $\gamma=1/m$ corresponds to the Bonferroni aggregation procedure. Let us finally discuss the choice $\gamma = 1/2$ (assuming that $m/2$ is an integer). In that case, the aggregated $p$-value is $\wt{p}=\min(2\: p_{(m/2)} ,1)$. 
According to Remark~\ref{sec:sharpBY}, the factor  ``$2$" in the latter is needed in theory but may be over-estimated for a ``realistic" distribution of the $p$-value family. As a matter of fact, van de Wiel et al. (2009) have (theoretically) proved that this factor can be dropped as soon as  the $p$-value family has some underlying multivariate Gaussian dependency structure, see \cite{WBW2009}.
 \end{remark}

\section{$\kFWER$ control}\label{sec:kFWER}

The methodology presented in this section for controlling the $\kFWER$ under arbitrary dependencies can probably be attributed to many authors, e.g. \cite{Holm1979,WY1993,RW2005,RW2007}. Here, we opted for a general presentation which emphasizes the rationale of the mathematical argument.
This approach has been sketched in the talk \cite{Blanch2009} and investigated more deeply in  \cite{GS2010} where it is referred to as the ``sequential rejection principle". 
While the latter point of view allows to obtain elegant proofs, it is also useful for developing new $\FWER$ controlling procedures (e.g., hierarchical testing, Schaffer improvement), see \cite{GS2010,GF2010,KRW2010}.  
This methodology has been initially developed for the FWER. We propose in Section~\ref{sec:kFWERext} a new extension to the $\kFWER$.

In this section, for simplicity, we drop the explicit dependence of the multiple testing procedure $R$ w.r.t. $\mbf{p}$ in the notation.
The  parameter $k$ is fixed in $\{1,...,m\}$. 

\subsection{Subset-indexed family}

As a starting point, we assume that there exists a subset-indexed family $\{R_\cC\}_{\cC\subset \cH}$ of multiple testing procedures  satisfying the two following assumptions:
\begin{itemize}
\item[\textbullet] $\cC\mapsto R_\cC$ is non-increasing, that is, 
\begin{align}\label{NI:kFWER}\tag{\mbox{NI}}\forall \cC,\cC' \subset \cH\mbox{ such that }  \cC\subset \cC', \mbox{ we have }R_{\cC'}\subset R_{\cC};\end{align}
\item[\textbullet] $R_\cC$ controls the $\kFWER$ when $\cC$ is equal to the subset of true null hypotheses, 
that is,
\begin{align}\label{FWC0}\tag{$\mbox{FWC}_0$}\forall P \in \mathcal{P} \mbox{, }\kFWER(R_{\cH_0(P)},P)\leq \alpha.\end{align} 
\end{itemize}

A natural way of deriving such a family is to take a thresholding-based family of the form
\begin{equation}\label{threshold-fam}
R_\cC=\{1\leq i\leq m\telque p_i\leq t_{\cC}\},
\end{equation}
where $t_{\cC}\in[0,1]$ is a threshold which possibly depends on the data $\mathbf{p}=(p_i)_{1\leq i \leq m}$. Assumption \eqref{NI:kFWER} then holds as soon as we take $t_{\cC}$ non-increasing in $\cC$ (if $\cC\subset \cC'$ then $t_{\cC'} \leq t_{\cC}$). However, $t_{\cC}$ should be carefully chosen in order to ensure \eqref{FWC0}, as we discuss below.

A first instance of a thresholding-based family satisfying \eqref{NI:kFWER}-\eqref{FWC0} is the ``Bonferroni family" that chooses $t_{\cC}=\min(\alpha k/|\cC|,1)$. Condition \eqref{FWC0} results from Markov's inequality:
$$
 \P ( |\cH_0(P)\cap R_{\cH_0(P)}|\geq k) \leq k^{-1} \sum_{i\in\cH_0(P)}  \P(p_i\leq t_{\cH_0(P)} ) \leq  |\cH_0(P)| t_{\cH_0(P)}/k \leq \alpha.
$$
This family is not adaptive w.r.t. the dependence structure of the $p$-values. As an illustration, when the true $p$-values are all equal, say, to $p_{i_0}$, $i_0\in\cH_0(P)$, we have
$$
 \P ( |\cH_0(P)\cap R_{\cH_0(P)}|\geq k)= \P ( |\cH_0(P)|\ind{p_{i_0}\leq t_{\cH_0(P)}} \geq k) \leq   t_{\cH_0(P)}.
 $$
Thus, under this extreme dependency structure, the Bonferroni threshold $\min(\alpha k/|\cC|,1)$ can be replaced by $\alpha$ (the only case which matters is $|\cC|\geq k$, see Remark~\ref{rem-augmentation} below). Hence, there is a potential loss when using the Bonferroni family.
In practice, the Bonferroni family is often used as a ``benchmark family" for evaluating the performance of other families.

In order to improve on the Bonferroni family, one can try to choose a threshold $t_{\cC}$ that captures the dependencies between the $p$-values while  still satisfying \eqref{NI:kFWER}-\eqref{FWC0}. 
For this, first note that for $R_\cC$ defined by \eqref{threshold-fam},
\begin{align*}
\kFWER(R_{\cC},P) &= \P (\exists i_1,...,i_k\in \cH_0(P) \telque \forall i\in \{i_1,...,i_k\}, p_i\leq t_\cC )\\
&= \P (\mbox{k-min}\{p_i,i\in \cH_0(P)\} \leq t_\cC), 
\end{align*}
where $\mbox{k-min}\{p_i,i\in\cH_0(P)\}$ denotes the $k$-th smallest element of $\{p_i,i\in\cH_0(P)\}$. Therefore, 
 a natural choice for $t_\cC$ is the $\alpha$-quantile of the distribution of $\mbox{k-min}\{p_i,i\in \cC\}$. However, the latter is generally unknown because the underlying distribution $P$ is unknown. An idea is to approximate it by using a randomized thresholding procedure. 
This method can be applied when the null hypothesis is invariant under the action of a finite group of transformations of the original observation set $\cX$ onto itself (such a transformation can be for instance a permutation or a sign-flipping, see \citep{RW2005,RW2007,ABR2010a,ABR2010b}). 
For a recent and general description of this method, we refer the reader to Theorem~2 of \cite{GS2010} (while \cite{GS2010} have developed this method only for $k=1$, it can be directly generalized to the case of $k\geq 1$). 
The resulting family satisfies \eqref{NI:kFWER}-\eqref{FWC0} 
 while it is ``adaptive" with respect to the $p$-value dependence structure, in the sense that $t_\cC=t_\cC(\mbf{p})$ implicitly takes into account the potential relations existing between the $p$-values.


\begin{remark}\label{rem-GS}
The monotonicity condition introduced in \cite{GS2010} can be rewritten with our notation as follows:
\begin{align}\label{LNI:kFWER}\tag{\mbox{wNI}}\forall \cC,\cC' \subset \cH\mbox{ such that }  \cC\subset \cC', \mbox{ we have }R_{\cC'}\cap \cC'\subset R_{\cC}.\end{align}
Condition \eqref{LNI:kFWER} is weaker than condition \eqref{NI:kFWER}. Thus, at first sight, the setting of  \cite{GS2010} is more general than ours. The next reasoning shows that the two settings are in fact equivalent. Since the condition \eqref{FWC0} only depends on the set of $R_\cC\cap \cC$ (for $\cC=\cH_0$), we can add the elements of $\cC^c$ in the rejection set $R_\cC$ while still maintaining \eqref{FWC0} true. Therefore, starting from a subset-indexed family $\{R_\cC\}_{\cC\subset \cH}$ satisfying the weaker assumptions \eqref{LNI:kFWER}-\eqref{FWC0}, we may define a new subset-indexed family $\{R'_\cC\}_{\cC\subset \cH}$ satisfying our assumptions \eqref{NI:kFWER}-\eqref{FWC0}, by letting $R'_\cC=R_\cC\cup \cC^c$, and then apply to this family the methodology described in the next sections. Moreover, by anticipating the definition of the FWER-controlling algorithm that will be presented
in  Section~\ref{sec:kFWERext}, we can easily check that the output of this algorithm applied to the family $\{R'_\cC\}_{\cC\subset \cH}$ is the same than the algorithm of \cite{GS2010} applied to the family $\{R_\cC\}_{\cC\subset \cH}$.
As a consequence, our framework covers the original setting of \cite{GS2010}. 
\end{remark}

\begin{remark}\label{rem-augmentation}
Any subset-indexed family $\{R_\cC\}_{\cC\subset \cH}$ satisfying \eqref{NI:kFWER}-\eqref{FWC0} can be modified in the following way: take $\wt{R}_\cC=\cH$ (reject all hypotheses) when $|\cC|<k$ and $\wt{R}_\cC={R}_\cC$ otherwise.  This maintains the conditions \eqref{NI:kFWER}-\eqref{FWC0}, because the $\kFWER$ is always zero when $|\cH_0(P)|<k$.
\end{remark}

In what follows, we investigate the problem of the $\kFWER$ control once we have fixed a subset-indexed family $\{R_\cC\}_{\cC\subset \cH}$ satisfying \eqref{NI:kFWER}-\eqref{FWC0}.

\subsection{Single-step method}\label{sec:kFSER-ss}

From assumption \eqref{FWC0}, the procedure $R_{\cH_0(P)}$ using $\cC=\cH_0(P)$ controls the $\kFWER$. Clearly, this procedure cannot be used because $\cH_0(P)$ depends on the unknown  underlying distribution $P$ of the data. 
We can use instead $R_{\cC}$ with $\cC=\cH$ because, from the two assumptions \eqref{NI:kFWER}-\eqref{FWC0} above, we have $\kFWER(R_{\cH},P)\leq \kFWER(R_{\cH_0(P)},P)\leq \alpha$. This implies that $R_{\cH}$ always controls the $\kFWER$ at level $\alpha$. The latter is generally called the \textit{single-step} procedure (associated to the family $\{R_\cC\}_{\cC\subset \cH}$). 
However, we argue that $R_{\cH}$ could be often too conservative w.r.t. $R_{\cH_0(P)}$, for the two following reasons:
\begin{itemize}
\item $\cH_0(P)$ can be much smaller than $\cH$;
\item the way the procedures $\{R_\cC\}$ have been built implicitly assumed that $\cC=\cH_0(P)$ and can be very conservative when $\cC$ is much larger than $\cH_0$.   
\end{itemize}
For instance, these behaviors have been extensively discussed in \cite{ABR2010b} 
for particular Rademacher-resampled thresholding procedures. 
Therefore, we seek for a procedure controlling the $\kFWER$ which is ``close" to $R_{\cH_0(P)}$ and which can be derived from the family $\{R_{{\cC}}\}_{\cC\subset \cH}$ via a simple algorithm. 

\subsection{Step-down method for FWER}

We present in this section the special case of $k=1$, following the approach of  \cite{RW2005} with the presentation proposed in \cite{Blanch2009,GS2010}. 
Let us denote by $A_\cC$ the sets $(R_\cC)^c$ of non-rejected hypotheses for the subset-indexed family. Consider the event $$\Omega_0 = \{ R_{\cH_0(P)}\cap \cH_0(P)= \emptyset\}= \{ \cH_0(P) \subset A_{\cH_0(P)}\}.$$
 By assumption \eqref{FWC0}, we have $\P(\Omega_0)\geq 1-\alpha$. 
Since from \eqref{NI:kFWER}, $A_{\cC}$ is non-decreasing in $\cC$, the following holds on $\Omega_0$: for any $\cC\subset \cH$, 
 \begin{equation}\label{transform-FWER}
\cH_0(P) \subset \cC \Longrightarrow A_{\cH_0(P)}\subset A_\cC  \Longrightarrow  \cH_0(P)  \subset A_{\cC}.
 \end{equation}
 Thus, on the event $\Omega_0$, taking $\cC=\cC_0=\cH$ in \eqref{transform-FWER} gives that  $\cH_0(P) \subset A_{\cC_0}$, which in turn implies $\cH_0(P)  \subset A_{\cC_1}$ by taking $\cC=\cC_1=A_{\cC_0}$ in  \eqref{transform-FWER}, and so on. By recursion, this proves the following result:
 
 \begin{theorem}\label{th-FWER}
Assume that a family $\{R_\cC\}_{\cC\subset \cH}$ of multiple testing procedures satisfies conditions \eqref{NI:kFWER} and \eqref{FWC0} and consider the corresponding family of non-rejected hypotheses $\{A_\cC\}_{\cC\subset \cH}$. Define $\hat{\cC}$ by the following ``step-down" recursion:
\begin{itemize}
\item[\textbullet] Initialization: $\cC_0= \cH$; 
\item[\textbullet] Step $j\geq 1$: let $\cC_j=A_{\cC_{j-1}}$. If $\cC_{j}=\cC_{j-1}$, let $\hat{\cC}=\cC_j$ and stop. Otherwise go to step $j+1$;
 \end{itemize}
 Then the procedure $R=(\hat{\cC})^c$, which also equals $R_{\hat{\cC}}$, controls the FWER at level $\alpha$ for any $P\in\cP$.
\end{theorem}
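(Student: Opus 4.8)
The plan is to formalize the recursion sketched just before the theorem statement and verify three things: that the recursion terminates, that the output is well-defined, and that the FWER control holds on the event $\Omega_0$ (which has probability at least $1-\alpha$ by \eqref{FWC0}). First I would observe that the sequence $(\cC_j)_{j\geq 0}$ defined by $\cC_0=\cH$ and $\cC_j = A_{\cC_{j-1}}$ is nonincreasing: since $\cC_1 = A_{\cH}\subset \cH = \cC_0$, and since $A_\cC$ is nondecreasing in $\cC$ by \eqref{NI:kFWER}, an immediate induction gives $\cC_{j+1}=A_{\cC_j}\subset A_{\cC_{j-1}}=\cC_j$ for all $j$. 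Because all the $\cC_j$ are subsets of the finite set $\cH$, the sequence is eventually constant, so the recursion stops after at most $m$ steps and $\hat\cC$ is well-defined; moreover $\hat\cC$ satisfies the fixed-point identity $A_{\hat\cC}=\hat\cC$, i.e. $\hat\cC = (R_{\hat\cC})^c$, which is what makes the equality $R = (\hat\cC)^c = R_{\hat\cC}$ meaningful.

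Next I would carry out the key step: showing $\cH_0(P)\subset \hat\cC$ on the event $\Omega_0$. Working on $\Omega_0$, the implication \eqref{transform-FWER} says that whenever $\cH_0(P)\subset \cC$ we also have $\cH_0(P)\subset A_\cC$. I would prove by induction on $j$ that $\cH_0(P)\subset \cC_j$ for every $j\geq 0$. The base case $\cH_0(P)\subset \cC_0 = \cH$ is trivial. For the inductive step, if $\cH_0(P)\subset \cC_{j}$, then applying \eqref{transform-FWER} with $\cC = \cC_j$ yields $\cH_0(P)\subset A_{\cC_j} = \cC_{j+1}$. Passing to the limit (which is attained, by finiteness) gives $\cH_0(P)\subset \hat\cC$ on $\Omega_0$.

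Finally I would conclude the FWER bound. On $\Omega_0$ we have $\cH_0(P)\subset\hat\cC$, hence $A_{\hat\cC}\subset A_{\cH_0(P)}$ by \eqref{NI:kFWER}; but $A_{\hat\cC}=\hat\cC$, so $R = (\hat\cC)^c = (A_{\hat\cC})^c \supset (A_{\cH_0(P)})^c = R_{\cH_0(P)}$ — wait, this inclusion goes the wrong way, so instead I would argue directly: on $\Omega_0$, $\cH_0(P)\subset\hat\cC$ means $\cH_0(P)\cap R = \cH_0(P)\cap(\hat\cC)^c = \emptyset$, i.e. the procedure $R$ makes no false rejection on $\Omega_0$. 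Therefore $\FWER(R,P) = \P(|\cH_0(P)\cap R|\geq 1) \leq \P(\Omega_0^c)\leq \alpha$, the last inequality being exactly \eqref{FWC0} in the case $k=1$. This holds for every $P\in\cP$, which is the claim.

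I expect the only real subtlety — and hence the step worth stating carefully rather than the genuine obstacle — to be the bookkeeping around the fixed point: making sure that ``the recursion stops'' genuinely produces a set $\hat\cC$ with $A_{\hat\cC}=\hat\cC$, and that the monotone inclusion $\cH_0(P)\subset\cC_j$ is preserved at each stage using precisely \eqref{transform-FWER} (which in turn bundles together \eqref{NI:kFWER} and the definition of $\Omega_0$). Everything else is a short induction on the finite descending chain $(\cC_j)$, and no probabilistic input beyond the single inequality $\P(\Omega_0)\geq 1-\alpha$ coming from \eqref{FWC0} is needed.
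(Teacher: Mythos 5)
Your proof is correct and follows essentially the same route as the paper: the event $\Omega_0$ with $\P(\Omega_0)\geq 1-\alpha$ from \eqref{FWC0}, the monotone implication \eqref{transform-FWER}, an induction giving $\cH_0(P)\subset \cC_j$ for all $j$ on $\Omega_0$, and hence no false rejection on $\Omega_0$. Your added remarks on termination of the finite descending chain and the fixed-point identity $A_{\hat\cC}=\hat\cC$ are correct bookkeeping that the paper only notes after the theorem, and your brief detour in the last step (the inclusion $A_{\hat\cC}\subset A_{\cH_0(P)}$ is indeed reversed) is properly abandoned in favour of the direct, valid argument.
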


Note that for all $j\geq 0$, we have $\cC_{j+1} \subset \cC_j $, because $\cC_1\subset \cC_0$ and $A_\cC$ is non-decreasing in $\cC$. Thus, the set of rejected hypotheses can only increase during the step-down algorithm.
In particular, 
the final procedure ${\hat{\cC}}^c=R_{\hat{\cC}}$ is always less conservative than the single-step procedure $R_\cH$, for the same FWER control. Thus, using a step-down algorithm is always more powerful than the single-step method.

\begin{example}[Bonferroni step-down procedure for $\FWER$ control]
Theorem~\ref{th-FWER} can be used with the Bonferroni family $R_\cC=\{1\leq i \leq m\telque p_i\leq \alpha /|\cC|\}$. In that case, 
by reordering the $p$-values $p_{(1)}\leq ...\leq p_{(m)}$ (with $p_{(0)}=0$),  the corresponding step-down procedure defined in Theorem~\ref{th-FWER} can be reformulated as 
rejecting the nulls with $p_i\leq \alpha /(m-\hat{\ell}+1)$, where $\hat{\ell}=\max\{\ell\in\{0,1,...,m\}\telque \forall \ell'\leq \ell , \:p_{(\ell')}\leq \alpha /(m-{\ell'}+1) \}$.
This is the well known step-down \textit{Holm procedure} which was introduced and proved to control the FWER in \cite{Holm1979}. By contrast with step-up procedures, the step-down Holm procedure starts from the most significant $p$-value and stops the first time that a (ordered) $p$-value exceeds the critical curve. This is illustrated in Figure~\ref{fig_stepdown}. 
\end{example}

 \begin{figure}[htbp]
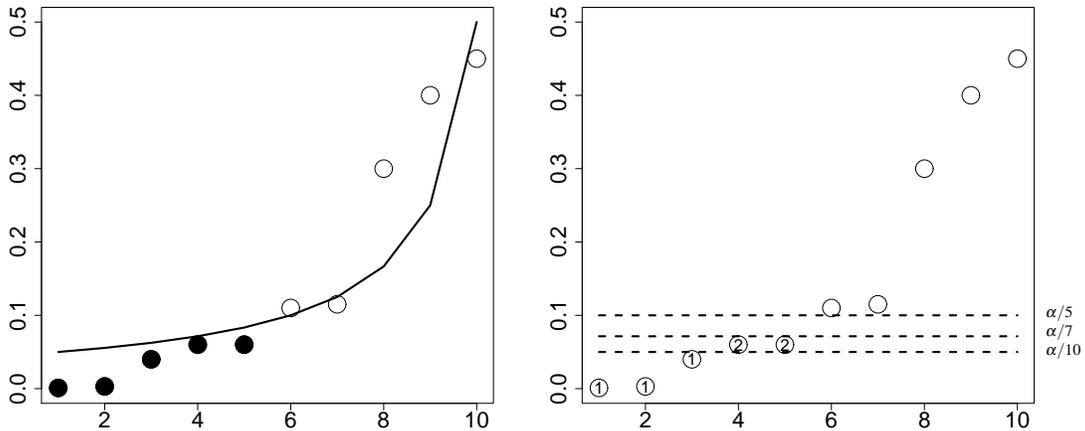

\begin{center}
\includegraphics[scale=0.35,angle=-90]{fig-step-down-class.ps}
\includegraphics[scale=0.35,angle=-90]{fig-step-down-speed.ps}
\begin{pspicture}(0,0)(0,0)
\rput[l](-0.3,-5.3){{\tiny $\alpha/10$}}
\rput[l](-0.3,-5.05){{\tiny $\alpha/7$}}
\rput[l](-0.3,-4.8){{\tiny $\alpha/5$}}
\end{pspicture}
\caption{Illustration of the two equivalent definitions of Holm's procedure.
The left picture is the classical step-down representation: ordered $p$-values together with the solid curve $\l\mapsto \alpha/(m-\l+1)$. The filled points represent $p$-values that corresponds to the rejected hypotheses.
The right picture illustrates the algorithm of Theorem~\ref{th-FWER}: ordered $p$-values with the three thresholds $\alpha/10$ (step 1), $\alpha/7$ (step 2) and $\alpha/5$ (step 3). 
For $i\in\{1,2\}$, the points filled with ``i" are rejected in the $i$th step of the algorithm.
Both pictures use the same $p$-values and $m=10$; $\alpha=0.5$.}
\label{fig_stepdown}
\end{center}
\end{figure}

\subsection{Step-down method for $\kFWER$}\label{sec:kFWERext}

We would like to generalize Theorem~\ref{th-FWER} to the case of the $\kFWER$. This time, we should consider the event 
 \begin{align*}
 \Omega_0 &= \{ |R_{\cH_0(P)}\cap \cH_0(P)| \leq k-1\} 
= \{ \exists I_0\subset \cH, |I_0|= k-1 : \cH_0(P) \subset A_{\cH_0(P)}\cup I_0 \},
 \end{align*}
which satisfies by assumption $\P(\Omega_0)\geq 1-\alpha$. For any subset $\cC\subset \cH$, let
\begin{equation}\label{equ-pfkFWER}
\phi(\cC)=\bigcup_{I\subset \cH, |I|= k-1} A_{\cC\cup I}=\bigcup_{I\subset \cC^c, |I|\leq k-1} A_{\cC\cup I}
.  
\end{equation}
Then we may prove that the following holds: on the event $\Omega_0$, for any $\cC\subset \cH$,
 \begin{align*}
  \exists I\subset \cH, |I| = k-1 : \cH_0(P) \subset \cC\cup I &\: \Longrightarrow\: \exists I\subset \cH, |I| = k-1 : A_{\cH_0(P)} \subset A_{\cC\cup I} \subset \phi(\cC)\\
 &\:\Longrightarrow\: \exists I'\subset \cH, |I'|= k-1 : \cH_0(P) \subset \phi(\cC) \cup I'.
 \end{align*}
 The first implication holds because $A_{\cC}$ is non-decreasing in $\cC$ and the second implication holds by considering $I'=I_0$.
Thus, on the event $\Omega_0$, for any $\cC\subset \cH$,
$$|\cC^c\cap \cH_0(P)| \leq k-1 \Longrightarrow |(\phi(\cC))^c \cap \cH_0(P)| \leq k-1.$$
This leads to the following result.

\begin{theorem}\label{th-kFWER}
Assume that a family $\{R_\cC\}_{\cC\subset \cH}$ of multiple testing procedures satisfies conditions \eqref{NI:kFWER} and \eqref{FWC0} and consider the corresponding family of non-rejected hypotheses $\{A_\cC\}_{\cC\subset \cH}$ and let $\phi$ be defined by \eqref{equ-pfkFWER}. Define $\hat{\cC}$ by the following ``step-down" recursion:
\begin{itemize}
\item[\textbullet] Initialization: $\cC_0= \cH$; 
\item[\textbullet] Step $j\geq 1$: let $\cC_j=\phi(\cC_{j-1})$. If $\cC_{j}=\cC_{j-1}$, let $\hat{\cC}=\cC_j$ and stop. Otherwise go to step $j+1$;
 \end{itemize}
 Then the procedure $R = (\hat{\cC})^c$, which also equals $(\phi(\hat{\cC}))^c=\bigcap_{|I|= k-1} R_{\hat{\cC}\cup I}$, controls the $\kFWER$ at level $\alpha$ for any $P\in\cP$.
\end{theorem}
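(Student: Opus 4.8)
The plan is to mirror the proof of Theorem~\ref{th-FWER}, replacing the map $\cC\mapsto A_\cC$ by $\phi$ and the property ``$\cH_0(P)\subset A_\cC$'' by ``$|(\phi(\cC))^c\cap\cH_0(P)|\leq k-1$''. Three things have to be checked: that the recursion is well defined and terminates, that the terminal set $\hat{\cC}$ inherits the key property on the event $\Omega_0$, and that this yields the $\kFWER$ bound together with the stated alternative expression for $R$.

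First I would record that $\phi$ is non-decreasing: if $\cC\subset\cC'$ then $\cC\cup I\subset\cC'\cup I$ for every $I$, hence $A_{\cC\cup I}\subset A_{\cC'\cup I}$ by \eqref{NI:kFWER}, and taking unions gives $\phi(\cC)\subset\phi(\cC')$. Since $\cC_1=\phi(\cC_0)\subset\cH=\cC_0$ trivially, an immediate induction using monotonicity of $\phi$ shows $\cC_{j+1}=\phi(\cC_j)\subset\phi(\cC_{j-1})=\cC_j$ for all $j\geq 1$; the sequence $(\cC_j)_j$ is therefore non-increasing in the finite lattice of subsets of $\cH$, so it stabilizes after at most $m$ steps and $\hat{\cC}$ is well defined, with $\hat{\cC}=\phi(\hat{\cC})$.

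Next, working on the event $\Omega_0$ (which satisfies $\P(\Omega_0)\geq 1-\alpha$ by \eqref{FWC0}), I would prove by induction on $j$ that $|(\cC_j)^c\cap\cH_0(P)|\leq k-1$. For $j=0$ this holds because $(\cC_0)^c=\emptyset$ and $k\geq 1$. For the inductive step, the implication established just before the theorem statement --- namely, on $\Omega_0$, $|\cC^c\cap\cH_0(P)|\leq k-1$ implies $|(\phi(\cC))^c\cap\cH_0(P)|\leq k-1$ --- applied with $\cC=\cC_{j-1}$ gives the claim for $\cC_j=\phi(\cC_{j-1})$. Passing to the stabilized limit, $|(\hat{\cC})^c\cap\cH_0(P)|\leq k-1$ on $\Omega_0$.

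Finally, since $R=(\hat{\cC})^c$, on $\Omega_0$ we have $|R\cap\cH_0(P)|\leq k-1<k$, so $\{|R\cap\cH_0(P)|\geq k\}\subset\Omega_0^c$ and $\kFWER(R,P)=\P(|R\cap\cH_0(P)|\geq k)\leq\P(\Omega_0^c)\leq\alpha$ for every $P\in\cP$. The alternative expression follows from $\hat{\cC}=\phi(\hat{\cC})$ and De Morgan: $R=(\hat{\cC})^c=(\phi(\hat{\cC}))^c=\bigl(\bigcup_{|I|=k-1}A_{\hat{\cC}\cup I}\bigr)^c=\bigcap_{|I|=k-1}(A_{\hat{\cC}\cup I})^c=\bigcap_{|I|=k-1}R_{\hat{\cC}\cup I}$. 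The main obstacle is conceptual rather than computational, and it has in fact already been dispatched in the discussion preceding the theorem: isolating the right propagation map $\phi$ so that the $k-1$ tolerated false rejections are carried along the recursion without accumulating --- the point being that the \emph{same} set $I'=I_0$ of size $k-1$, furnished once and for all by $\Omega_0$, can be reused at every step. Once $\phi$ is in place, the three verifications above are routine, as is the measurability of $R$ (a finite Boolean combination of the measurable $R_\cC$'s, the number of iterations being bounded by $m$).
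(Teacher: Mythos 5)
Your proposal is correct and follows essentially the same route as the paper: the key implication on $\Omega_0$ (that $|\cC^c\cap\cH_0(P)|\leq k-1$ propagates to $|(\phi(\cC))^c\cap\cH_0(P)|\leq k-1$, by reusing the single set $I_0$ furnished by $\Omega_0$) is exactly the paper's argument, and the recursion, termination, and De Morgan identification of $R$ with $\bigcap_{|I|=k-1}R_{\hat{\cC}\cup I}$ are carried out as in the text. Nothing is missing.
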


From \eqref{equ-pfkFWER}, $\phi(\cdot)$ is non-decreasing, that is, $\forall \cC\subset \cC'$, $\phi(\cC)\leq \phi(\cC')$. 
As a consequence, we derive from $\cC_1\subset \cC_0$ that $\cC_{j} \subset \cC_{j-1}$ for all $j\geq 1$. Therefore, the rejection set can only increase at each step of the step-down algorithm. 
In particular, 
the final procedure ${\hat{\cC}}^c=\bigcap_{|I|= k-1} R_{\hat{\cC}\cup I}$ is always less conservative than the single step method $R_\cH$, for the same $\kFWER$ control. Therefore, using the step-down algorithm always leads to a power improvement.

To illustrate Theorem~\ref{th-kFWER}, let us consider a thresholding-based family of the form $R_\cC=\{1\leq i\leq m\telque p_i\leq t_{\cC}\}$ with a non-increasing threshold function $\cC\mapsto t_{\cC}$ (i.e., such that for $\cC\subset \cC'$, we have $t_{\cC'}\leq t_{\cC}$) and such that $\{R_\cC\}_{\cC}$ satisfies \eqref{FWC0}. The recursion relation $\cC'=\phi(\cC)$ can be rewritten in that case as follows:
\begin{align*}
(\cC')^c &= \bigcap_{I\subset \cC^c, |I|\leq k-1} R_{\cC\cup I}\\
&= \bigcap_{I\subset \cC^c, |I|\leq k-1} \{1\leq i \leq m\telque p_i\leq t_{\cC\cup I} \}\\
&=\big\{1\leq i \leq m\telque p_i\leq \min_{I\subset \cC^c, |I|\leq k-1}\{ t_{\cC\cup I}\} \big\}.
\end{align*}
This recovers the generic step-down method described in Algorithm~2.1 of \cite{RW2007}, which was developed in the case where the subset-indexed family is thresholding based. 

\begin{example}[Bonferroni step-down procedure for $\kFWER$ control]
When we choose the Bonferroni family, i.e., the threshold family $t_\cC=\alpha k/|\cC|$, we have 
$$\min_{I\subset \cC^c, |I|\leq k-1}\{ t_{\cC\cup I}\}= \frac{\alpha k}{m\wedge (|\cC| + k-1)}.$$
Therefore, in terms of the ordered $p$-values $0=p_{(0)}\leq p_{(1)}\leq ...\leq p_{(m)}$, the procedure of Theorem~\ref{th-kFWER} can be reformulated as rejecting the null $H_{0,i}$ when $p_i\leq \alpha k/(m\wedge(m-\hat{\ell}+k))$ where $\hat{\ell}=\max\{\ell\in\{0,1,...,m\}\telque \forall \ell'\leq \ell , \:p_{(\ell')}\leq \alpha k/(m\wedge(m-{\ell'}+k)) \}$.
The latter is the \textit{generalized Holm procedure}, which was introduced and proved to control the $\kFWER$ in \cite{LR2005}. 
\end{example}

\section{FDP control}\label{sec:FDP}

The problem of controlling the FDP has been investigated in many studies, e.g., \cite{LR2005,LBH2005,RS2006b,CT2008,RW2007,DL2008,RW2010}.
We follow here a methodology proposed by Romano and Wolf (2007), see \cite{RW2007}. They have proposed  to use a family $\{S_k\}_k$ of $\kFWER$ controlling procedures and to choose $k$ that ensures that the corresponding rejection number $|S_k|$ is ``sufficiently large". Roughly speaking, choosing $k$ such that $|S_k|$ is larger than $(k-1)/\gamma$ implies that, with high probability,
  $$\FDP(S_k,P)= |S_k\cap \cH_0(P)|/|S_k| \leq (k-1)/|S_k| \leq \gamma.$$
Obviously, as it is, the above reasoning is not rigorous, because the chosen $k$ depends on the data. Theorem~4.1 (i) of \cite{RW2007} establishes that the latter approach leads to a correct FDP control in the asymptotic setting where the sample size available for each test tends to infinity. This can be seen as a Dirac configuration where each $p$-value corresponding to false nulls are equal to zero. 

In this section, we propose to reformulate this approach by using as index the rejection number instead of $k$. 
Roughly speaking, if we choose $\{R_\l\}_\l$ such that each $R_\l$ controls the $(\gamma\l +1)$-FWER and we choose $\l$ such that $|R_\l|\geq \l$, we obtain that, with high probability,
  $$\FDP(R_\l,P)= |R_\l\cap \cH_0(P)|/|R_\l| \leq \gamma\l/|R_\l| \leq\gamma.$$
Similarly to the previous paragraph, this argument is not rigorous because the chosen $\l$ depends of the data. The main task of this section is to rationalize this approach.
This leads to a general result (Theorem~\ref{FDPfromkFWER} given in Section~\ref{sec:FDPfromkFWER}), which covers both Theorem~4.1 (i) of \cite{RW2007} in the ``Dirac" setting (see Section~\ref{sec:FDPcor2}) and the earlier result of \cite{LR2005} (see Section~\ref{sec:FDPcor1}). 
As additional corollary, we derive the FDP control of the quantile-binomial procedure described in Algorithm~\ref{equ-quantilebinom}, when the data are assumed to follow the model $\cP^I$ (see Section~\ref{sec:FDPcor1}). 

In this section, the  parameter $\gamma$ is fixed once and for all in $(0,1)$.

\subsection{Family indexed by rejection numbers}\label{sec:volindex}

Assume that we have at hand a family $\{R_\l\}_{1\leq \l \leq m}$ of multiple testing procedures and a class of distributions $\mathcal{P}'\subset \cP$ satisfying the following properties:
\begin{itemize}
\item[\textbullet] $R_\l$  is non-decreasing with respect to $\l$, that is,
\begin{align}
\forall \l\in\{1,...,m-1\},\:\: R_\l\subset R_{\l+1}\,;\label{ND:FDP}\tag{\mbox{ND}}
\end{align}
\item[\textbullet] $R_\l$ controls the $\mbox{$(\lfloor\gamma\l\rfloor +1)$-\FWER}$ at level $\alpha$ for any $P \in \cP'$ such that less than $m - \l + \lfloor\gamma(\l-1)\rfloor +1$ null hypotheses are true, that is,
\begin{align}
\begin{array}{c}\forall \l\in\{1,...,m\},\:\:\: \forall P \in \cP' \mbox{ s.t. } |\cH_0(P)|\leq m - \l + \lfloor\gamma(\l-1)\rfloor +1, \\
\P(|R_\l\cap \cH_0(P)|\geq \lfloor\gamma\l\rfloor +1)\leq \alpha\,\end{array};\label{FWC}\tag{\mbox{FWC}}
\end{align}
\item[\textbullet] for any $P\in \cP'$, for any $\l\in\{1,...,m\}$, the false rejection number of $R_\l$ is independent of the correct rejection numbers of $R_{\l'}$, for $1\leq \l' \leq m$, that is,
\begin{align}
\forall P\in \cP', \forall \l\in\{1,...,m\}, |R_\l\cap \cH_0(P)| \mbox{ is independent of } \{|R_{\l'}\cap \cH_1(P)|, 1\leq \l' \leq m\}\,.\label{DA}\tag{\mbox{DA}}
\end{align}
\end{itemize}

In condition \eqref{FWC}, for any $x\geq 0$, $\lfloor x \rfloor$ denotes the largest integer $n$ such that $n\leq x$.
Condition \eqref{ND:FDP} is natural because the index $\l$ can be interpreted as a rejection number. It is easy to check in the examples below.

For any $\cP'\subset\cP$, condition \eqref{FWC} is fulfilled by the (single-step or step-down) $\kFWER$ controlling procedures of the previous section when $k=\lfloor\gamma\l\rfloor +1$. As a first instance, we can use the (single-step) Bonferroni family $R_\l$ using the threshold $\alpha (\lfloor\gamma\l\rfloor +1)/m$. Moreover, note that $|\cH_0(P)|\leq m - \l + \lfloor\gamma(\l-1)\rfloor +1$ in \eqref{FWC}, thus we can consider the improved threshold 
\begin{equation}\label{equ_seuilRom}
t^{LR}_\l = \frac{\alpha (\lfloor\gamma\l\rfloor +1)}{m - \l + \lfloor\gamma(\l-1)\rfloor +1}.
\end{equation}
The threshold \eqref{equ_seuilRom} is slightly larger 
than the threshold used in Theorem~3.1 of \cite{LR2005}
 (they used $\lfloor\gamma\l\rfloor$ instead of $\lfloor\gamma(\l-1)\rfloor$ in the denominator).
  As a second instance, we can substantially improve on the above threshold family when we additionally assume that the distribution $P$ of the data lies in the smaller subset $\cP'=\cP^I$:
for this, note that for any $P\in \cP^I$ and for any $t\in[0,1]$, the variable $|\{i\in \cH_0(P) \telque p_i(X)\leq  t\}|$ is stochastically upper-bounded by a binomial distribution of parameters $|\cH_0(P)|$ and $t$, which in turn is stochastically upper-bounded by a binomial distribution of parameters $m - \l + \lfloor\gamma(\l-1)\rfloor +1$ and $t$. Therefore, choosing the (deterministic) quantile-based threshold family $(t^Q_\l)_{1\leq \l\leq m}$ defined by
\begin{align}\label{equ_TLR_improved}
 t^Q_{\l}&=\max\{t\in[0,1] \telque \P\big( Z > \gamma {\l} \big) \leq \alpha \mbox{ for $Z\sim \mathcal{B}(m -\l+ \lfloor\gamma ({\l}-1)\rfloor + 1 , t)$}\}\\
 &=\max\{t\in[0,1] \telque q_\l(t)\leq \gamma \l \},\nonumber
\end{align}
where $q_{\l}(\cdot)$ is defined by \eqref{equ-quantilebinom},
we obtain a family of thresholding procedures satisfying \eqref{FWC} with $\cP'=\cP^I$.
Clearly, since $t^{LR}_\l$ in \eqref{equ_seuilRom} is only based upon Markov's inequality, which is in general not accurate for binomial variables, the threshold family  $ t^Q_{\l}$ defined by \eqref{equ_TLR_improved} is substantially larger, 
as illustrated in Figure~\ref{fig_LR}. Interestingly, we can use more elaborate deviation inequalities to obtain thresholds that are better than $t^{LR}_\l$ while having a form more explicit than $ t^Q_{\l}$, see Remark~\ref{rem:devbinom}.

Assumption \eqref{DA} is a dependence assumption which is typically satisfied in the two following cases:
\begin{itemize}
\item[$-$] each procedure $R_\l$ uses a deterministic threshold and the $p$-values associated to true nulls are independent of the $p$-values associated to false nulls, for all distributions of $\cP'$, that is,
\begin{align}
\begin{array}{c}\forall  \l\in\{1,...,m\}, R_\l = \{i \in \{1,...,m\} \telque  p_i \leq t_\l\} \mbox{ for a deterministic $t_\l\in[0,1]$} \\
\mbox{and }\forall P\in\mathcal{P}', (p_i(X))_{i\in\cH_0(P)} \mbox{ is independent of } (p_i(X))_{i\in\cH_1(P)}\,\end{array};\label{DA2}\tag{\mbox{DA'}}
\end{align}
\item[$-$] for all distributions of $\cP'$, the number of correct rejections of each $R_\l$ is deterministic, that is,
\begin{align}
\forall P\in \cP', \:\{|R_{\l'}\cap \cH_1(P)|, 1\leq \l' \leq m\} \mbox{ is deterministic}. \label{DA3}\tag{\mbox{DA''}}
\end{align}
\end{itemize}

Condition \eqref{DA3} is satisfied for instance when $\cH_1(P)\subset R_{\l'}$, for any $\l'$, which is the case for procedures of the form $R_\l = \{i \in \{1,...,m\} \telque  p_i \leq t_\l(\mbf{p})\}$ using a possibly data-dependent threshold $t_\l(\mbf{p})\in[0,1]$, when we assume that the $p$-values are in the Dirac configuration, that is, when they are equal to zero under the alternative.

\begin{remark}\label{rem:devbinom}
Using Hoeffding's and Bennett's inequalities (see, e.g.,  Proposition~2.7 and~2.8 in \cite{Mass2007}), we can derive a  family of thresholding procedures satisfying \eqref{FWC} with $\cP'=\cP^I$, by using
 the threshold 
\begin{equation}\label{equ_TLR_Cher}
(t^{Q})'_{\l}=\max(t^{LR}_\l,t^{Ho}_\l,t^{Be}_\l ),
\end{equation}
where we let
\begin{align*} 
t^{Ho}_\l&=\bigg(\frac{\lfloor\gamma\l\rfloor +1}{m - \l + \lfloor\gamma(\l-1)\rfloor +1}- \left(\frac{\log(1/\alpha)}{2(m - \l + \lfloor\gamma(\l-1)\rfloor +1)}\right)^{1/2}\bigg)\vee 0\\
t^{Be}_\l&=\frac{\lfloor\gamma\l\rfloor +1}{m - \l + \lfloor\gamma(\l-1)\rfloor +1} \:\:h^{-1}\left(\frac{\log(1/\alpha)}{\lfloor\gamma\l\rfloor +1}\right),
\end{align*}
 with $h(u)=u-\log(u)-1$, $u\in(0,1]$. 
\end{remark}

 \begin{figure}[htbp]
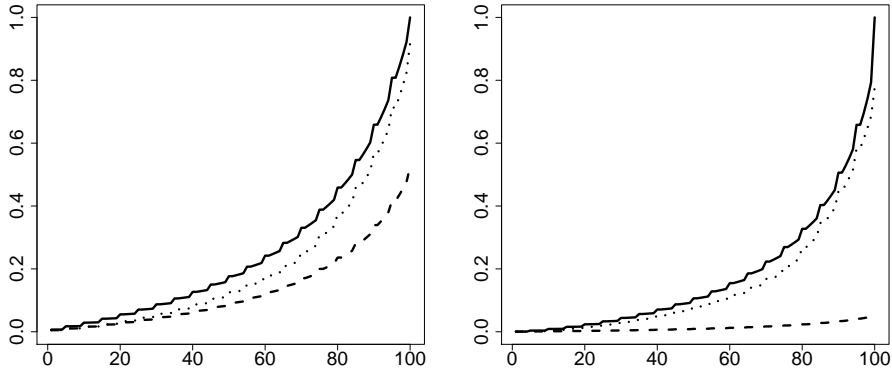

\begin{center}
\includegraphics[scale=0.3,angle=-90]{seuilsLRgamma02alpha05m100.ps}
\includegraphics[scale=0.3,angle=-90]{seuilsLRgamma02alpha005m100.ps}
\caption{Threshold  $t^Q_{\l}$ in \eqref{equ_TLR_improved} for model $\cP^I$ (solid line), threshold  $(t^Q)'_{\l}$ in \eqref{equ_TLR_Cher} for model $\cP^I$ (dotted line) and  threshold $t^{LR}_\l$ in  \eqref{equ_seuilRom} for model $\cP$ (dashed line) in function of $\l\in\{1,...,m\}$. $m=100$;  $\gamma=0.2$. Right: $\alpha=0.5$; left: $\alpha=0.05$.}
\label{fig_LR}
\end{center}
\end{figure}

\subsection{Step-down method}\label{sec:FDPfromkFWER}

The approach described in this section is an adaptation of the proof of Theorem~3.1 in \cite{LR2005} to our setting.
Let us consider a family $\{R_\l\}_{1\leq \l \leq m}$ and a class of distributions $\mathcal{P}'\subset \cP$ satisfying \eqref{ND:FDP}-\eqref{FWC}-\eqref{DA}. We aim at selecting $\l=\hat{\l}$ that provides $\forall  P \in\cP'$, $\FDP(R_{\hat{\l}},P)\leq \alpha$.

First note that, by definition of the FDP, we have for any $\l\in\{1,...,m\}$ such that $|R_\l|=\l$:
\begin{align}
\{\FDP(R_\l,P)>\gamma\} &= \{|\cH_0(P)\cap R_\l|> \gamma \l\} \nonumber\\
&=\{|\cH_0(P)\cap R_\l| \geq  \lfloor\gamma\l\rfloor +1\} \nonumber\\
&=\{ \l \in \mathcal{L}\},\label{rel-FDPfond}
\end{align}
where $\mathcal{L}=\{\l\in\{1,...,m\}\telque \l - |\cH_1(P)\cap R_\l|  \geq  \lfloor\gamma\l\rfloor +1 \}$ is a set which only depends on the set $\{|\cH_1(P)\cap R_{\l'} |, 1\leq \l' \leq m\}$.

Second, note that for any $\l\in\{1,...,m\}$ such that $|R_\l|  \geq \l$, 
\begin{align}\label{rel-FDPfond2}\{\l \in\mathcal{L}\}\subset \{|\cH_0(P)\cap R_\l| \geq  \lfloor\gamma\l\rfloor +1\}.\end{align}
Let us consider $\l^\star = \min\{ \mathcal{L}\}$ (with $\l^\star=m+1$ when $\mathcal{L}=\emptyset$).
From \eqref{rel-FDPfond} and \eqref{rel-FDPfond2},  taking  $\hat{\l}\in\{1,...,m\}$ such that $|R_{\hat{\l}}|=\hat{\l}$ and such that for any $\l\leq\hat{\l}$, $|R_\l|  \geq \l$, we obtain
\begin{align*}
\{\FDP(R_{\hat{\l}},P)>\gamma \}&\subset\{ \l^\star\leq \hat{\l}\}\\
&\subset \{|\cH_0(P)\cap R_{\l^\star}| \geq  \lfloor\gamma\l^\star\rfloor +1\}.
\end{align*}
Moreover, if $\l^\star\geq 2$, by definition of $\l^\star$, we have $\l^\star-1 \notin \mathcal{L}$. Hence, we obtain the following upper-bound for $|\cH_0(P)|$: 
$$|\cH_0(P)| = m-  |\cH_1(P)|\leq m-  |\cH_1(P)\cap R_{\l^\star-1}| \leq m-\l^{\star}+ \lfloor\gamma(\l^\star-1)\rfloor+1.$$
Since the above bound is also true when $\l^\star=1$, it holds for any possible value of $\l^\star$.

Finally noting that $\l^\star$ only depends on the variable set  $\{|\cH_1(P)\cap R_{\l'} |, 1\leq \l' \leq m\}$ and using \eqref{FWC}-\eqref{DA}, we have proved that for any $\l\in\{1,...,m\}$,
\begin{align*}
\P(\FDP(R_{\hat{\l}},P)>\gamma \cond\l^\star=\l)&\leq \P(|\cH_0(P)\cap R_{\l}| \geq  \lfloor\gamma\l\rfloor +1\cond\l^\star=\l)\\
&=\P(|\cH_0(P)\cap R_{\l}| \geq  \lfloor\gamma\l\rfloor +1)\\&\leq \alpha.
\end{align*}
Also, the probability $\P(\FDP(R_{\hat{\l}},P)>\gamma \cond\l^\star=m+1)$ is zero, because it is smaller than $\P(\hat{\l}\in\mathcal{L} \cond\l^\star=m+1)$. This leads to the following result.

\begin{theorem}\label{FDPfromkFWER}
Assume that there exists a family $\{R_\l\}_{1\leq \l \leq m}$ of multiple testing procedures   and a class of distributions $\mathcal{P}'\subset \cP$ satisfying the conditions \eqref{ND:FDP}-\eqref{FWC}-\eqref{DA} defined in Section~\ref{sec:volindex}. Consider the procedure $R_{\hat{\l}}$ where
\begin{align}\label{eq-step-down}
\hat{\l}=\max\big\{ \l\in\{0,...,m\}\telque \forall \l' \in\{0,...,\l\}, \: |R_{\l'}|\geq \l'\big\},
\end{align}
(with the convention $R_0=\emptyset$).
Then $R_{\hat{\l}}$ controls the FDP in the following sense:
\begin{equation}\label{equ-FDPcontrol}
\forall P\in\cP', \:\:\P(\FDP(R_{\hat{\l}},P)>\gamma)\leq \alpha.
\end{equation}
\end{theorem}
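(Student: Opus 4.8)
The plan is to turn the informal displays preceding the statement into a rigorous argument; the only real subtlety is that $\hat\l$ is data-dependent, which I will handle by conditioning on a carefully chosen random index whose value is determined by the \emph{correct}-rejection counts alone, so that \eqref{DA} can be invoked.

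First I would record that $|R_{\hat\l}| = \hat\l$ exactly: by \eqref{ND:FDP} the map $\l\mapsto|R_\l|$ is non-decreasing, so if $\hat\l<m$ then $|R_{\hat\l+1}|\leq\hat\l$ forces $|R_{\hat\l}|\leq\hat\l$, while the definition of $\hat\l$ gives $|R_{\hat\l}|\geq\hat\l$ (the cases $\hat\l\in\{0,m\}$ are immediate, using $R_0=\emptyset$). Hence on the event $\{\FDP(R_{\hat\l},P)>\gamma\}$ one has $\hat\l\geq1$ and $|\cH_0(P)\cap R_{\hat\l}| = \hat\l-|\cH_1(P)\cap R_{\hat\l}| \geq \lfloor\gamma\hat\l\rfloor+1$, i.e. $\hat\l\in\mathcal{L}$, where I set $\mathcal{L}=\{\l\in\{1,\dots,m\}\telque\l-|\cH_1(P)\cap R_\l|\geq\lfloor\gamma\l\rfloor+1\}$ and $\l^\star=\min\mathcal{L}$ (with $\l^\star=m+1$ if $\mathcal{L}=\emptyset$). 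The key feature is that $\mathcal{L}$, and therefore $\l^\star$, depends only on the family $\{|\cH_1(P)\cap R_{\l'}|\}_{1\leq\l'\leq m}$.

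Next I would establish the inclusion $\{\FDP(R_{\hat\l},P)>\gamma\}\subset\{\l^\star\leq\hat\l\}\subset\{|\cH_0(P)\cap R_{\l^\star}|\geq\lfloor\gamma\l^\star\rfloor+1\}$: on $\{\l^\star\leq\hat\l\}$ one has $1\leq\l^\star\leq\hat\l\leq m$, so $|R_{\l^\star}|\geq\l^\star$ by the definition of $\hat\l$, and then $|\cH_0(P)\cap R_{\l^\star}|=|R_{\l^\star}|-|\cH_1(P)\cap R_{\l^\star}|\geq\l^\star-|\cH_1(P)\cap R_{\l^\star}|\geq\lfloor\gamma\l^\star\rfloor+1$ since $\l^\star\in\mathcal{L}$. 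I would also note that, for $\l^\star\geq2$, minimality gives $\l^\star-1\notin\mathcal{L}$, which rearranges to $|\cH_1(P)\cap R_{\l^\star-1}|\geq\l^\star-1-\lfloor\gamma(\l^\star-1)\rfloor$ and hence to the bound $|\cH_0(P)|\leq m-\l^\star+\lfloor\gamma(\l^\star-1)\rfloor+1$; the same bound is trivial when $\l^\star=1$.

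Finally I would condition on the value of $\l^\star$. For each fixed $\l\in\{1,\dots,m\}$ with $\P(\l^\star=\l)>0$, the deterministic quantity $|\cH_0(P)|$ satisfies $|\cH_0(P)|\leq m-\l+\lfloor\gamma(\l-1)\rfloor+1$, so $R_\l$ is eligible for \eqref{FWC}; and since $\{\l^\star=\l\}$ is measurable with respect to $\{|\cH_1(P)\cap R_{\l'}|\}_{\l'}$, assumption \eqref{DA} lets me remove the conditioning, giving
\begin{align*}
\P\big(\FDP(R_{\hat\l},P)>\gamma \,\big|\, \l^\star=\l\big)
&\leq \P\big(|\cH_0(P)\cap R_\l|\geq\lfloor\gamma\l\rfloor+1 \,\big|\, \l^\star=\l\big)\\
&= \P\big(|\cH_0(P)\cap R_\l|\geq\lfloor\gamma\l\rfloor+1\big)\leq\alpha.
\end{align*}
On $\{\l^\star=m+1\}$ one has $\mathcal{L}=\emptyset$, so $\hat\l\notin\mathcal{L}$ and the conditional probability vanishes. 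Summing over the values of $\l^\star$ by the law of total probability yields \eqref{equ-FDPcontrol}. The step demanding the most care is precisely this conditioning: one must exploit that $|\cH_0(P)|$ is non-random (so the hypothesis of \eqref{FWC} holds outright whenever $\{\l^\star=\l\}$ is non-null) and that $\l^\star$ is a function of the correct-rejection counts only, which is exactly what makes \eqref{DA} applicable.
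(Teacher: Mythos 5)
Your proof is correct and follows essentially the same route as the paper's: the same set $\mathcal{L}$ and random index $\l^\star=\min\mathcal{L}$ depending only on the correct-rejection counts, the same inclusion $\{\FDP(R_{\hat\l},P)>\gamma\}\subset\{|\cH_0(P)\cap R_{\l^\star}|\geq\lfloor\gamma\l^\star\rfloor+1\}$, the same bound on $|\cH_0(P)|$ from $\l^\star-1\notin\mathcal{L}$, and the same conditioning on $\l^\star$ to invoke \eqref{FWC} and \eqref{DA}. The only difference is presentational: you verify $|R_{\hat\l}|=\hat\l$ up front and spell out the eligibility and measurability points slightly more explicitly than the paper does.
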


The algorithm performed to find \eqref{eq-step-down} is a step-down algorithm; it starts from small rejection numbers and stops the first time that $|R_{\l}|$ is below $\l$. Note that the maximum in \eqref{eq-step-down} is well defined because $\l=0$ satisfies $|R_{\l}|\geq \l$. Furthermore, using \eqref{ND:FDP}, relation \eqref{eq-step-down} implies  $\hat{\l}\leq |R_{\hat{\l}}| \leq |R_{\hat{\l}+1}| < \hat{\l}+1$, so that $|R_{\hat{\l}}|=|R_{\hat{\l}+1}|= \hat{\l}$ holds. As a consequence, the procedure of Theorem~\ref{FDPfromkFWER} can be equivalently defined by $R_{\tilde{\l}}$ where 
\begin{align}\label{eq-step-down2}
\tilde{\l}=\min\{ \l\in\{1,...,m+1\}\telque |R_\l | \leq  \l- 1\},
\end{align}
with the convention $R_{m+1}=R_m$ (so that the minimum in \eqref{eq-step-down2} is well defined).

\subsection{Theorem~3.1 of \cite{LR2005} and the quantile-binomial procedure as corollaries}\label{sec:FDPcor1}

Going back to the specific setting \eqref{DA2}  described in Section~\ref{sec:volindex}, we may derive from Theorem~\ref{FDPfromkFWER} the following corollary.

\begin{corollary}\label{cor-LR2005}
Let us consider the deterministic threshold family $(t^{LR}_\l)_{1\leq \l\leq m}$ defined by \eqref{equ_seuilRom} and consider
\begin{align}
\hat{\l}
&= \max\big\{ \l\in\{0,...,m\}\telque \forall \l' \in\{0,...,\l\}, \:  p_{(\l')}\leq t^{LR}_{\l'} \big\},
\label{equ-step-down}
\end{align}
where $0=p_{(0)}\leq p_{(1)}\leq ...\leq p_{(m)}$ denote the ordered $p$-values and by convention $t^{LR}_0=0$.
Then the procedure $R_{\hat{\l}}=\{i \in\{1,...,m\}\telque p_i\leq t^{LR}_{\hat{\l}}\}$ satisfies the FDP control \eqref{equ-FDPcontrol} for the subset $\mathcal{P}'$ of distributions  $P\in\cP$ such that 
the family $(p_i(X))_{i\in\cH_0(P)}$ is independent of the family $(p_i(X))_{i\in\cH_1(P)}$.
\end{corollary}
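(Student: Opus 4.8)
The statement is a direct specialization of Theorem~\ref{FDPfromkFWER}, so the plan is to apply that theorem to the deterministic thresholding family $R_\l=\{i\in\{1,\dots,m\}\telque p_i\le t^{LR}_\l\}$ (with the convention $R_0=\emptyset$), and then rewrite the abstract step-down index of the theorem in terms of the ordered $p$-values. Concretely, I would (i) check the three hypotheses \eqref{ND:FDP}, \eqref{FWC}, \eqref{DA} for this family and the class $\cP'$ of the statement, (ii) invoke Theorem~\ref{FDPfromkFWER}, and (iii) translate $\hat{\l}$ via the switching relation.

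\textbf{Checking \eqref{ND:FDP} and \eqref{FWC}.} For \eqref{ND:FDP} it suffices, since the family is thresholding-based, to show $\l\mapsto t^{LR}_\l$ is non-decreasing. Passing from $\l$ to $\l+1$, the numerator $\alpha(\lfloor\gamma\l\rfloor+1)$ changes by $\alpha(\lfloor\gamma(\l+1)\rfloor-\lfloor\gamma\l\rfloor)\ge 0$, while the denominator $m-\l+\lfloor\gamma(\l-1)\rfloor+1$ changes by $-1+(\lfloor\gamma\l\rfloor-\lfloor\gamma(\l-1)\rfloor)\le 0$, because $\gamma<1$ forces each floor increment to lie in $\{0,1\}$; moreover the denominator stays $\ge m-\l+1\ge 1$ for $\l\le m$. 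Hence $t^{LR}_\l\le t^{LR}_{\l+1}$, so $R_\l\subset R_{\l+1}$. For \eqref{FWC}, fix $\l$ and $P\in\cP'$ with $|\cH_0(P)|\le m-\l+\lfloor\gamma(\l-1)\rfloor+1$. Markov's inequality together with the stochastic domination \eqref{equ_defpvalue} of the null $p$-values gives
\[
\P\big(|R_\l\cap\cH_0(P)|\ge \lfloor\gamma\l\rfloor+1\big)
\le \frac{\sum_{i\in\cH_0(P)}\P(p_i\le t^{LR}_\l)}{\lfloor\gamma\l\rfloor+1}
\le \frac{|\cH_0(P)|\,t^{LR}_\l}{\lfloor\gamma\l\rfloor+1}\le\alpha ,
\]
where the last inequality is exactly the definition \eqref{equ_seuilRom} of $t^{LR}_\l$ (this is where the $\lfloor\gamma(\l-1)\rfloor$ in the denominator of $t^{LR}_\l$ is used, matching the cardinality bound in \eqref{FWC}).

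\textbf{Checking \eqref{DA} and concluding.} Each $R_\l$ uses the deterministic threshold $t^{LR}_\l$, and the defining property of $\cP'$ in the statement is precisely that $(p_i(X))_{i\in\cH_0(P)}$ is independent of $(p_i(X))_{i\in\cH_1(P)}$; this is exactly condition \eqref{DA2}, which implies \eqref{DA}. Theorem~\ref{FDPfromkFWER} then yields the FDP control \eqref{equ-FDPcontrol} for $R_{\hat{\l}}$ with $\hat{\l}=\max\{\l\in\{0,\dots,m\}\telque \forall\l'\le\l,\ |R_{\l'}|\ge\l'\}$. Finally I would identify this $\hat{\l}$ with the one in \eqref{equ-step-down}: for a thresholding procedure the switching relation gives $|R_{\l'}|\ge\l'\Leftrightarrow p_{(\l')}\le t^{LR}_{\l'}$, and for $\l'=0$ both sides hold by the conventions $p_{(0)}=0=t^{LR}_0$; hence the two expressions for $\hat{\l}$ coincide and the procedure is $R_{\hat{\l}}=\{i\telque p_i\le t^{LR}_{\hat{\l}}\}$, as claimed. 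I expect no real obstacle here; the only point needing a little care is the monotonicity argument for \eqref{ND:FDP}, because the floor functions in numerator and denominator of $t^{LR}_\l$ are offset by one.
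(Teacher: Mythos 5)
Your proposal is correct and follows exactly the route the paper takes: it specializes Theorem~\ref{FDPfromkFWER} to the deterministic family $R_\l=\{i\telque p_i\le t^{LR}_\l\}$, with \eqref{FWC} verified by Markov's inequality as in Section~\ref{sec:volindex}, \eqref{DA} obtained through \eqref{DA2}, and the step-down index rewritten via the switching relation. The paper leaves the monotonicity of $\l\mapsto t^{LR}_\l$ and the switching step implicit; your explicit verification of these details is accurate.
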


By reproducing the end of the proof of Theorem~\ref{FDPfromkFWER}  in the particular setting of Corollary~\ref{cor-LR2005}, we may increase a bit the distribution set $\mathcal{P}'$ in Corollary~\ref{cor-LR2005} to the set of $P\in\cP$ such that 
for any $i\in\cH_0(P)$, $\forall u\in[0,1]$, $\P(p_i(X)\leq u\cond (p_i(X))_{i\in\cH_1(P)})\leq u$. This is the distributional setting of  Theorem~3.1 of \cite{LR2005}.
Hence, we are able to recover the latter result (with a slight improvement in the threshold family).

Furthermore, if we want to ensure the FDP control \eqref{equ-FDPcontrol} only for the smaller distribution set $\cP'=\cP^I$, we may consider the larger threshold family $(t^{Q}_\l)_{1\leq \l\leq m}$ defined by \eqref{equ_TLR_improved}. This gives rise to the 
step-down
 procedure 
\begin{equation}R^{Q}=\{i \in\{1,...,m\}\telque p_i\leq t^Q_{\hat{\l}}\},\label{binomproc}
\end{equation}
 where $\hat{\l}= \max\{ \l\in\{0,...,m\}\telque \forall \l' \in\{0,...,\l\}, \:  p_{(\l')}\leq t^{Q}_{\l'} \}$ (with $t^Q_0=0$). The latter is the procedure described in Algorithm~\ref{algo-quant-binom}, because $p_{(\l)}\leq t^Q_\l$ if and only if $q_{\l}(p_{(\l)})\leq \gamma \l$, with $q_{\l}(\cdot)$ defined by \eqref{equ-quantilebinom}.
As a consequence, Theorem~\ref{FDPfromkFWER} provides the result announced in Section~\ref{sec:present}.

\begin{corollary}\label{cor-LR2005-improved}
For any $\gamma,\alpha\in(0,1)$, the quantile-binomial procedure $R^{Q}$ described in Algorithm~\ref{algo-quant-binom}, or equivalently in \eqref{binomproc}, 
controls the FDP in the following way:
$$
\forall P\in\cP^I, \:\:\P(\FDP(R^{Q},P)>\gamma)\leq \alpha.
$$
In particular, the median-binomial procedure $R^{M}$ (using $\alpha=1/2$) provides that the median of the distribution of $\FDP(R^{M},P)$ is controlled at level $\gamma$ for any $P\in\cP^I$.
\end{corollary}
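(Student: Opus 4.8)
The plan is to obtain this as a direct specialization of Theorem~\ref{FDPfromkFWER}, applied to the deterministic thresholding family $\{R^{Q}_\l\}_{1\le \l\le m}$, $R^{Q}_\l=\{i:p_i\le t^Q_\l\}$, associated with the thresholds $t^Q_\l$ of \eqref{equ_TLR_improved}, together with the distribution class $\cP'=\cP^I$. So I would check that this family and $\cP^I$ satisfy the three structural conditions \eqref{ND:FDP}, \eqref{FWC} and \eqref{DA}, then read off the output of the step-down algorithm of Theorem~\ref{FDPfromkFWER}, and finally match it with Algorithm~\ref{algo-quant-binom}.

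First I would check that each $t^Q_\l$ is well defined and that $\l\mapsto t^Q_\l$ is nondecreasing, which gives \eqref{ND:FDP}. Write $N_\l=m-\l+\lfloor\gamma(\l-1)\rfloor+1$. The tail function $t\mapsto \P(\mathcal{B}(N_\l,t)>\gamma\l)$ is a polynomial in $t$, hence continuous, and vanishes at $t=0$, so the maximum defining $t^Q_\l$ is attained and $q_\l(t^Q_\l)\le\gamma\l$ with $q_\l(\cdot)$ as in \eqref{equ-quantilebinom}. For \eqref{ND:FDP}: since $\lfloor\gamma\l\rfloor-\lfloor\gamma(\l-1)\rfloor\in\{0,1\}$ for $\gamma\in(0,1)$, one has $N_{\l+1}\le N_\l$, hence $\mathcal{B}(N_{\l+1},t)$ is stochastically dominated by $\mathcal{B}(N_\l,t)$ for every $t$, so $q_{\l+1}(t)\le q_\l(t)$; combined with $\gamma\l\le\gamma(\l+1)$ this yields $\{t:q_\l(t)\le\gamma\l\}\subset\{t:q_{\l+1}(t)\le\gamma(\l+1)\}$, i.e. $t^Q_\l\le t^Q_{\l+1}$ and $R^Q_\l\subset R^Q_{\l+1}$.

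The core step is \eqref{FWC}. Fix $\l$ and $P\in\cP^I$ with $|\cH_0(P)|\le N_\l$. Under $\cP^I$ the variables $(p_i)_{i\in\cH_0(P)}$ are mutually independent and each satisfies $\P(p_i\le u)\le u$, so the false rejection count $|R^Q_\l\cap\cH_0(P)|=|\{i\in\cH_0(P):p_i\le t^Q_\l\}|$ is stochastically dominated by $\mathcal{B}(|\cH_0(P)|,t^Q_\l)$, which is in turn stochastically dominated by $\mathcal{B}(N_\l,t^Q_\l)$ because $|\cH_0(P)|\le N_\l$. Hence, with $Z\sim\mathcal{B}(N_\l,t^Q_\l)$, $\P(|R^Q_\l\cap\cH_0(P)|\ge\lfloor\gamma\l\rfloor+1)\le\P(Z\ge\lfloor\gamma\l\rfloor+1)=\P(Z>\gamma\l)\le\alpha$ by the very definition of $t^Q_\l$, which is exactly \eqref{FWC} with $\cP'=\cP^I$. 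Condition \eqref{DA} then comes for free through the special case \eqref{DA2}, since the thresholds $t^Q_\l$ are deterministic and, for $P\in\cP^I$, $(p_i)_{i\in\cH_0(P)}$ is independent of $(p_i)_{i\in\cH_1(P)}$.

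It then remains to apply Theorem~\ref{FDPfromkFWER}, which gives $\P(\FDP(R^Q_{\hat\l},P)>\gamma)\le\alpha$ for all $P\in\cP^I$ with $\hat\l=\max\{\l\in\{0,\dots,m\}:\forall\l'\le\l,\ |R^Q_{\l'}|\ge\l'\}$, and to identify $R^Q_{\hat\l}$ with the procedure of Algorithm~\ref{algo-quant-binom}: $|R^Q_{\l'}|\ge\l'$ holds iff the $\l'$-th order statistic satisfies $p_{(\l')}\le t^Q_{\l'}$, which yields the form \eqref{binomproc}, and by monotonicity of $q_\l(\cdot)$ together with $t^Q_\l=\max\{t:q_\l(t)\le\gamma\l\}$ the condition $p_{(\l)}\le t^Q_\l$ is equivalent to $q_\l(p_{(\l)})\le\gamma\l$, i.e. the stopping rule of Algorithm~\ref{algo-quant-binom}. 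The "in particular" statement is the case $\alpha=1/2$: then $\P(\FDP(R^M,P)>\gamma)\le 1/2$ means exactly that $\gamma$ bounds the median of $\FDP(R^M,P)$. I do not expect a genuine obstacle here; the only points needing a little care are the well-definedness and monotonicity of the quantile thresholds $t^Q_\l$ (continuity of the binomial tail in $t$, stochastic monotonicity of the binomial in its size parameter) and checking that the stochastic domination by the enlarged-size binomial $\mathcal{B}(N_\l,t^Q_\l)$ goes in the direction required by \eqref{FWC}.
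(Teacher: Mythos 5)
Your proposal is correct and follows essentially the same route as the paper: specializing Theorem~\ref{FDPfromkFWER} to the deterministic quantile-based thresholds $t^Q_\l$ with $\cP'=\cP^I$, verifying \eqref{FWC} via the binomial stochastic domination of the false rejection count and \eqref{DA} via \eqref{DA2}, and identifying the step-down output with Algorithm~\ref{algo-quant-binom}. Your explicit verification of \eqref{ND:FDP} (via $N_{\l+1}\leq N_\l$ and stochastic monotonicity of the binomial) is a detail the paper leaves to the reader, but it is the intended argument.
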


To our knowledge, the above result is a new finding. It establishes a FDP control  which is substantially more suitable to the case of independent $p$-values in comparison with the procedure of \cite{LR2005}. 
Further comments on this procedure can be found in Section~\ref{sec:binom-BH}.

\subsection{Theorem~4.1 (i) of \cite{RW2007} as a corollary}\label{sec:FDPcor2}

In Section~4 of \cite{RW2007}, a step-down procedure $S_{\hat{k}}$ is defined from a generic family $\{S_{k}\}_{1\leq k \leq m}$ of thresholding based procedures. 
The latter family is assumed to be such that each $S_k$ controls the $k$-FWER for $1\leq k \leq m$ and $S_{k}\subset S_{k+1}$ for $1\leq k \leq m-1$. The index $\hat{k}$ is obtained as follows:
\begin{equation}
\label{equ-RW2007}
\hat{k}= \min\{k\in \{1,...,m+1\} \telque \gamma |S_k| < k - \gamma\},
\end{equation}
where we use here  the convention $S_{m+1}=S_m$ (so that the above set always contains $k=m+1$).
Theorem~4.1 (i) of \cite{RW2007} states that $S_{\hat{k}}$ controls the FDP in the asymptotic sense, as the sample size available to perform each test tends to infinity. This can be seen as a (non-asymptotic) FDP control in a Dirac configuration where the $p$-values corresponding to false nulls are equal to zero.
Set under this form, Theorem~4.1 (i) of \cite{RW2007} can be derived from Theorem~\ref{FDPfromkFWER}. 

For this, let $R_\l=S_{\lfloor\gamma {\l}\rfloor +1 }$, for $\l\in\{1,...,m\}$, and note that the family $\{R_\l\}_{1\leq \l\leq m}$ satisfies \eqref{ND:FDP}-\eqref{FWC} and \eqref{DA3}, by taking the distribution set $\cP'$ corresponding to Dirac configurations for the $p$-values. Hence, Theorem~\ref{FDPfromkFWER} establishes the FDP control for the Dirac configurations of the procedure $R_{\tilde{\l}}$ where $\tilde{\l}$ is defined by \eqref{eq-step-down}, or equivalently by \eqref{eq-step-down2}. Thus, it only remains to show that the  step-down algorithms \eqref{equ-RW2007} and \eqref{eq-step-down2} lead to the same procedure, that is, 
\begin{equation}
\nonumber
R_{\tilde{\l}} = S_{\hat{k}}.
\end{equation}
To prove the latter, we establish $\hat{k}=\lfloor\gamma \tilde{\l}\rfloor +1$. First, using \eqref{eq-step-down2}, $\tilde{\l}$ satisfies $\gamma |S_{\lfloor\gamma \tilde{\l}\rfloor +1 }| \leq  \gamma \tilde{\l}- \gamma$. Since $\gamma \l<\lfloor\gamma {\l}\rfloor +1$, we deduce from the definition of $\hat{k}$ that $\lfloor\gamma \tilde{\l}\rfloor +1 \geq \hat{k}$. Conversely, by considering the unique integer $\l\in\{1,...,m\}$ satisfying $\hat{k}/\gamma-1\leq \l< \hat{k}/\gamma$ and thus also $\lfloor\gamma {\l}\rfloor +1= \hat{k}$, we have that for any integer $j$, $\gamma j<  \hat{k} \Rightarrow j\leq \l$. Applying the latter for $j=|S_{\hat{k}}|+1$, we obtain from $\gamma (|S_{\hat{k}}|+1) < \hat{k}$ that $|S_{\hat{k}}|\leq \l-1$ and thus $\l\geq \tilde{\l}$, by using the definition of $\tilde{\l}$. This in turn implies $\hat{k}\geq \lfloor\gamma {\tilde{\l}}\rfloor+1$.
We thus have proved the following result, which can be seen as Theorem~4.1 (i) of \cite{RW2007} in the Dirac setting.

\begin{corollary}\label{FDPfromkFWER-DU}
Assume that there exists a family $\{S_k\}_{1\leq k \leq m}$ of multiple testing procedures (with the convention $S_{m+1}=S_m$) satisfying 
\begin{itemize}
\item[-] for each $k\in\{1,...,m\}$, $S_k$ is of the  form $\{i\in\{1,...,m\}\telque p_i\leq t_k(\mbf{p})\}$ for a possibly data-dependent threshold $t_k(\cdot)\in[0,1]$;
\item[-] for each $k\in\{1,...,m-1\}$, $S_{k}\subset S_{k+1}$;
\item[-] for each $k\in\{1,...,m\}$, $\forall P\in\cP$, $\kFWER(S_k,P)\leq \alpha$.
\end{itemize}
Consider $\hat{k}$ defined in  \eqref{equ-RW2007}
and the subset $\cP'$ of distributions $P\in\cP$ corresponding to a Dirac configuration, i.e., such that $\forall P\in \cP'$, $\forall i\in\cH_1(P)$, $p_i(x)=0$ for $P$-almost every $x\in\cX$. Then we have $\forall P\in\cP', \:\:\P(\FDP(S_{\hat{k}},P)>\gamma)\leq \alpha.$
\end{corollary}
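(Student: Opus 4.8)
The plan is to deduce the statement from Theorem~\ref{FDPfromkFWER} by a reindexing trick: the Romano--Wolf family $\{S_k\}_k$ is indexed by the tolerated number $k$ of false rejections, whereas the hypotheses of Theorem~\ref{FDPfromkFWER} are phrased in terms of a family indexed by the rejection number $\l$. So the first task is to build a $\l$-indexed family out of $\{S_k\}_k$ and check that it satisfies \eqref{ND:FDP}-\eqref{FWC}-\eqref{DA}. Concretely I would set $R_\l=S_{\lfloor\gamma\l\rfloor+1}$ for $\l\in\{1,\dots,m\}$ (and $R_{m+1}=R_m$). Then \eqref{ND:FDP} is immediate because $\l\mapsto\lfloor\gamma\l\rfloor$ is nondecreasing and $S_k\subset S_{k+1}$; \eqref{FWC} follows from the assumed $k$-FWER control of $S_k$ taken at $k=\lfloor\gamma\l\rfloor+1$, the extra constraint on $|\cH_0(P)|$ in \eqref{FWC} only making that hypothesis easier to meet; and \eqref{DA} holds via its special case \eqref{DA3}, since in a Dirac configuration every $p$-value of $\cH_1(P)$ equals $0$, so $\cH_1(P)\subset R_{\l'}$ for all $\l'$ and $|R_{\l'}\cap\cH_1(P)|=|\cH_1(P)|$ is deterministic. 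Theorem~\ref{FDPfromkFWER} then gives $\P(\FDP(R_{\tilde\l},P)>\gamma)\leq\alpha$ for all $P\in\cP'$, with $\tilde\l$ as in \eqref{eq-step-down2}.

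It then remains to identify the two step-down procedures, i.e.\ to prove $R_{\tilde\l}=S_{\hat k}$ with $\hat k$ from \eqref{equ-RW2007}, and for this I would establish the numerical identity $\hat k=\lfloor\gamma\tilde\l\rfloor+1$, after which $R_{\tilde\l}=S_{\lfloor\gamma\tilde\l\rfloor+1}=S_{\hat k}$ and the FDP control transfers verbatim. One direction: rewriting the stopping rule \eqref{eq-step-down2} gives $\gamma|S_{\lfloor\gamma\tilde\l\rfloor+1}|\leq\gamma\tilde\l-\gamma<\lfloor\gamma\tilde\l\rfloor+1-\gamma$, which by the definition \eqref{equ-RW2007} of $\hat k$ forces $\hat k\leq\lfloor\gamma\tilde\l\rfloor+1$. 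The reverse direction: letting $\l$ be the unique integer with $\hat k/\gamma-1\leq\l<\hat k/\gamma$ (so that $\lfloor\gamma\l\rfloor+1=\hat k$), the elementary implication ``$\gamma j<\hat k\Rightarrow j\leq\l$'' applied to $j=|S_{\hat k}|+1$ yields, from $\gamma(|S_{\hat k}|+1)<\hat k$, that $|S_{\hat k}|\leq\l-1$, hence $\l\geq\tilde\l$ by definition of $\tilde\l$, hence $\hat k=\lfloor\gamma\l\rfloor+1\geq\lfloor\gamma\tilde\l\rfloor+1$.

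The verifications of \eqref{ND:FDP}-\eqref{FWC}-\eqref{DA} are painless; the step I expect to be the main obstacle is the bookkeeping in the second paragraph, namely matching the $k$-indexing of \eqref{equ-RW2007} with the $\l$-indexing of \eqref{eq-step-down2} while keeping the floor functions and the strict versus non-strict inequalities straight. In particular one must check that the two formulations agree also at the boundary cases --- the conventions $S_{m+1}=S_m$, $R_{m+1}=R_m$, the empty-$\mathcal{L}$ case $\l^\star=m+1$, and the ``reject nothing''/``reject everything'' extremes --- so that $\hat k=\lfloor\gamma\tilde\l\rfloor+1$, and therefore $R_{\tilde\l}=S_{\hat k}$, holds without exception.
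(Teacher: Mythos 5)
Your proposal is correct and follows essentially the same route as the paper: the reindexing $R_\l=S_{\lfloor\gamma\l\rfloor+1}$, the verification of \eqref{ND:FDP}--\eqref{FWC} together with \eqref{DA3} in the Dirac configuration, and the two-sided argument establishing $\hat k=\lfloor\gamma\tilde\l\rfloor+1$ (hence $R_{\tilde\l}=S_{\hat k}$) all match the paper's own derivation in Section~\ref{sec:FDPcor2}.
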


\section{Discussion}\label{sec:discuss}

\subsection{Complexity of the $\kFWER$ step-down approach}

One major limitation of the $\kFWER$ approach presented in Section~\ref{sec:kFWER} is that the computation of $\phi(\cdot)$ in  \eqref{equ-pfkFWER} can become cumbersome  when $k$ is large because we should consider all subsets $I$ of $\cC^c$ of cardinality $k-1$ (say that $|\cC^c|\geq k-1$). However, 
we may modify this algorithm by considering only the set $I$ equals to the $k-1$ indexes of $\cC^c$ corresponding to the $k-1$ largest $p$-values in $\{p_i,i\in\cC^c\}$. As noted in \cite{RW2007}, this ``streamlined" step-down procedure still controls the $\kFWER$ in the Dirac model where each false null has a $p$-value equals to zero. The latter is true because in this model, 
as soon as $|\cC^c\cap \cH_0(P)| \leq k-1$, we know that the set $\cC^c\cap \cH_0(P)$ is included in the set $I$ of indexes corresponding to the $k-1$ largest $p$-values in $\{p_i,i\in\cC^c\}$ (because the $p$-values of $\{p_i,i\in\cC^c\cap \cH_1(P)\}$ are zero). Nevertheless, no proof of this $\kFWER$ control stands without this Dirac assumption.

\subsection{FDR control is not FDP control}\label{sec:discussFDP}

Since the only interpretable variable is the FDP and not its expectation, controlling the FDR is
 meaningful only when the FDP concentrates well around the FDR. As the hypothesis number $m$ grows, Neuvial (2008) showed that the latter holds for step-up type procedures when a Donsker type theorem for the e.c.d.f. is valid, so for instance under independence or ``weak" dependence, see \cite{Neu2008}. However, under some unspecified dependencies, we do not know how the FDP concentrates. For instance, even under a very simple $\rho$-equi-correlated Gaussian model (corresponding to Example~\ref{sec:vectorpos}, where the non-diagonal entries of $\Sigma(P)$ are all equal to $\rho$), its was shown in \cite{DR2011} that the convergence rate of the FDP to the FDR can be arbitrarily slow when $\rho=\rho_m$ tends to zero as $m$ tends to infinity. Additionally, it was proved in \cite{FDR2007} that no concentration phenomenon occurs when $\rho$ is kept fixed with $m$.
Also, as shown in \cite{RV2010}, the ``sparsity" ($\pi_0(P)=\pi_{0,m}(P)$ tends to $1$ as $m$ tends to infinity) is one other feature that can slow down the FDP convergence. 
Therefore, 
in all these cases, the FDP convergence is slow and controlling the FDR does not lead to a clear interpretation for the underlying FDP. 
The latter drawback does not arise while controlling the FDP upper-tail distribution: 
for instance, the FDP control $\P(\FDP> 0.01)\leq 0.5$ ensures that, with a probability at least 0.5, the FDP is below $0.01$, and this interpretation holds whatever the FDP distribution is.
However, the FDR stays useful, because 
this is a simpler criterion for which the controlling methodology is (for now) much more developed in comparison with the FDP controlling methodology.

\subsection{Quantile-binomial procedure and relation to previous work}\label{sec:binom-BH}

Let us consider the quantile-binomial procedure defined in algorithm~\ref{algo-quant-binom} and the quantile function $q_{\l}(\cdot)$ defined by \eqref{equ-quantilebinom}. In the particular case where we take $\alpha=1/2$, the procedure is called the median-binomial procedure and Corollary~\ref{cor-LR2005-improved} shows that it controls the median of the FDP at level $\gamma$ under independence of the $p$-values. Interestingly, in the ``Gaussian regime" where the underlying binomial variable is close to a Gaussian variable (say, $\gamma$ not too small, many rejections), the median is close to the expectation and thus $q_\l(t)\simeq (m -\l+ \lfloor\gamma ({\l}-1)\rfloor + 1) t \simeq (m-(1-\gamma)\l+1) t$. Hence, in this case, the median-binomial procedure is close to the step-down procedure using the thresholding $t_\l=\gamma \l/(m-(1-\gamma)\l+1)$. As matter of fact, the latter procedure has been recently introduced by Gavrilov et al. (2009) and it has been proved to control the FDR under independence, see \cite{GBS2009}. Roughly speaking, the latter may be interpreted in our framework as a ``mean-binomial procedure". However, in the Poisson regime (say, $\gamma$ small, few rejections), the median-binomial procedure can be substantially different from the procedure of Gavrilov et al. (2009). Hence, we should keep in mind that the two procedures do not control the same error rate. 
These different remarks are illustrated in Figure~\ref{fig_comparthreshold}, where we have also reported the Benjamini-Hochberg threshold.

 \begin{figure}[h!]
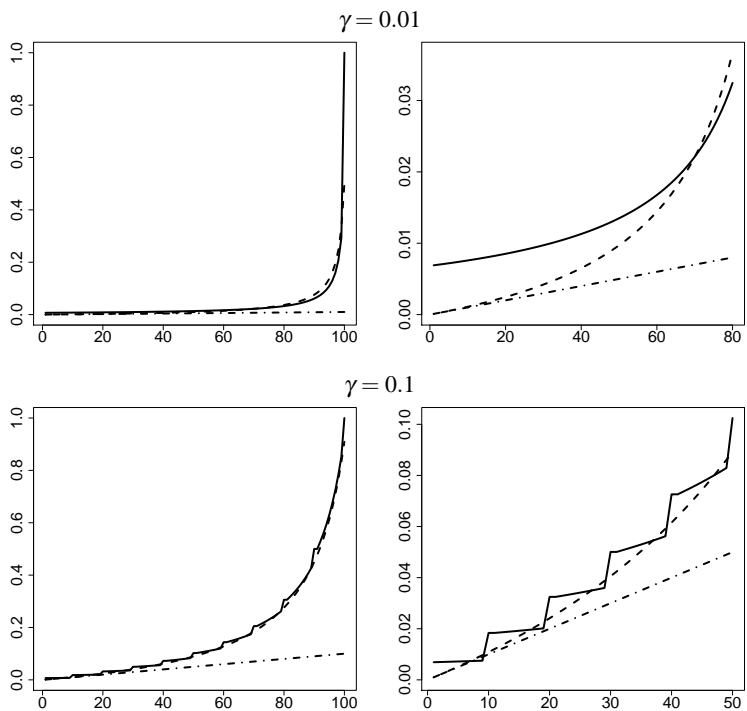

\begin{center}
$\gamma=0.01$\\\vspace{-0.7cm}
\includegraphics[scale=0.25,angle=-90]{BHvsBernalpha05m100gamma001.ps}
\includegraphics[scale=0.25,angle=-90]{BHvsBernalpha05m100gamma001-zoom.ps}\\
$\gamma=0.1$\\\vspace{-0.7cm}
\includegraphics[scale=0.25,angle=-90]{BHvsBernalpha05m100gamma01.ps}
\includegraphics[scale=0.25,angle=-90]{BHvsBernalpha05m100gamma01-zoom.ps}\\
\caption{Comparison between Benjamini-Hochberg thresholding $t_\l=\gamma \l/m$ (dashed-dotted), the Gavrilov et al. thresholding $t_\l=\gamma \l/(m-(1-\gamma)\l+1)$ (dashed) and the quantile thresholding  $t^Q_{\l}$ defined by \eqref{equ_TLR_improved} with $\alpha=0.5$ (solid) in function of $\l$. 
 $m=100$; Top: $\gamma=0.01$; Bottom: $\gamma=0.1$. Each right picture is a zoom of the left picture into the region $\l\in\{1,...,80\}$ (top) or $\l\in\{1,...,50\}$ (bottom).}
\label{fig_comparthreshold}
\end{center}
\end{figure}

\subsection{Conclusion}\label{sec:future}

In this paper, we have recovered some of the classical state-of-the-art multiple testing procedures for controlling the FDR, $\kFWER$ and the FDP. 
Additionally, some new contributions were also given for $\kFWER$ and FDP control, by extending and unifying some previous work of multiple testing literature and by finding a novel procedure, based on the quantiles of the binomial distribution, which controls the FDP under independence.

The type I error rate control research area still has many unsolved issues. Among the major  concerns, the FDP control in Section~\ref{sec:FDP} needs a very strong distributional assumption
on the test statistics, namely independence or ``Dirac" assumption. To our knowledge, no procedure adaptive to dependencies is proved to control the FDP without assuming such a strong requirement.
This is a room left for future developments, which would have a strong impact on high-dimensional data analysis. 

 \section*{Acknowledgements}

I am very grateful to Lucien Birg\'e and Gilles Blanchard for their helpful and particularly relevant comments. 
I also warmly thank Tabea Rebafka and Sylvain Arlot  for having carefully read a previous version of this manuscript. 
This work was supported by the French Agence Nationale de la Recherche (ANR grant references: ANR-09-JCJC-0027-01, ANR-PARCIMONIE, ANR-09-JCJC-0101-01) and the French ministry of foreign and european affairs (EGIDE - PROCOPE project number 21887 NJ).

\appendix

\section{Defining a $p$-value from a test statistic}\label{sec:statpvalue}

Let us consider the problem of testing a (single) hypothesis $H_0:$ ``$P \in \Theta_0$"  from a test statistic $S(X)$. Assume that $H_0$ should be rejected for ``large" values of $S(X)$. We let $T_P(s)=\P_{X\sim P}(S(X)\geq s)$, $F_P(s)=\P_{X\sim P}(S(X)\leq s)$ and  $F^{-1}_P(v)=\min\{s\in\mathbb{R}\cup \{-\infty\}\telque F_P(s)\geq v\}$. 
The following  result is elementary and can be considered as well known. It is strongly related to Theorem~10.12 in \cite{Was2004}, Lemma~3.3.1 in \cite{LR2005b} (see also Problem 3.23 therein) and Proposition~1.2 in \cite{DL2008}.

\begin{proposition}\label{prop:statpvalue}
The $p$-value $p(X)=\sup_{P\in \Theta_0} T_P(S(X))$ satisfies the following:
\begin{itemize}
\item[(i)] $p(X)$ is stochastically lower-bounded by a uniform variable under the null, that is, $$\forall P \in \Theta_0,\:\:\forall u\in[0,1], \: \:\:\:\P_{X\sim P}(p(X)\leq u)\leq u.$$
\item[(ii)] if for any $P\in\Theta_0$, $F_P$ is continuous, we have for any realization $x$ of $X$,
\begin{align*}
p(x)&=\min\{\alpha\in[0,1]\telque S(x)\geq \sup_{P\in\Theta_0} F^{-1}_P(1-\alpha)\}
.
\end{align*}
If additionally $\Theta_0$ is a singleton, $p(X)\sim U(0,1)$ whenever $P\in\Theta_0$. 
\item[(iii)] 
if for any $P\in\Theta_0$, the variable $S(X)$ takes its values in a discrete set with probability $1$, we have for any realization $x$ of $X$,
$$p(x)=\min\{\alpha\in[0,1]\telque S(x)> \sup_{P\in\Theta_0} F^{-1}_P(1-\alpha)\}.$$
In particular, if $S(X)$ is an integer random variable,  we have for any $x$ such that $S(x)\in\mathbb{N}$,
$$p(x)=\min\{\alpha\in[0,1]\telque S(x)\geq  \sup_{P\in\Theta_0} F^{-1}_P(1-\alpha)+1\}.$$
\end{itemize}
\end{proposition}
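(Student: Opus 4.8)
The plan is to treat the three parts in turn, all via the elementary relationship between $T_P$, $F_P$ and the generalized inverse $F_P^{-1}$.

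For (i), I would fix $P\in\Theta_0$ and $u\in[0,1]$. Since the supremum defining $p$ ranges over $\Theta_0\ni P$, we have $p(x)\ge T_P(S(x))$ pointwise, so $\{p(X)\le u\}\subset\{T_P(S(X))\le u\}$ and it suffices to show $\P_{X\sim P}(T_P(S(X))\le u)\le u$. This is the classical super-uniformity of the survival-function transform: set $s_0=\inf\{s\in\mathbb{R}: T_P(s)\le u\}$ and use that $s\mapsto T_P(s)=\P_{X\sim P}(S(X)\ge s)$ is non-increasing and left-continuous (because $\{S\ge s'\}\downarrow\{S\ge s\}$ as $s'\uparrow s$); one checks $T_P(s_0)\ge u$, and then either $T_P(s_0)=u$, in which case $\{T_P(S(X))\le u\}=\{S(X)\ge s_0\}$ has probability $u$, or $T_P(s_0)>u$, in which case $\{T_P(S(X))\le u\}=\{S(X)>s_0\}$ has probability $\lim_{s\downarrow s_0}T_P(s)\le u$. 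Either way the bound holds.

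For (ii), continuity of each $F_P$ gives $T_P(s)=1-\P_{X\sim P}(S(X)<s)=1-F_P(s)$. I would then invoke the elementary generalized-inverse identity $F_P(t)\ge v\Leftrightarrow t\ge F_P^{-1}(v)$, valid for any c.d.f. (the minimum in the definition of $F_P^{-1}$ being attained by right-continuity), and chain equivalences: $p(x)\le\alpha\Leftrightarrow \forall P\in\Theta_0,\ T_P(S(x))\le\alpha\Leftrightarrow \forall P,\ F_P(S(x))\ge 1-\alpha\Leftrightarrow \forall P,\ S(x)\ge F_P^{-1}(1-\alpha)\Leftrightarrow S(x)\ge\sup_{P\in\Theta_0}F_P^{-1}(1-\alpha)$, the last step because a single real number dominates a family iff it dominates the supremum. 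Hence $\{\alpha\in[0,1]: S(x)\ge\sup_P F_P^{-1}(1-\alpha)\}=[p(x),1]$, whose minimum is $p(x)$. For the singleton case $\Theta_0=\{P\}$ with $F_P$ continuous, $p(X)=1-F_P(S(X))$ and the probability integral transform gives $F_P(S(X))\sim U(0,1)$, hence $p(X)\sim U(0,1)$.

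For (iii), discreteness gives $T_P(s)=1-\P_{X\sim P}(S(X)<s)=1-F_P(s^-)$, and I would need the \emph{strict} analogue $F_P(s^-)\ge v\Leftrightarrow s>F_P^{-1}(v)$. The ``$\Rightarrow$'' direction is immediate from $F_P(s^-)\ge F_P(F_P^{-1}(v))\ge v$ whenever $F_P^{-1}(v)<s$; the ``$\Leftarrow$'' direction is where discreteness is \emph{essential}, since the jump points of $F_P$ being isolated forces $F_P(s^-)$ to be attained as $F_P(t)$ for some $t<s$, whence $F_P^{-1}(v)\le t<s$. The same chaining as in (ii) then yields $p(x)\le\alpha\Leftrightarrow \forall P,\ S(x)>F_P^{-1}(1-\alpha)\Leftrightarrow S(x)>\sup_P F_P^{-1}(1-\alpha)$, giving the stated formula; the integer sub-case follows because then each $F_P^{-1}(1-\alpha)\in\mathbb{Z}$, so ``$S(x)>F_P^{-1}(1-\alpha)$'' reads ``$S(x)\ge F_P^{-1}(1-\alpha)+1$''. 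I expect the \emph{main obstacle} to be precisely this strict correspondence in (iii): one must carefully check both that $F_P(s^-)$ is genuinely attained below $s$ (which fails in the continuous case, and is exactly what the discreteness hypothesis buys) and that passing from ``$S(x)>F_P^{-1}(1-\alpha)$ for every $P$'' to ``$S(x)>\sup_P F_P^{-1}(1-\alpha)$'' is legitimate, which needs all the values $F_P^{-1}(1-\alpha)$ together with $S(x)$ to lie in a common discrete set so the supremum cannot creep up to $S(x)$. Parts (i) and (ii) should be routine once the monotonicity/continuity facts and the generalized-inverse identity are recorded.
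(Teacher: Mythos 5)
Your proposal is correct and follows essentially the same route as the paper: both reduce each part to the per-distribution equivalence between $\{T_P(S(x))\leq\alpha\}$ and a comparison of $S(x)$ with $F_P^{-1}(1-\alpha)$ (non-strict in the continuous case, strict in the discrete case); the only organizational difference is that the paper packages this as a single dichotomy on whether $\mathring{F}_P(F_P^{-1}(1-\alpha))=1-\alpha$ and derives (i) from that lemma, whereas you prove (i) directly through the left-continuity of $T_P$ and its generalized inverse. One remark worth recording: the obstacle you flag at the end of (iii) is genuine and is passed over silently in the paper's own proof, which really establishes $p(x)=\min\{\alpha\in[0,1] : \forall P\in\Theta_0,\ S(x)>F_P^{-1}(1-\alpha)\}$ and then moves the universal quantifier inside as a supremum; for \emph{strict} inequalities this interchange can fail when the discrete supports vary with $P$ (take $\Theta_0=\{P_n\}_{n\geq 1}$ with $S$ uniform on $\{0,\,1-1/n\}$ under $P_n$ and $S(x)=1$: then $p(x)=0$ while the displayed minimum with the supremum inside equals $1/2$), so the version with $\sup_{P\in\Theta_0}F_P^{-1}(1-\alpha)$ does require the common-discrete-set situation you identify — in particular the singleton and integer-valued cases, where your argument is airtight.
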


A consequence is that the two classical definitions of a $p$-value are compatible in the following way.
\begin{corollary}
Assume that there exists $Q\in \Theta_0$ such that for any $P\in\Theta_0$, for all $s\in\R$, $F_P(s)\geq F_Q(s)$.
Let $p(X)= T_Q(S(X))$ and consider the families of tests $\{\phi_\alpha\}_{\alpha\in[0,1]}$ and $\{\phi'_\alpha\}_{\alpha\in[0,1]}$, where $\phi_\alpha(x)=\ind{S(x)\geq F^{-1}_Q(1-\alpha)}$ and $\phi'_\alpha(x)=\ind{S(x)> F^{-1}_Q(1-\alpha)}$. Then the following holds. \begin{itemize}
\item[(i)] 
if $F_Q$ is continuous, the tests $\phi_\alpha$ and $\phi'_\alpha$ are of level $\alpha$ for all $\alpha\in[0,1]$ and
we have for any realization $x$ of $X$,
\begin{align*}
[p(x), 1]&=\{\alpha\in[0,1]\telque \phi_\alpha(x)=1\}.
\end{align*}
and for $Q$-almost every $x$,
\begin{align*}
(p(x), 1]&=\{\alpha\in[0,1]\telque \phi'_\alpha(x)=1\}.
\end{align*}
\item[(ii)] 
if for $X\sim Q$ the variable $S(X)$ takes its values in a discrete set with probability $1$, the test $\phi'_\alpha$ is of level $\alpha$ while the test $\phi_\alpha$ is not of level $\alpha$, for all $\alpha\in[0,1]$, 
and we have for any realization $x$ of $X$,
\begin{align*}
[p(x), 1]&=\{\alpha\in[0,1]\telque \phi'_\alpha(x)=1\}.
\end{align*}
\end{itemize}
In particular,  we have both in the continuous and discrete case that for $Q$-almost every $x$,
\begin{align*}
p(x)&=\inf\{\alpha\in[0,1]\telque  \phi'_\alpha(x)=1\}.
\end{align*}
\end{corollary}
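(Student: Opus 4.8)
The plan is to reduce every assertion to a statement about the single, least‑favourable distribution $Q$, and then to read the $p$‑value descriptions off Proposition~\ref{prop:statpvalue}. First I would record the two elementary consequences of the hypothesis $F_P(s)\geq F_Q(s)$ for all $s\in\R$ and all $P\in\Theta_0$. Since $\{s\telque F_P(s)\geq v\}\supseteq\{s\telque F_Q(s)\geq v\}$, taking minima gives $F_P^{-1}(v)\leq F_Q^{-1}(v)$, hence $\sup_{P\in\Theta_0}F_P^{-1}(1-\alpha)=F_Q^{-1}(1-\alpha)$ (the supremum being attained at $Q\in\Theta_0$); this is why $\phi_\alpha(x)=\ind{S(x)\geq F_Q^{-1}(1-\alpha)}$ and $\phi'_\alpha(x)=\ind{S(x)>F_Q^{-1}(1-\alpha)}$ are exactly the threshold tests built from $\sup_{P\in\Theta_0}F_P^{-1}(1-\alpha)$. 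Passing to left limits in $F_P\geq F_Q$ gives $T_P(s)=1-F_P(s^-)\leq 1-F_Q(s^-)=T_Q(s)$, so that $\sup_{P\in\Theta_0}T_P(S(x))=T_Q(S(x))$; thus $p(X)=T_Q(S(X))$ is the $p$‑value of Proposition~\ref{prop:statpvalue}, and parts (ii)--(iii) of that proposition are available when applied to the singleton $\{Q\}$.

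\emph{Levels.} I would use the generalized‑inverse (``Galois'') relation $F_Q^{-1}(v)\leq s\Leftrightarrow v\leq F_Q(s)$. For $\phi'_\alpha$ and any $P\in\Theta_0$, $\P_{X\sim P}(\phi'_\alpha(X)=1)=\P_{X\sim P}(S(X)>F_Q^{-1}(1-\alpha))=1-F_P(F_Q^{-1}(1-\alpha))\leq 1-F_Q(F_Q^{-1}(1-\alpha))\leq 1-(1-\alpha)=\alpha$, so $\phi'_\alpha$ has level $\alpha$ in both cases. When $F_Q$ is continuous, $F_Q(F_Q^{-1}(1-\alpha))=F_Q((F_Q^{-1}(1-\alpha))^-)=1-\alpha$ for $\alpha\in(0,1)$, whence $\P_{X\sim P}(\phi_\alpha(X)=1)=\P_{X\sim P}(S(X)\geq F_Q^{-1}(1-\alpha))\leq T_Q(F_Q^{-1}(1-\alpha))=\alpha$ too (the endpoints $\alpha\in\{0,1\}$ being checked directly), so $\phi_\alpha$ has level $\alpha$ as well. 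When instead $S$ is discrete under $Q$, the point $c:=F_Q^{-1}(1-\alpha)$ is (generically, e.g.\ when $S$ is integer‑valued) an atom of $Q$ with $F_Q(c^-)<1-\alpha$ strictly, so $\P_{X\sim Q}(\phi_\alpha(X)=1)=1-F_Q(c^-)>\alpha$, i.e.\ $\phi_\alpha$ fails to be of level $\alpha$.

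\emph{The $p$‑value intervals.} Because $\alpha\mapsto F_Q^{-1}(1-\alpha)$ is nonincreasing, $\alpha\mapsto\phi_\alpha(x)$ and $\alpha\mapsto\phi'_\alpha(x)$ are nondecreasing, so $\{\alpha\in[0,1]\telque\phi_\alpha(x)=1\}$ and $\{\alpha\in[0,1]\telque\phi'_\alpha(x)=1\}$ are upper subsets of $[0,1]$. Proposition~\ref{prop:statpvalue}(ii) applied to $\{Q\}$ gives, in the continuous case, $p(x)=T_Q(S(x))=\min\{\alpha\telque S(x)\geq F_Q^{-1}(1-\alpha)\}=\min\{\alpha\telque\phi_\alpha(x)=1\}$; an upper subset of $[0,1]$ whose minimum is attained and equals $p(x)$ is $[p(x),1]$, giving the first identity of (i). Proposition~\ref{prop:statpvalue}(iii) applied to $\{Q\}$ gives likewise $p(x)=\min\{\alpha\telque S(x)>F_Q^{-1}(1-\alpha)\}=\min\{\alpha\telque\phi'_\alpha(x)=1\}$ in the discrete case, hence $\{\alpha\telque\phi'_\alpha(x)=1\}=[p(x),1]$, which is the identity in (ii).

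\emph{The half‑open identity for $\phi'$ in the continuous case, and main obstacle.} This is the delicate step. The maximal intervals on which the continuous function $F_Q$ is constant are at most countably many and each carries $Q$‑mass $0$, so for $Q$‑almost every $x$ the value $S(x)$ lies in none of them, i.e.\ $F_Q^{-1}(F_Q(S(x)))=S(x)$. On this full‑measure set, $\phi'_{p(x)}(x)=\ind{S(x)>F_Q^{-1}(1-p(x))}=\ind{S(x)>F_Q^{-1}(F_Q(S(x)))}=\ind{S(x)>S(x)}=0$, so $p(x)\notin\{\alpha\telque\phi'_\alpha(x)=1\}$; and for $\alpha>p(x)$ one has $F_Q(S(x))>1-\alpha$, which by continuity of $F_Q$ forces $S(x)>F_Q^{-1}(1-\alpha)$, so $\phi'_\alpha(x)=1$. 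Together with the inclusion $\{\alpha\telque\phi'_\alpha(x)=1\}\subseteq\{\alpha\telque\phi_\alpha(x)=1\}=[p(x),1]$ and the upper‑set property, this yields $\{\alpha\telque\phi'_\alpha(x)=1\}=(p(x),1]$ for $Q$‑a.e.\ $x$, and hence $p(x)=\inf\{\alpha\telque\phi'_\alpha(x)=1\}$ in all cases (a minimum in the discrete case, not attained in the continuous one). I expect the routine parts to be the monotonicity and the Galois identities; the real care goes into this last paragraph, namely isolating the $Q$‑null exceptional set on which $F_Q^{-1}\circ F_Q$ fails to be the identity, keeping straight which of $[p(x),1]$, $(p(x),1]$ occurs, and handling the values $F_Q^{-1}(0)=-\infty$ and $F_Q^{-1}(1)\in\R\cup\{+\infty\}$ at the endpoints $v\in\{0,1\}$.
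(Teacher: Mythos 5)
Your proof is correct and takes essentially the same route as the paper: the interval identities are read off Proposition~\ref{prop:statpvalue}~(ii)--(iii) (after the reduction $\sup_{P\in\Theta_0}T_P=T_Q$ and $\sup_{P\in\Theta_0}F_P^{-1}=F_Q^{-1}$, which the paper leaves implicit), and the delicate half-open identity $(p(x),1]$ is obtained from the fact that, $Q$-almost surely, $S(X)$ avoids the flat parts of $F_Q$ — your condition $F_Q^{-1}(F_Q(S(x)))=S(x)$ is exactly the paper's condition $S(x)\in\mathcal{Q}=\{F_Q^{-1}(1-\alpha):\alpha\in[0,1]\}$. You are in fact more complete than the paper's own proof, which omits the level statements entirely; your honest hedge on ``$\phi_\alpha$ is not of level $\alpha$'' in the discrete case reflects a genuine (minor) imprecision in the statement itself rather than a gap in your argument.
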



\begin{proof}
From Proposition~\ref{prop:statpvalue} (ii) and (iii), the only assertion to be proved is that for all $\alpha\in[0,1]$, for $Q$-almost every $x$,
$S(x)> F_Q^{-1}(1-\alpha) \Leftrightarrow p(x) <\alpha$. Let us denote $\mathcal{Q}=\{F_Q^{-1}(1-\alpha),\alpha\in[0,1]\}$. 
Since ${F}_Q$ is increasing on $\mathcal{Q}$,  the desired relation is provided for $S(x)\in\mathcal{Q}$. 
We can conclude because $\P_{X\sim Q}(S(X)\in\mathcal{Q})=1$. 
\end{proof}

\begin{example}
To illustrate (i) and (iii) of Proposition~\ref{prop:statpvalue},
let us consider the following simple discrete testing setting (coming from Example~3.3.2 in \cite{LR2005b}).
Let $H_0:$ ``$P=P_0$"  where $P_0$ is the uniform distribution on $\{1,...,10\}$ and consider the test statistic $S(X)=X$. We easily see that the $p$-value $T_{P_0}(X)$ is $p(X)=(11-X)/10$. It satisfies $\P(p(X)\leq u)\leq u$, with equality iff $u$ can be written under the form $i/10$ for some integer $i$, $1\leq i \leq 10$. 
Furthermore, rejecting $H_0$ for $p(X)\leq \alpha$ is equivalent to reject $H_0$ whenever $X\geq k(\alpha)$ where  $k(\alpha)$ is the unique integer satisfying $(11- k(\alpha))/10 \leq \alpha < (12- k(\alpha))/10$. We merely check that $k(\alpha)=F^{-1}_{P_0}(1-\alpha)+1$.
\end{example}

Finally, we provide a proof for Proposition~\ref{prop:statpvalue}.

\begin{proof}
Let $\mathring{F}_P(s)=\P_{X\sim P}(S(X)< s)$ and let us first state the following result: for any $P$, for any $\alpha\in[0,1]$,
\begin{align}
\{T_P(S(X))\leq \alpha\} = \left\{\begin{array}{ll}
\{S(X)\geq F_P^{-1}(1-\alpha)\} &\mbox{ if } \mathring{F}_P(F_P^{-1}(1-\alpha))=1-\alpha\\
\{S(X)>F_P^{-1}(1-\alpha)\} &\mbox{ otherwise } 
\end{array}\right..
\label{rel-fond-pvalue}
\end{align}
To establish \eqref{rel-fond-pvalue}, first note that $\{T_P(S(X))\leq \alpha\}=\{\mathring{F}_P(S(X))\geq 1-\alpha\}\subset \{S(X)\geq F_P^{-1}(1-\alpha)\}$, by definition of $F_P^{-1}(1-\alpha)$. On the one hand, if  $\mathring{F}_P(F_P^{-1}(1-\alpha))=1-\alpha$, we have $\{S(X)\geq F_P^{-1}(1-\alpha)\}\subset \{\mathring{F}_P(S(X))\geq \mathring{F}_P(F_P^{-1}(1-\alpha))\}=\{\mathring{F}_P(S(X))\geq 1-\alpha\}$. On the other hand, if $\mathring{F}_P(F_P^{-1}(1-\alpha))<1-\alpha$, we have $\{\mathring{F}_P(S(X))\geq 1-\alpha\}\subset \{S(X)> F_P^{-1}(1-\alpha)\}$ and $\{S(X)> F_P^{-1}(1-\alpha)\}\subset \{\mathring{F}_P(S(X))\geq  F_P(F_P^{-1}(1-\alpha))\}\subset \{\mathring{F}_P(S(X))\geq  1-\alpha\}$. This proves \eqref{rel-fond-pvalue}.

Let us now prove (i). We have for any $P\in \Theta_0$,
$
\P_{X\sim P}(p(X)\leq \alpha) \leq \P_{X\sim P}(T_P(S(X))\leq \alpha). 
$
Next, applying \eqref{rel-fond-pvalue}, we have if $\mathring{F}_P(F_P^{-1}(1-\alpha))=1-\alpha$, 
\begin{align*}
\P_{X\sim P}(p(X)\leq \alpha) \leq \P_{X\sim P}(S(X)\geq F_P^{-1}(1-\alpha)) = 1-  \mathring{F}_P(F_P^{-1}(1-\alpha)) = \alpha
\end{align*}
and if  $\mathring{F}_P(F_P^{-1}(1-\alpha))<1-\alpha$, 
\begin{align*}
\P_{X\sim P}(p(X)\leq \alpha) \leq \P_{X\sim P}(S(X)> F_P^{-1}(1-\alpha)) = 1-  {F}_P(F_P^{-1}(1-\alpha)) \leq \alpha.
\end{align*}

Assume now that for any $P\in\Theta_0$, $F_P$ is continuous, and prove (ii). In this case, $\mathring{F}_P(F_P^{-1}(1-\alpha))={F}_P(F_P^{-1}(1-\alpha))=1-\alpha$ for any $\alpha\in [0,1]$, so that \eqref{rel-fond-pvalue} provides that $\{T_P(S(X))\leq \alpha\} =\{S(X)\geq F_P^{-1}(1-\alpha)\}$. Hence, we obtain for any realization $x$ of $X$,
\begin{align*}
p(x)&=\min\{\alpha\in[0,1]\telque \forall P\in\Theta_0, T_P(S(x))\leq \alpha\}\\
&=\min\{\alpha\in[0,1]\telque \forall P\in\Theta_0, S(x)\geq F^{-1}_P(1-\alpha)\},
\end{align*}
which leads to the desired result.

For (iii), the proof is similar by noting that $\mathring{F}_P(F_P^{-1}(1-\alpha))<1-\alpha$ in the case where the distribution of $S(X)$ has a discrete support under the null.
\end{proof}
\bibliography{Roquain-jsfds-version2}

\end{document}